\def\qed{\hfill $\Box$\vspace{2ex}}
\newtheorem{theorem} {Theorem}[section]
\newtheorem{lemma}[theorem]{Lemma}
\newtheorem{observation}[theorem]{Observation}
\newtheorem{corollary}[theorem]{Corollary}
\newtheorem{problem}{Problem}[section]
\newenvironment{proof}{\noindent {\it Proof:~}}{\hfill $\Box$\smallskip\par}
\def\inst#1{$^{#1}$}
\begin{document}

\title{On the structure of (pan, even hole)-free graphs\thanks{Research
support by Natural Sciences and Engineering Research Council of
Canada.}}
\author{Kathie Cameron\inst{1}
 \and Steven Chaplick\inst{2}\thanks{Research partially supported by the European Science Foundation project EUROGIGA GraDR.}
  \and Ch\'inh T. Ho\`ang\inst{3}
}
\date{}
\maketitle
\begin{center}
{\footnotesize

\inst{1} Department of Mathematics, Wilfrid Laurier University,
Waterloo, Ontario,  Canada, N2L 3C5\\
\texttt{kcameron@wlu.ca}

\inst{2} Lehrstuhl f\"ur Informatik I
Universit\"at W\"urzburg, Am Hubland
D-97074 W\"urzburg, Germany\\
\texttt{steven.chaplick@uni-wuerzburg.de}

\inst{3} Department of Physics and Computer Science, Wilfrid
Laurier University,
Waterloo, Ontario,  Canada, N2L 3C5\\
\texttt{choang@wlu.ca} }

\end{center}

\begin{abstract}
A hole is a chordless  cycle with at least four vertices. A pan is
a graph which consists of a hole and a single vertex with
precisely one neighbor on the hole. An even hole is a hole with
an even number of vertices. We prove that a (pan, even hole)-free
graph can be decomposed by  clique cutsets into essentially unit
circular-arc graphs. This structure theorem is the basis of our
$O(nm)$-time certifying algorithm for recognizing (pan, even
hole)-free graphs and for our $O(n^{2.5}+nm)$-time algorithm to optimally
color them. Using this structure theorem, we show that the tree-width
of a (pan, even hole)-free graph is at most 1.5 times the clique
number minus 1, and thus the chromatic number is at most 1.5 times the
clique number.
\end{abstract}

\section{Introduction}\label{sec:introduction}
A {\em hole} is a chordless cycle with at least four vertices. A
graph is  {\em chordal} if it does not contain a hole as an
induced subgraph. Chordal graphs are well-studied and have a
number of interesting structural properties (see
\cite{BerChv1984,Gol1980}). For example, it is known
\cite{Dir1961} that every chordal graph contains a {\it
simplicial} vertex; i.e., a vertex whose neighborhood induces a
clique. Based on this, a largest clique, a minimum coloring, a
largest stable set, and a minimum partition into cliques of a
chordal graph can be found in polynomial time \cite{Gavril1972}.

An {\em even hole} is a hole with an even number of vertices. A
graph is {\em even-hole-free} if it does not contain an even hole
as an induced subgraph. Even-hole-free graphs generalize chordal
graphs and analogous properties have been found.
A largest clique of an even-hole-free graph can be found in
polynomial time \cite{actv, AddChu2008, daSV}. However, it is not known whether even-hole-free
graphs can be optimally colored in polynomial time.

The {\it claw} is the graph with vertices $a,b,c,d$ and edges
$ab,ac,ad$. As usual, $n$ (respectively, $m$) denotes the number of vertices
(respectively, edges) of the input graph $G$. We give an $O(n^{2.5}+nm)$-time
algorithm to color (claw, even hole)-free graphs,
providing a contrast to the well-known result \cite{Hol1981} that it is NP-hard
to optimally color claw-free graphs. Our techniques actually apply to a
larger class of graphs which we will now define. An {\it atom  }
is a connected graph without a clique cutset. A {\it pan} is a graph
which consists of a hole and a single vertex with precisely one
neighbor on the hole. Let ${\cal C}$ denote the class of graphs
$G$ such that each atom of $G$ is (pan, even hole)-free. In this
paper, we will give an $O(n^{2.5}+nm)$-time algorithm to color a graph in
${\cal C}$, and an $O(nm)$-time certifying algorithm for
recognition of graphs in ${\cal C}$ and recognition of (pan, even
hole)-free graphs. Pan-free graphs have been studied previously
regarding: establishing the perfectness of (pan, odd hole)-free graphs \cite{Ola1989},
and providing a polynomial-time algorithm to find a largest weight stable
set first on a subclass of pan-free graphs \cite{Des1993} and then the whole
class of pan-free graphs \cite{BraLoz2010}.
The latter two
papers use the term ``apple" for ``pan".

In Section~\ref{sec:definitions}, we will cover the relevant background
and state our main results. In
Section~\ref{sec:properties}, we will prove that a (pan, even
hole)-free graph can be decomposed by the well-studied clique
cutset decomposition into, essentially, ``unit circular-arc
graphs". This structural result is the foundation of our
polynomial-time algorithms. In Section \ref{sec:coloring}, we give
our $O(n^{2.5}+nm)$-time algorithm for coloring the graphs in ${\cal C}$.
In Section ~\ref{sec:recognition}, we discuss our $O(nm)$-time
algorithm to recognize if a graph is in ${\cal C}$. In
Section~\ref{sec:tw}, we show that the tree-width of a (pan, even
hole)-free graph is at most 1.5 times the clique number minus 1,
and thus the chromatic number is at most 1.5 time the
clique number. In Section~\ref{sec:conclusions}, we discuss open problems
related to our work.
\section{Background and results}\label{sec:definitions}
In this section, we discuss the relevant background and give
the definitions necessary to state our main results.  Let $G$ be a
graph. For a subset $S$ of the vertices of $G$, we use $G[S]$
to denote the subgraph of $G$ induced by $S$.  A {\em clique cutset} of $G$ is a set $S$ of vertices where
$G[S]$ is a clique whose
removal increases the number of components of $G$. The following theorem is well known.
\begin{theorem}\label{thm:dirac}{\rm \cite{Dir1961}}
Every chordal graph is either a clique or contains a clique
cutset.
\end{theorem}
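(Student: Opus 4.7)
The plan is to prove Dirac's classical statement by the standard minimal separator argument. Assume $G$ is chordal but not a clique, so there exist two non-adjacent vertices $a, b \in V(G)$. Choose a minimal $\{a,b\}$-separator $S$, i.e., a set whose removal places $a$ and $b$ in different components of $G - S$, with no proper subset having this property. Let $A$ be the component of $G - S$ containing $a$ and $B$ the component containing $b$. The goal is to show that $S$ is a clique; since $S$ separates $a$ from $b$ in a connected graph, $S$ is then the desired clique cutset.

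A preliminary observation I would establish is that by the minimality of $S$, every vertex of $S$ has at least one neighbor in $A$ and at least one neighbor in $B$: otherwise removing such a vertex from $S$ would still yield an $\{a,b\}$-separator. This is the only place the minimality of $S$ is used.

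For the main step, I would suppose for contradiction that two vertices $u, v \in S$ are non-adjacent. Using the observation above, I can pick a shortest path $P_A$ from $u$ to $v$ whose internal vertices all lie in $A$, and similarly a shortest path $P_B$ through $B$. Each has length at least $2$, so the concatenation $C = P_A \cup P_B$ is a cycle on at least four vertices. I then need to check $C$ has no chord: chords between $A$-internal and $B$-internal vertices are forbidden because $S$ separates $A$ from $B$; chords among the internal vertices of $P_A$ (respectively $P_B$) contradict the shortest-path choice; and chords from $u$ or $v$ to an internal vertex of either path likewise contradict shortness. Thus $C$ is a hole in $G$, contradicting chordality.

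The main obstacle, in so far as there is one in this short classical argument, is the bookkeeping in the final chord analysis: one must list the possible chord types and dispatch each using either the separation property of $S$ or the shortest-path property, and it is here that the choice of \emph{shortest} (rather than arbitrary) paths $P_A, P_B$ is essential. Everything else is a direct consequence of minimality and connectivity.
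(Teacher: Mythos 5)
The paper offers no proof of this statement: it is quoted as a classical result of Dirac \cite{Dir1961} and used as a black box, so there is nothing internal to compare your argument against. Your proof is the standard minimal-separator argument and it is correct as written: minimality of $S$ gives every vertex of $S$ a neighbour in both $A$ and $B$; two non-adjacent $u,v\in S$ would then close a cycle of length at least four through $A$ and $B$, and your three-way chord analysis (separation kills $A$--$B$ chords, shortestness kills the rest) correctly shows the cycle is chordless. One small point to tidy: your last step invokes connectivity of $G$, which is nowhere assumed in the statement. If $G$ is disconnected and some component $H$ is not a clique, your argument applied to $H$ yields a clique cutset of $H$, which is also a clique cutset of $G$; if instead every component is a clique (e.g.\ $G$ the disjoint union of two vertices), the statement fails under the paper's literal definition of clique cutset unless one adopts the usual convention that the empty set counts. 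This is a defect of the statement rather than of your proof, and is harmless here since the paper only ever applies the theorem to connected graphs, but a complete write-up should dispose of the disconnected case in a sentence.
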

Recall that a vertex is \emph{simplicial} if its neighborhood
induces a clique,  and a vertex is \emph{bi-simplicial} if its
neighborhood can be partitioned into two cliques (i.e., its
neighborhood induces the complement of a bipartite graph). It
follows from Theorem~\ref{thm:dirac} that every chordal graph
contains a simplicial vertex.

Let $d_G(x)$ denote the degree of a vertex $x$ in a graph $G$.
When the context is clear, we will write $d(x)$ to mean $d_G(x)$.
Let $\chi(G)$ (respectively, $\omega(G)$) denote the chromatic
number (respectively, the clique number, i.e., the number of
vertices in a largest clique) of $G$. If $v$ is a simplicial
vertex of $G$, then $\chi(G) = \max (\chi(G-v), d(v) +1)$ and
$\omega(G) = \max (\omega(G-v), d(v)+1)$. An analogous property
was established for even-hole-free graphs. (Recall a hole is {\it
even} ({\it odd}) if it has an even (odd) number of vertices.)
\begin{theorem}\label{thm:A}{\rm \cite{AddChu2008}} Every
even-hole-free graph contains a bi-simplicial vertex.
\end{theorem}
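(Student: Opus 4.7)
I would argue by induction on $|V(G)|$, treating the chordal and non-chordal cases separately. If $G$ is chordal, Theorem~\ref{thm:dirac} produces a simplicial vertex $v$, and such a vertex is automatically bi-simplicial (partition $N(v)$ as $N(v) \cup \emptyset$ into two cliques), so we may assume $G$ contains a hole, necessarily an odd one.

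For the non-chordal case I would invoke the decomposition theorem of Conforti, Cornu\'ejols, Kapoor and Vu\v{s}kovi\'c for even-hole-free graphs, which asserts that every connected even-hole-free graph is either \emph{basic} or admits a $2$-join or a star cutset. In a minimum counterexample $G$ one chooses an \emph{extremal} piece $B$ of the decomposition tree — an end-block attached to the rest of $G$ through a single cutset or join-attachment $C$ — and applies the induction hypothesis to $B$ (or to a small even-hole-free extension of $B$ that simulates the attachment). The crucial lifting step is to argue that the bi-simplicial vertex $v$ produced inductively in $B$ can be chosen to lie in the interior of $B$, so that $N_G(v) = N_B(v)$ and $v$ remains bi-simplicial in $G$.

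The main obstacle is twofold. First, one must verify directly that each basic class in the CCKV characterisation (graphs built on a single hole, certain bipartite-like graphs, and small exceptional graphs) contains a bi-simplicial vertex; this is a finite but delicate case analysis carried out on the specific structure of each basic class. Second, and sharper, is the lifting in the $2$-join case: a vertex in the attachment set receives extra neighbours across the join, and one must rule out that these extra neighbours destroy the two-clique partition. The even-hole-freeness is exploited precisely here — one shows that otherwise certain paths across the join would close with paths inside $B$ into an even hole. If the decomposition route proves too heavy, a more self-contained alternative (in the spirit of Addario-Berry, Chudnovsky, Havet, Reed and Seymour) is to bypass CCKV altogether by selecting an extremal pair of non-adjacent vertices $u,w$ with some minimised distance parameter, and running a path-exchange argument in which the extremality rules out the even holes that would obstruct bi-simpliciality of $u$ or $w$.
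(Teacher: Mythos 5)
There is no ``paper's own proof'' to compare against here: Theorem~\ref{thm:A} is quoted as a black-box result of Addario-Berry, Chudnovsky, Havet, Reed and Seymour \cite{AddChu2008}, and the present paper never proves it. So the real question is whether your sketch constitutes a proof, and it does not. The only step you actually carry out is the trivial one (a simplicial vertex of a chordal graph is bi-simplicial via Theorem~\ref{thm:dirac}). Everything substantive --- the verification that each basic class of the Conforti--Cornu\'ejols--Kapoor--Vu\v{s}kovi\'c decomposition contains a bi-simplicial vertex, and the lifting of a bi-simplicial vertex from a block back to $G$ --- is named as an ``obstacle'' and left undone. That is precisely where the entire difficulty of the theorem lives; a plan that defers it is not a proof.

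Moreover, the primary route you propose is known to be problematic, which is a concrete reason to doubt the plan would go through as described. Induction over a star-cutset decomposition does not behave like induction over a clique-cutset decomposition: the blocks of a star-cutset decomposition are not simply induced subgraphs whose ``interior'' vertices keep their neighbourhoods, the blocks need not be smaller in a way that makes the induction terminate, and a vertex that is bi-simplicial in a block can acquire neighbours in $G$ that destroy the two-clique partition with no obvious even hole to blame. This is exactly why the published proof does not run an induction over the CCKV decomposition at all; it is a long, direct extremal/path-exchange argument of the kind you mention only as a fallback in your last sentence (and that fallback is likewise only named, not executed). In short: the base case is fine, but the theorem remains entirely unproved by what you have written.
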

Theorem~\ref{thm:A} implies that for even-hole-free graphs $G$,
a largest clique can be found in polynomial time and that $\chi(G) \leq 2 \omega(G) - 1$. However,
it is not known if the coloring problem can be solved in polynomial time for even-hole-free
graphs.

The clique (respectively, chordless cycle, chordless path) on $t$
vertices is denoted by $K_t$ (respectively, $C_t$, $P_t$). Recall that the
{\it claw} is the graph with vertices $a,b,c,d$ and edges
$ab,ac,ad$;  vertex $a$ is the {\em center} of the claw.
Let  $F$ be a graph and let $\mathcal{F}$ be a family of graphs. We say that a graph
$G$ is {\em $F$-free} if $G$ does not contain
an induced subgraph isomorphic to $F$ and $G$ is {\em $\mathcal{F}$-free}
if $G$ does not contain an induced subgraph isomorphic to any graph in~$\mathcal{F}$.
In particular,  $G$
is (claw, even hole)-free if $G$ does not contain a claw or an
even hole as an induced subgraph.

Theorem~\ref{thm:A} implies that
an even-hole-free graph contains a vertex that is not the center
of a claw. This suggests that even-hole-free graphs such that no vertex
is the center of a claw (i.e., (claw, even hole)-free graphs)
might have interesting structure. Indeed, our results show that
(claw, even hole)-free graphs can be decomposed by the clique
cutset decomposition into (essentially) unit circular-arc graphs.
Our results actually apply to a larger class of graphs that we
will define later in this section (see Theorem \ref{thm:structure}).
\begin{figure}[h]
\centering
\includegraphics[scale=1.5]{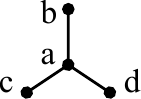}
\caption{The \emph{claw} with center $a$.}
\label{fig:small}
\end{figure}

Consider  the following procedure to decompose a graph  $G$. If
$G$ has a clique cutset $C$, then $G$ can be decomposed into
subgraphs $G_1 = G[V_1]$ and $G_2 = G[V_2]$ where $V = V_1 \cup
V_2$ and $C = V_1 \cap V_2$. Note: there is no edge between $(G_1
- C$) and $(G_2 - C)$.  Given minimum colorings of $G_1$ and $G_2$, we
can obtain a minimum coloring of $G$ by identifying the coloring
of $C$ in $G_1$ with that of $C$ in $G_2$. In particular, we have
$\chi(G)= \max(\chi(G_1),\chi(G_2))$. If $G_i$ ($i \in \{1,2\}$)
has a clique cutset, then we can recursively decompose $G_i$ in
the same way.  This decomposition can be represented by a binary
tree $T(G)$ whose root is $G$ and where the two children of $G$ are
$G_1$ and $G_2$, which are in turn the roots of subtrees
representing the decompositions of $G_1$ and $G_2$. Each leaf of
$T(G)$ corresponds to an induced subgraph of G that contains no
clique cutset; such an induced graph is called an {\em atom} of
$G$. Algorithmic aspects of the clique cutset decomposition are
studied in \cite{Tar1985} and \cite{Whi1984}. In particular, the
decomposition tree $T(G)$ can be constructed in $O(nm)$ time such
that the total number atoms is at most $n-1$ \cite{Tar1985}. We
have seen in the discussion above that the clique cutset decomposition
can be used to color a graph. With $G$, $G_1$, and $G_2$ defined as
above, $G$ contains an even hole (or odd hole) if and only if
$G_1$ or $G_2$ does. Thus, the clique cutset decomposition can
also be used to find an even hole (or odd hole), if one exists.

Recall that a {\it pan} is the graph obtained from taking a $C_k$ with
$k \geq 4$, adding another vertex $x$, and joining $x$ to a vertex
$y$ of the $C_k$ by an edge. The edge $xy$ is called the {\em
handle} of the pan.  We will show that (pan, even hole)-free atoms
have very special structure. This structure allows us to solve the
recognition and coloring problems. To describe this structure, we will need to
introduce more definitions.

A graph $G$ is a {\em circular-arc} graph if there is a bijection
between its vertices and a set $A$ of arcs on a circle such that
two vertices of $G$ are adjacent if and only if the two
corresponding arcs of $A$ intersect. A circular-arc graph is {\it
proper}  if no arc contains another. Additionally, $G$ is a {\em
unit} circular-arc graph if every arc of $A$ has the same length.
It is easy to see that unit circular-arc graphs are proper and
that proper circular-arc graphs are claw-free and hence pan-free.

Let $A$ and $B$ be two disjoint sets of vertices. We say $A$ is {\em
$B$-null} if there is no edge between $A$ and $B$, and $A$ is {\em
$B$-complete} if every possible edge between $A$ and $B$ is
present. For a vertex $x$, $N_G(x)$ denotes the set of neighbors
of $x$ in $G$. When the context is obvious, we use $N(x)$ for $
N_G(x)$.  For a set $X$ of vertices, $N(X)$ denotes the set of
vertices outside $X$ that have neighbors in $X$. A vertex $a$ {\em
dominates} a vertex $b$ if $(N(b) - \{a\}) \subseteq  N(a)$.
Vertex $a$ {\em strictly dominates} vertex $b$ if $(N(b) - \{a\})
\subsetneq N(a)$. Two vertices are {\em comparable} if one dominates
the other. The domination relation is transitive, that is, if $a$
dominates $b$ and $b$ dominates $c$, then $a$ dominates $c$. Thus,
given a set $X$ of vertices such that any two vertices in $X$ are
comparable, there is a total order $\prec$ on $X$ such that $a
\prec b$ whenever $a$ dominates $b$. We call such order a {\it
domination order}.  Two vertices $a$ and $b$ are comparable in $X$ if
they are comparable in the subgraph induced by $X \cup \{a,b\}$.

Let $A$ and $B$ be two vertex-disjoint graphs. The {\em join} of $A$ and
$B$ is the graph $C$ obtained from $A$ and $B$ by adding every
edge between the vertices of $A$ and those of $B$; thus, in $C$
the vertices of $A$ are $B$-complete and vice versa.

We now state our decomposition theorem for the class ${\cal C}$.

\begin{theorem}\label{thm:structure}
If $G$ is a connected graph in ${\cal C}$ (i.e. every atom of $G$
is (pan, even hole)-free), then
\begin{enumerate}[(i)]
  \item $G$ is a clique, or
  \item $G$ contains a clique cutset, or
  \item $G$ is  a unit circular-arc graph, or
  \item $G$ is the join of a unit circular-arc graph and a
clique.
\end{enumerate}
\end{theorem}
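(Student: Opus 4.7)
The plan is as follows. If $G$ is a clique or has a clique cutset, conclusions (i) and (ii) hold, so assume $G$ is a connected graph in $\mathcal{C}$ with no clique cutset that is not a clique. Then $G$ is itself an atom and hence (pan, even hole)-free. By Theorem~\ref{thm:dirac} applied contrapositively, $G$ is not chordal (otherwise it would be a clique), so $G$ contains a hole; being even hole-free, every such hole is odd. I will take a longest hole $C = v_0 v_1 \cdots v_{2k} v_0$ of $G$ and use it as a cyclic skeleton for a unit circular-arc representation of $G$, possibly after peeling off a clique of universal vertices.

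The first technical step is to classify the attachments of a vertex $x \notin V(C)$ to $C$. Pan-freeness immediately rules out $|N(x) \cap V(C)| = 1$, since that would give $C$ together with $x$ as a pan. Even hole-freeness rules out attachments with gaps of the wrong parity: $x$ adjacent to $v_i$ and $v_{i+2}$ but not $v_{i+1}$ yields an induced $C_4$, and more generally, two disjoint sub-arcs of neighbors on $C$ combined with the detour through $x$ produce an induced even hole. Maximality of $|C|$ precludes attachments that would extend $C$ to a longer hole through $x$. Together these constraints should force $N(x) \cap V(C)$ to be a consecutive set of vertices of $C$ (an \emph{arc}), whose length is bounded in terms of $k$.

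Let $K$ be the set of vertices of $G$ that are adjacent to every other vertex. Any two universal vertices are adjacent, so $K$ is a clique and $G$ is the join of $K$ and $G - K$. If $G - K$ is empty or a clique then $G$ itself is a clique, a contradiction, so $G - K$ is a nontrivial non-clique, and it remains to show $G - K$ is a unit circular-arc graph. I would extend the cyclic order of $C$ to a cyclic order $\prec$ on $V(G - K)$, placing each $x \notin V(C)$ at the ``midpoint'' of its arc on $C$, and then check that every neighborhood in $G - K$ is a consecutive arc with respect to $\prec$. The consistency check reduces to analyzing how two vertices $x, y \notin V(C)$ can interact: pan- and even hole-freeness should force $xy \in E(G)$ exactly when the arcs of $x$ and $y$ overlap cyclically on $C$. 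This produces a proper circular-arc representation of $G - K$, properness following because an arc strictly contained in another would give a strictly dominating vertex whose neighborhood is a clique cutset, contradicting our assumption.

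The main obstacle is upgrading the proper representation to a \emph{unit} one, in which all arcs have equal length. I expect this to rest on a uniformity lemma forced by the interplay of the three hypotheses: since $C$ has fixed odd length $2k+1$, each non-universal vertex must have its arc on $C$ be of essentially the same size (roughly $k$ consecutive vertices), otherwise a shorter arc would leave room for a pan or a clique cutset and a longer arc would close an even hole through the cyclic complement. Once every arc has the same cardinality on $C$, a geometric rescaling yields equal-length arcs on a circle, giving the unit representation. The delicate part of this step is the case analysis showing that every non-$C$ vertex — including those whose arcs on $C$ may appear at first to differ — can be placed into this common scale, which will likely require a careful treatment of how non-$C$ vertices stack and interact with each other through $C$ using the domination order introduced in Section~\ref{sec:definitions}.
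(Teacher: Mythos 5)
Your high-level plan (reduce to atoms, find an odd hole, peel off universal vertices, and build a circular-arc representation around the hole) matches the paper's strategy in outline, but the two steps you defer with ``should force'' are precisely where the theorem lives, and the specific claims you make about them are wrong. First, it is not true that every non-universal vertex attaches to $C$ along an arc of ``roughly $k$'' consecutive vertices: by Observation~\ref{obs:neighbors}, a vertex outside $C$ has $2$, $3$, or all $\ell$ neighbors on $C$, and these counts genuinely coexist. The vertices with exactly two neighbors $h_i,h_{i+1}$ are the obstruction to any representation built directly on $C$: their arcs would have to be strictly shorter than those of the three-neighbor vertices, so no uniformity lemma of the kind you describe can hold. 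The paper's resolution is that such vertices cannot exist in an atom at all: after growing $C$ into a maximal buoy (absorbing the three-neighbor vertices into bags), any remaining ``type 2'' vertex forces $B_i\cup B_{i+1}\cup U$ to be a clique cutset, via a parity argument that crucially uses the oddness of $\ell$ (the two internally disjoint paths around an odd buoy have different parities, so a connection avoiding the would-be cutset closes an even hole). This elimination step is entirely absent from your proposal, and without it your consistency check that $xy\in E(G)$ iff the arcs of $x$ and $y$ overlap on $C$ is simply false.

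Second, your justification of properness --- that an arc strictly contained in another would give a dominating vertex whose neighborhood is a clique cutset --- is incorrect: a dominated vertex need not have a clique neighborhood (the graph of Figure~\ref{fig:non-beta-perfect} is an atom in which every vertex is dominated by its bag-mate), and atoms do contain pairs $a,b$ in a bag with $N(b)\cap B_{i+1}\subsetneq N(a)\cap B_{i+1}$. The actual engine of the unit representation is different: vertices within a bag are totally ordered by domination (Observation~\ref{obs:buoy-comparable}, proved by an even-hole argument), and strict domination on one side forces domination the other way on the opposite side (Observation~\ref{obs:buoy-pan}, proved by a pan argument); together these yield that for every $i$ at least one of $B_{i-1}\cup B_i$ and $B_i\cup B_{i+1}$ is a clique (Corollary~\ref{cor:buoy-clique}), which is what allows arcs of equal length to be slid past one another rather than nested. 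You would need to supply all of this machinery --- the buoy, its neighbor classification, the clique-cutset parity argument, and the alternating-clique property --- to close the gaps.
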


In \cite{CamEsc2007}, a polynomial-time algorithm is given for
finding an even hole (or, odd hole) in a circular-arc graph. In
\cite{OrlBon1991} or combining \cite{LS2008} and \cite{ShihHsu},
polynomial-time algorithms are given for
finding an optimal coloring of a unit circular-arc graph. In
Sections~\ref{sec:coloring} and~\ref{sec:recognition}, we discuss
how these results can be used to color and recognize
both the class ${\cal C}$ and the class of (pan, even hole)-free graphs.
In Section~\ref{sec:properties}, we will
establish structural properties of (pan, even hole)-free graphs.
Specifically, we will prove Theorem 2.3.

Our main result is:
\begin{theorem}\label{MainResult}
Given a graph $G$, a pan or even hole of $G$, if one exists,
can be found in $O(nm)$ time.
\end{theorem}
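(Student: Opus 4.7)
The plan is to combine the clique-cutset decomposition with the structure theorem (Theorem~\ref{thm:structure}), reducing the task to atom-level certification together with a separate check for pans whose handle crosses a clique cutset. First, compute the decomposition tree $T(G)$ in $O(nm)$ time via \cite{Tar1985}, producing at most $n-1$ atoms. Since no hole can span a clique cutset, every even hole of $G$ is contained in some atom; hence even-hole detection automatically reduces to the atoms.

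For each atom $A$, test in sequence whether $A$ is (i) a clique, (ii) a unit circular-arc graph, or (iii) the join of a unit circular-arc graph with a clique (in case (iii), first peel off the universal vertices of $A$ and run the unit circular-arc recognition of \cite{LS2008,ShihHsu} on what remains). By Theorem~\ref{thm:structure}, $A$ is (pan, even hole)-free exactly when one of these tests succeeds, so if all three fail for some atom, that atom must contain a pan or an even hole. A witness is extracted using a certifying variant of the unit circular-arc recognition, combined with the even-hole-finding algorithm for circular-arc graphs of \cite{CamEsc2007} applied to the candidate representation; the witness is simultaneously a pan or even hole of $G$. Summed over the at most $n-1$ atoms, this phase stays within $O(nm)$.

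If every atom passes the tests, then $G\in\mathcal{C}$ and so $G$ is even-hole-free, but $G$ may still contain a pan whose hole lies inside one atom $A$ and whose pendant $x$ lies outside $A$, with the handle endpoint $y$ necessarily in the clique cutset separating $x$ from $A$. Detecting such \emph{cross-atom pans} within the $O(nm)$ budget is the main obstacle. The plan is to iterate over the separator vertices $y$ of $T(G)$: for each atom $A$ containing $y$, use the explicit unit circular-arc representation of $A$ to locate the neighbors of $y$ on a hole through $y$, and then check whether any vertex $x$ in an adjacent part of the decomposition attaches to $y$ but to neither of those two neighbors. Because each atom's structure is tightly described by Theorem~\ref{thm:structure}, each per-separator check is linear in the local size, and the total work charged to the edges of $G$ sums to $O(nm)$, matching the cost of the decomposition and yielding the claimed bound.
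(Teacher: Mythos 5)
Your overall architecture (decompose by clique cutsets, certify each atom, then separately handle pans whose pendant vertex lies outside the atom) matches the paper's, but there are genuine gaps in both nontrivial phases. For the atom test, you assert that by Theorem~\ref{thm:structure} an atom is (pan, even hole)-free \emph{exactly when} it is a clique, a unit circular-arc graph, or the join of one with a clique. Only the forward implication holds: a unit circular-arc graph can contain an even hole ($C_6$ is one), so passing the recognition test of \cite{LS2008} certifies nothing about even holes. To close this you would have to run an even-hole test on each unit circular-arc atom, and the only tool you cite, \cite{CamEsc2007}, costs $O(nm\log\log n)$ per atom, i.e.\ $O(n^2 m\log\log n)$ over all atoms --- this is exactly the paper's \emph{first}, slower recognition algorithm from Section~\ref{sec:recognition}, not the $O(nm)$ one claimed in the statement. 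Likewise, when the circular-arc test fails you obtain a certificate of non-membership in the class of unit circular-arc graphs, and you give no argument converting that into a pan or even hole of $G$ in linear time. The paper's $O(nm)$ bound rests on a different atom routine (Theorem~\ref{thm:test-atom}): find a hole by chordality testing, grow it into a maximal buoy with two passes of ENLARGE, and at every point where the buoy structure (domination orders, vertex types, the clique condition of Lemma~\ref{lem:clique}) fails, exhibit an explicit pan or even hole via the constructive observations of Section~\ref{sec:properties}; even-hole-freeness of the surviving buoy is then certified by Lemma~\ref{lem:buoy-length} (every hole has the odd length $\ell$), with no call to a general even-hole detector.

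The straddling-pan phase is also not sound as described. For a separator vertex $y$, checking whether some outside $x$ adjacent to $y$ misses ``the two neighbors of $y$ on a hole through $y$'' inspects only one hole, but $xy$ is the handle of a pan if and only if \emph{some} hole through $y$ avoids $N(x)\setminus\{y\}$, and $y$ generally lies on many holes with different neighbor pairs; a single-hole check can miss a pan. The paper instead proves Observation~\ref{obs:maximal-atom} --- for a maximal atom $A$, $N_A(x)$ is empty or a clique for every outside vertex $x$ --- so each outside vertex is of type 1 or 2 with respect to the buoy; type-1 vertices immediately yield a straddling pan, while type-2 vertices with $\{x\}\cup B_i\cup B_{i+1}$ a clique cannot create one, giving an $O(m)$ per-atom test and $O(nm)$ overall. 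You would need some analogue of this structural fact to make your separator-based check correct.
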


We end this section with a discussion on the relationship between
even-hole-free graphs and $\beta$-perfect graphs, which were
introduced in \cite{MarGas1996} and are defined as follows. Let  $\delta(G)$ denote the
minimum degree of a vertex of a graph $G$. Order the vertices of
$G$ as $v_1, \ldots, v_n$ where $v_i$ has minimum degree in
$G[v_i, \ldots, v_n]$. Greedily color $G$ starting from $v_n$,
i.e., $v_i$ is given the smallest color distinct from its
neighborhood in $G[v_i, \ldots, v_n ]$. The number of colors used
is at most maximum$\{\delta(H) + 1 : H$ is an induced subgraph of
$G\}$, which is denoted $\beta(G)$.  Thus for any graph $G$, we
have $\chi(G) \le \beta(G)$. A graph is defined to be
$\beta$-perfect, if for every induced subgraph $F$ of $G$,
$\chi(F)=\beta(F)$. An even hole $C_{2k}$ has $\chi(C_{2k})=2$ and
$\beta(C_{2k})=3$.  It follows that $\beta$-perfect graphs
are even-hole-free.  A diamond is the complete graph on four
vertices with an edge removed. In \cite{KloMul2009}, it is proved
that (diamond, even hole)-free graphs are $\beta$-perfect.  In
\cite{MarGas1996}, the authors gave the graph of
Figure~\ref{fig:non-beta-perfect} as an example of an
even-hole-free graph which is not $\beta$-perfect. This graph is
claw-free and hence pan-free, so it follows that (claw, even
hole)-free and thus (pan, even hole)-free graphs need not be
$\beta$-perfect.
\begin{figure}[h]
\centering
\includegraphics[scale=1.2]{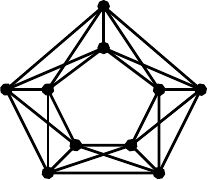}
\caption{A claw-free, non-$\beta$-perfect graph}
\label{fig:non-beta-perfect}
\end{figure}
\section{Properties of (pan, even hole)-free graphs}
\label{sec:properties}
In this section we will prove our structure theorem (Theorem
\ref{thm:structure}). We will actually prove a stronger but
more technical result which implies Theorem
\ref{thm:structure} (see Theorem \ref{thm:structure_buoy}).
We separate the discussion into two
subsections. In the first we consider a special substructure of a
graph which generalizes holes: we call this substructure a
``buoy". In the second we prove that (pan, even hole)-free
graphs decompose into buoys via clique cutsets.

\subsection{Buoys}
\label{sec:buoy}
To motivate our buoys we start with a key observation
regarding the structure around a hole in a pan-free graph.
\begin{observation}\label{obs:neighbors}
Let $G$ be a pan-free graph. Let $C$ be a hole of $G$ of length
at least five and let $x$ be a vertex outside~$C$.
\begin{enumerate}[(i)]
 \item If $x$ has a neighbor $v$ in $C$, then some $u \in C$ is
 adjacent to both $v$ and $x$.
 \item If $x$ has exactly three neighbors $v_1, v_2, v_3$ in
 $C$, then $v_1, v_2, v_3$ forms a path in $C$.
 \item If $x$ has exactly four neighbors in $C$, then $G$
 contains an even hole.
 \item If $x$ has at least five neighbors in $C$, then $x$ is
 $C$-complete.
  \item If $G$ is even-hole-free, then $x$ has 2, 3, or $\ell$ neighbors in $C$, where
  $\ell$ is the length of~$C$.
\end{enumerate}
\end{observation}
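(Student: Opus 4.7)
The common strategy is to build short-cut cycles through $x$ by replacing an arc of $C$ with the edges $x$ provides, then either exhibit a pan (contradicting pan-freeness) or count parities of hole lengths to force an even hole. Item (i) is the foundation; (ii)--(v) follow from it.

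For (i), I would proceed by contradiction. Suppose $x$ has a neighbor $v \in C$ but neither of $v$'s two $C$-neighbors $u_1, u_2$ is adjacent to $x$. If $v$ is $x$'s only $C$-neighbor, then $C$ together with handle $xv$ is already a pan. Otherwise, let $w$ be the $C$-neighbor of $x$ closest to $v$ along $C$; by symmetry the short arc passes through $u_1$. The short-cut $H = v - u_1 - \cdots - w - x - v$ is chordless of length $\geq 4$. If $w$ is not the $C$-neighbor of $u_2$, then $u_2$ has only $v$ on $H$, so $u_2 v$ is a pan handle. Otherwise $u_2 - v - x - w - u_2$ is a chordless $4$-hole $H'$, and because $|C| \geq 5$ forces $u_1 \neq w$, the vertex $u_1$ has only $v$ on $H'$, again a pan.

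Items (ii) and (iii) follow by applying (i) to each $C$-neighbor of $x$: every such vertex must share a $C$-edge with another $C$-neighbor of $x$. Thus in (ii) the three neighbors must form a $P_3$ on $C$ (a triangle on $C$ would force $|C| = 3$). For (iii) the four form either a $P_4$ or a $P_2 + P_2$ on $C$. In the $P_4$ case, writing $C = v_1 v_2 v_3 v_4 w_1 \ldots w_r v_1$ with $r \geq 1$, the chordless short-cut hole $x v_1 w_r \cdots w_1 v_4 x$ has length $r+3$ while $|C| = r+4$, so one of them is even. In the $P_2+P_2$ case with $C = v_1 v_2 a_1 \ldots a_p v_3 v_4 b_1 \ldots b_q v_1$ of length $p+q+4$, the two short-cuts give chordless holes $H_1, H_2$ of lengths $p+3$ and $q+3$; since $|H_1| + |H_2| = |C| + 2$, the three cannot all be odd.

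For (iv), suppose $x$ has $\geq 5$ $C$-neighbors but is not $C$-complete; by (i), $N(x) \cap C$ is a disjoint union of maximal $C$-subpaths, each with $\geq 2$ vertices. In the one-path case with $P = v_1 \ldots v_t$, $t \geq 5$, and complementary arc $z_1 \ldots z_k$, the short-cut $H = x v_1 z_k \cdots z_1 v_t x$ is chordless and the interior vertex $v_3$ has only $x$ as its $H$-neighbor, producing a pan. With $\geq 2$ paths: if some path has $\geq 3$ vertices, its far endpoint plays that role across a suitable short-cut; if every path has exactly two vertices then $\geq 3$ paths exist, and any vertex of a path not touching a chosen gap has only $x$ as an $H$-neighbor on the short-cut across that gap. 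Part (v) is then immediate: (i) excludes a single $C$-neighbor, (iii) excludes four $C$-neighbors in the even-hole-free setting, and (iv) excludes any count between $5$ and $\ell - 1$. The main obstacle throughout is (i): the short-cut hole can close up so that the natural pan-handle candidate $u_2$ acquires a second neighbor on $H$, forcing the pivot to the alternative $4$-hole $u_2 v x w u_2$ and the use of $u_1$ in place of $u_2$.
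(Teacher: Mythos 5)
Your proof is correct and takes essentially the same approach as the paper's: part (i) is proved via a shortcut hole through $x$ built from the nearest other neighbour of $x$ on $C$, together with a vertex of $C$ having a unique neighbour on that hole; (ii)--(iv) then follow from the decomposition of $N(x)\cap C$ into consecutive runs of length at least two, a parity count on the shortcut holes for (iii), and a further shortcut-hole-plus-handle pan for (iv). The only cosmetic difference is your choice of handle vertex in (iv) (an interior run vertex or a vertex of a third run, versus the paper's ``extra'' neighbour $v_t$), which does not change the argument.
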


\begin{proof}
Enumerate the vertices of $C$ in the cyclic order as $v_0, v_1,
\ldots, v_{\ell -1}$. Suppose that (i) is false for a neighbor $v$
of $x$. We may assume $v= v_0$, i.e., $x$ is adjacent to $v_0$,
and non-adjacent to $v_1$ and $v_{\ell -1}$. Vertex $x$ must have
another neighbor in $C$, for otherwise $x$ and $C$ form a pan. Let
$k$ be the smallest subscript, different from $0$, such that $v_k$
is a neighbor of $x$. If $k = \ell -2$, then $\{v_1, v_0, v_{\ell
-1}, v_{\ell -2}, x \}$ induces a pan. If $k \not= \ell -2$, then
$\{v_{\ell -1}, v_0, v_1, \dots,   v_k, x \}$ induces an pan. We
have established (i).
Now (ii) follows immediately from (i). For (iii) suppose $C$ is
an odd hole (otherwise, we are done). Now, by (i), these four
vertices either form single sub-path $v_i v_{i+1} v_{i+2} v_{i+3}$
of $C$ or two non-adjacent paths $v_i v_{i+1}$ and $v_t v_{t+1}$
such that $C$ can be written as $v_{i+1} P_1 v_t v_{t+1} P_2 v_i$
(for non-empty paths $P_1,P_2$). In the former case, an even hole
is induced by $\{x\} \cup C - \{v_{i+1} v_{i+2}\}$. In the latter
case, one of $x v_{i+1} P_1 v_{t}$ and $x v_{t+1} P_2 v_i$ is an
even hole. We now prove (iv). Suppose $x$ has at least five
neighbors in $C$ but is not $C$-complete (this implies $\ell >
5$). We may assume $x$ is adjacent to $v_0$ and non-adjacent to
$v_1$. Let $k$ be the smallest subscript, different from $0$, such
that $v_k$ is a neighbor of $x$. By (i), $x$ is adjacent to
$v_{k+1}$ and $v_{\ell -1}$. Since $x$ has at least five neighbors
in $C$, $x$ is adjacent to a vertex $v_t$ with $k+1 < t < \ell
-1$. But now $\{v_t, x, v_0, v_1, \ldots, v_k\}$ induces a pan.
Part (v) follows from (i)--(iv).
\end{proof}
For the purpose of finding a forbidden induced subgraph for
recognition of  class ${\cal C}$, we will now reformulate the
results of Observation~\ref{obs:neighbors} into their algorithmic
counter-parts. From the proof  of Observation~\ref{obs:neighbors},
we can extract a linear-time algorithm to find a pan or even hole
of an input graph when one of the conditions (i)-(v) fails.
\begin{observation}\label{obs:find-neighbors}
Let $G$ be a graph, let $C$ be a hole of $G$ of length at least five,
and let $x$ be a vertex outside $C$. If $x$ fails to satisfy (i)--(v)
of Observation~\ref{obs:neighbors}, then $G$ contains a pan or an
even hole, and such an induced graph can be found in linear time.
\hfill $\Box$
\end{observation}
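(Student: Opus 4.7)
The plan is to observe that the proof of Observation~\ref{obs:neighbors} is entirely constructive: in each failure case it exhibits, by an explicit choice of vertices on $C$, either a pan or an even hole. So the task reduces to turning each case analysis into a procedure that runs in linear time.

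First I would preprocess: mark every vertex of $N(x)$ (using a boolean array on $V(G)$), then traverse $C$ once in cyclic order $v_0, v_1, \ldots, v_{\ell-1}$ to produce the sorted list of indices $i$ with $v_i \in N(x)$. This takes $O(|C| + d(x)) = O(n+m)$ time and gives the exact size and cyclic arrangement of $N(x) \cap C$. From this list we can immediately decide which of (i)--(v) fail: condition (i) fails iff some neighbor $v_i$ of $x$ has both $v_{i-1}$ and $v_{i+1}$ outside $N(x)$; condition (iii) is checked by testing whether the list has length exactly $4$; condition (iv) is checked by testing whether the list has length $\geq 5$ but less than $\ell$; and (ii), (v) reduce to these.

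If (i) fails at some $v_i$, rename so that $i = 0$ and walk cyclically to find the smallest $k \neq 0$ with $v_k \in N(x)$; then output the pan $\{v_1, v_0, v_{\ell-1}, v_{\ell-2}, x\}$ when $k = \ell-2$, and the pan $\{v_{\ell-1}, v_0, v_1, \ldots, v_k, x\}$ otherwise, exactly as in the proof of Observation~\ref{obs:neighbors}(i). If (iii) fails, i.e., $x$ has exactly four neighbors in $C$, then either $C$ is already an even hole (return $C$) or, using (i) to locate the four neighbors, they either form a single sub-path $v_i v_{i+1} v_{i+2} v_{i+3}$ of $C$, in which case $\{x\} \cup C \setminus \{v_{i+1} v_{i+2}\}$ induces an even hole to output, or they split into two adjacent pairs $v_i v_{i+1}$ and $v_t v_{t+1}$, in which case one of the two cycles $x v_{i+1} P_1 v_t$ and $x v_{t+1} P_2 v_i$ has even length and is output. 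If (iv) fails, i.e., $x$ has $\geq 5$ neighbors but some $v_1 \notin N(x)$ with $v_0 \in N(x)$, locate (again by cyclic walk) the smallest $k \neq 0$ with $v_k \in N(x)$ and any neighbor $v_t$ with $k+1 < t < \ell - 1$ (which exists because $|N(x) \cap C| \geq 5$), and output the pan $\{v_t, x, v_0, v_1, \ldots, v_k\}$.

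Each branch of the algorithm does at most one additional sweep around $C$ to locate the distinguished indices, so the total running time is linear in the size of the induced subgraph examined. There is no real obstacle beyond careful bookkeeping of cyclic indices; the main thing to be pedantic about is checking the small-case boundary in condition (iii), where if $C$ is already even we can simply return $C$ without invoking the sub-case analysis, which is what keeps the procedure strictly linear rather than requiring us to trace $P_1$ or $P_2$ in a larger enclosing loop.
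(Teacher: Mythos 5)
Your proposal is correct and takes essentially the same approach as the paper, whose ``proof'' consists precisely of the remark that a linear-time procedure can be read off from the constructive case analysis in the proof of Observation~\ref{obs:neighbors}; you have simply carried out that extraction explicitly. One small bookkeeping point: when (i) fails because $x$ has exactly one neighbor in $C$, your cyclic walk finds no second index $k$, and the pan to output is $x$ together with all of $C$ (as in the first line of the paper's proof of (i)).
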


We generalize the presence of a length $\ell \geq 5$ hole (and
Observation \ref{obs:neighbors}) in a graph to the presence of an
\emph{$\ell$-buoy} (defined as follows). For $\ell \geq 5$, an
{\em $\ell$-buoy} $B$ is a collection of sets $B_0, B_1, \ldots,
B_{\ell-1}$ of vertices of $G$ such that each $B_i$ induces a
clique, each vertex in $B_i$ has a neighbor in $B_{i+1}$ and one
in $B_{i-1}$, and there are no edges between $B_i$ and $B -
(B_{i-1} \cup B_i \cup B_{i+1})$, with subscripts taken modulo
$\ell$ (see Figure~\ref{fig:5-buoy} for an example); the sets
$B_i$ are called the {\em bags} of the buoy; a buoy is \emph{odd} or
\emph{even} depending on whether the number of bags ($\ell$) is odd or even.
We also refer to $G[B]$ as a buoy. A {\em skeleton} of
$B$ is a hole containing one vertex of each $B_j$, $j \in \{0,
\ldots, \ell-1\}$. A buoy $B$ in a graph $G$ is said to be
\emph{full} when it includes every vertex of $G$. Due to the
cyclic structure of $\ell$-buoys, when we refer to a bag $B_i$ of
an $\ell$-buoy, we always mean the bag $B_{i~({\rm mod}~\ell)}$.
We will see that when $G$ is $C_4$-free its buoys are circular-arc
graphs (see Theorem \ref{thm:buoy-square-circarc}). Similarly,
when $G$ is (pan, even hole)-free its buoys are unit
circular-arc  graphs (see Theorem \ref{thm:buoy-unit-circ}).
\begin{figure}[h]
\centering
\includegraphics[scale=1.2]{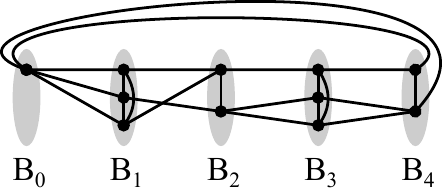}
\caption{An example of a 5-buoy.}
\label{fig:5-buoy}
\end{figure}
\begin{observation}\label{obs:hole}
Let $G$ be an even-hole-free graph having an odd $\ell$-buoy with bags $B_0,
\ldots, B_{\ell-1}$. Consider a path $b_0,
b_1, \ldots , b_{k-1}$ with $1 \leq k \leq \ell$ where $b_i \in
B_i$ for $i = 0, \dots, k-1$. Then where
$z= \min\{k-1,\ell-2\}$, $\{b_0, \ldots, b_{z}\}$ belongs to a skeleton of $B$.
\end{observation}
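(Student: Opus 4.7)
The plan is to extend the given initial path to a cycle that visits each bag exactly once. First I would record the key structural consequence of the buoy's definition: since there are no edges between $B_i$ and $B_j$ whenever $|i-j| \geq 2 \pmod{\ell}$, any cycle of the form $v_0 v_1 \cdots v_{\ell-1}$ with $v_i \in B_i$ is automatically chordless and hence already a skeleton of $B$. So the task reduces to choosing representatives $b_k, b_{k+1}, \ldots, b_{\ell-1}$ of the remaining bags (discarding the given $b_{\ell-1}$ when $k = \ell$) in such a way that consecutive ones are adjacent and $b_{\ell-1}$ closes up with $b_0$.

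The extension step handles indices $k, k+1, \ldots, \ell-2$. For each such $i$ in increasing order, pick any neighbor of $b_{i-1}$ in $B_i$; such a vertex exists because every vertex of $B_{i-1}$ has a neighbor in $B_i$ by the definition of a buoy. After this, we hold a path $b_0, b_1, \ldots, b_{\ell-2}$ whose initial segment is the prescribed $b_0, \ldots, b_z$, and the only remaining task is to pick $b_{\ell-1} \in B_{\ell-1}$ adjacent to both $b_{\ell-2}$ and $b_0$.

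The main obstacle is the closing step, and here I would use the odd parity of $\ell$ together with the even-hole-free hypothesis. Let $a \in B_{\ell-1}$ be any neighbor of $b_{\ell-2}$ and $c \in B_{\ell-1}$ any neighbor of $b_0$; both exist by the buoy definition. If $a$ happens to be adjacent to $b_0$, take $b_{\ell-1} = a$; symmetrically if $c$ is adjacent to $b_{\ell-2}$, take $b_{\ell-1} = c$. Otherwise $a \ne c$, and since $B_{\ell-1}$ is a clique, $ac$ is an edge, so $b_0\, b_1 \cdots b_{\ell-2}\, a\, c\, b_0$ is a cycle of length $\ell+1$. By the non-adjacency of non-consecutive bags, the only candidate chords are $b_0 a$ and $b_{\ell-2} c$, and both are excluded in the current subcase; hence this cycle is an induced hole of length $\ell+1$. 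Since $\ell$ is odd, $\ell+1$ is even, contradicting that $G$ is even-hole-free. Thus the closing step always succeeds, producing a skeleton of $B$ that contains $b_0, \ldots, b_z$.
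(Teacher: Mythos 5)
Your proposal is correct and follows essentially the same route as the paper: greedily extend the path through the remaining bags, and when the closing edge to $b_0$ is missing, form the $(\ell+1)$-cycle through a neighbor of $b_{\ell-2}$ and a neighbor of $b_0$ in $B_{\ell-1}$ (your $a$ and $c$ are the paper's $b_{\ell-1}$ and $y$), whose chordlessness forces an even hole since $\ell$ is odd. The case analysis and the use of the bag-clique and non-consecutive-bag-nonadjacency properties match the paper's argument.
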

\begin{proof}
By definition of the buoy, there is an induced path $b_{k-1}, b_k,
\ldots, b_{\ell-1}$ such that $b_j \in B_j$ for $j=k, k+1, \ldots,
\ell-1$. We may assume $b_0$ is not adjacent to $b_{\ell-1}$, for
otherwise we are done. Let $y$ be a neighbor of $b_0$ in
$B_{\ell-1}$. We have $b_{\ell-2} y \in E(G)$, for otherwise
$\{b_0, b_1 \ldots, b_{\ell-2}, b_{\ell-1}, y\}$ induced an even
hole. If $k = \ell$, then $\{b_0, \ldots, b_{k-2}, y\}$ induces a
skeleton; if $k < \ell$, then $\{b_0, \ldots, b_{k-2}, b_{k-1},
\ldots, b_{\ell-2}, y\}$ induces a skeleton.
\end{proof}
\begin{figure}[h]
\centering
\includegraphics[scale=1.2]{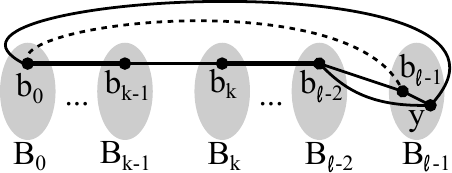}
\caption{The skeleton containing the path $b_0, \ldots, b_{z}$
where $z = \min\{k-1,\ell-2\}$ as in the proof of Observation
\ref{obs:hole}. Note: the bold edges connecting $b_0$ to
$b_{k-1}$ and $b_k$ to $b_{\ell-2}$ correspond to the paths
connecting these vertices.}
\end{figure}

\begin{corollary}\label{cor:hole}
Let $G$ be an even-hole-free graph having an odd $\ell$-buoy with bags $B_0,$
$\ldots,$ $B_{\ell-1}$. Let $P$ be a path with
vertices $p_i, p_{i+1}, \ldots, p_{i+k}$ where $0 \leq k \leq
\ell-2$ and $p_j \in B_j$ for all $j$ (with the subscripts taken
modulo $\ell$), then $P$ belongs to a skeleton of $B$.
\hfill$\Box$
\end{corollary}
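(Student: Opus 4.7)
The plan is to derive this as an immediate consequence of Observation~\ref{obs:hole} via cyclic relabeling of the bags. The defining conditions of an $\ell$-buoy (each $B_i$ a clique, adjacencies only between consecutive bags mod $\ell$, and each vertex in $B_i$ with a neighbor in each of $B_{i-1}$ and $B_{i+1}$) are invariant under any cyclic shift of the index set $\{0, 1, \ldots, \ell-1\}$. Consequently, for any fixed $i$, the sequence $B_i, B_{i+1}, \ldots, B_{i+\ell-1}$ (indices mod $\ell$) is itself an $\ell$-buoy, and its skeletons are exactly the skeletons of $B$.

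First I would relabel: set $B'_j := B_{i+j\,(\mathrm{mod}\,\ell)}$ and $b_j := p_{i+j}$ for $j = 0, 1, \ldots, k$. The path $p_i, p_{i+1}, \ldots, p_{i+k}$ then becomes an induced path $b_0, b_1, \ldots, b_k$ with $b_j \in B'_j$. This is precisely the setup of Observation~\ref{obs:hole} applied to the relabeled buoy, with $k+1$ vertices in the path.

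Next I would check that the numerical constraints match up. Observation~\ref{obs:hole} requires the path to have between $1$ and $\ell$ vertices; since $0 \le k \le \ell - 2$, the number of vertices $k+1$ lies in $[1, \ell-1] \subseteq [1, \ell]$, so the hypothesis applies. The observation then yields that $\{b_0, \ldots, b_z\}$ lies on a skeleton, where $z = \min\{(k+1)-1, \ell-2\} = \min\{k, \ell-2\} = k$ (again using $k \le \ell - 2$). Translating back via the relabeling, $\{p_i, p_{i+1}, \ldots, p_{i+k}\}$ lies on a skeleton of $B$, which is exactly the claim.

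There is no real obstacle here; the only thing to be careful about is the arithmetic confirming that $k \le \ell - 2$ makes $\min\{k, \ell - 2\}$ collapse to $k$, so that the conclusion covers the \emph{entire} path $P$ rather than a proper prefix of it. This is why the corollary restricts to $k \le \ell - 2$: it is exactly the range in which Observation~\ref{obs:hole} places all path vertices (not just the first $\ell-1$ of them) on a skeleton.
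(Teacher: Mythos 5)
Your proof is correct and matches the paper's intent exactly: the corollary is stated with no written proof precisely because it follows from Observation~\ref{obs:hole} by the cyclic relabeling you describe, and your arithmetic check that $z=\min\{k,\ell-2\}=k$ under the hypothesis $k\le\ell-2$ is the only point that needs verifying.
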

From the proof of Observation~\ref{obs:hole}, we can extract a
linear-time algorithm to establish the following observation.
\begin{observation}\label{obs:find-hole}
Let $G$ be a graph having an odd $\ell$-buoy with bags $B_0, \ldots ,
B_{\ell-1}$. Consider a path $b_0, b_1, \ldots
, b_{k-1}$ with $1 \leq k \leq \ell$ where $b_i \in B_i$ for $i =
0, \dots, k-1$. Then there is a linear-time algorithm that either
finds an even hole, or a skeleton containing the set $\{b_0,
\ldots,$ $ b_{z}\}$ (for a given $z = \min\{k-1,\ell-2\}$). \hfill
$\Box$
\end{observation}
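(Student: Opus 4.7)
The plan is to trace through the constructive proof of Observation~\ref{obs:hole} and verify that each step can be executed in $O(n+m)$ time. The algorithm takes as input the bags $B_0, \ldots, B_{\ell-1}$ of the buoy (stored as a per-vertex label so that bag membership is an $O(1)$ test) and the given path $b_0, \ldots, b_{k-1}$.

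First, if $k<\ell$, I would extend the path one bag at a time: for $j=k,k+1,\ldots,\ell-1$, scan the adjacency list of $b_{j-1}$ and pick any $b_j \in N(b_{j-1}) \cap B_j$. Such a vertex exists by the definition of a buoy, and locating it for a single $j$ costs $O(d(b_{j-1}))$; summed over $j$ this is $O(n+m)$. After this step, I have the induced path $b_0, b_1, \ldots, b_{\ell-1}$ with $b_j \in B_j$.

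Next, I would test whether $b_0 b_{\ell-1} \in E(G)$ by scanning the adjacency list of $b_0$. If the edge is present, return the induced hole $\{b_0, b_1, \ldots, b_{\ell-1}\}$ as the skeleton; absence of chords follows from the buoy partition. Otherwise, pick any $y \in N(b_0) \cap B_{\ell-1}$ (present by the buoy definition; found in $O(d(b_0))$ time) and test whether $b_{\ell-2} y \in E(G)$. If this edge is absent, return the cycle $b_0, b_1, \ldots, b_{\ell-1}, y, b_0$ of length $\ell+1$, which is even because $\ell$ is odd; the absence of chords follows from the buoy partition together with the two failed edge tests $b_0 b_{\ell-1} \notin E(G)$ and $b_{\ell-2} y \notin E(G)$. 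If instead $b_{\ell-2} y \in E(G)$, return the skeleton $\{b_0, \ldots, b_{\ell-2}, y\}$, which contains $\{b_0, \ldots, b_z\}$ both when $k=\ell$ (so $z=\ell-2$) and when $k<\ell$ (so $z=k-1 \le \ell-2$).

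The hard part, to the extent there is one, is simply bookkeeping: confirming that the finitely many adjacency-list scans and the path-extension loop together cost $O(n+m)$, and that every bag-membership query is $O(1)$ under the chosen representation. No additional case analysis is required, since correctness of each returned object is inherited verbatim from the proof of Observation~\ref{obs:hole}.
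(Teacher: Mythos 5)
Your proposal is correct and is exactly the extraction the paper intends: the paper gives no explicit proof of this observation beyond asserting that the constructive argument for Observation~\ref{obs:hole} yields a linear-time algorithm, and your step-by-step trace (greedy path extension through the bags, the two edge tests on $b_0b_{\ell-1}$ and $b_{\ell-2}y$, and the resulting even hole or skeleton) is precisely that algorithm with the running-time bookkeeping filled in.
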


\begin{observation}\label{obs:buoy-square}
Let $G$ be a $C_4$-free graph. Let $B$ be an $\ell$-buoy of $G$
with bags $B_0, \ldots, B_{\ell-1}$. Then any two vertices $a$ and $b$
in the same $B_i$ are comparable in $B_{i-1}$. By symmetry, $a$
and $b$ are comparable in $B_{i+1}$.
\end{observation}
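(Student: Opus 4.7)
The plan is a short contradiction argument exploiting the cliqueness of the bags. Assume, toward a contradiction, that $a, b \in B_i$ are incomparable when we restrict attention to their neighborhoods in $B_{i-1}$. Then there exist witnesses $x, y \in B_{i-1}$ with $x \in N(a) \setminus N(b)$ and $y \in N(b) \setminus N(a)$. In particular $x \ne y$ since they have distinct adjacencies into $\{a,b\}$.

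Now I would simply read off the induced subgraph on $\{a, x, y, b\}$. Since $a, b$ both lie in the clique $B_i$, the edge $ab$ is present, and since $x, y$ both lie in the clique $B_{i-1}$, the edge $xy$ is present. By the choice of witnesses, $ax$ and $by$ are edges while $ay$ and $bx$ are non-edges. Thus $a\text{-}x\text{-}y\text{-}b\text{-}a$ is an induced $C_4$ in $G$, contradicting the hypothesis that $G$ is $C_4$-free. The claim for $B_{i+1}$ follows by symmetry, since the buoy's cyclic structure is invariant under reversing the indexing.

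There really is no substantive obstacle here; the only thing to be careful about is to invoke that $B_i$ and $B_{i-1}$ are cliques (which is built into the definition of a buoy) in order to guarantee the two ``internal'' edges $ab$ and $xy$ of the would-be $C_4$, and to note that the existence of the asymmetric witnesses $x, y$ automatically forces $x \ne y$.
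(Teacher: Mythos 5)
Your proof is correct and is essentially identical to the paper's: the paper also extracts witnesses $x,y \in B_{i-1}$ with $xa,yb \in E(G)$ and $xb,ya \notin E(G)$ and observes that $\{a,b,x,y\}$ induces a $C_4$. You merely spell out the two clique edges $ab$ and $xy$ explicitly, which the paper leaves implicit.
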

\begin{proof}
Let $a$ and $b$ be two vertices in $B_i$. Suppose they are not
comparable in $B_{i-1}$. Then there are vertices $x,y \in B_{i-1}$
with $xa, yb \in E(G)$ and $xb, ya \not\in E(G)$. Now, the four
vertices $a,b,x,y$ form a $C_4$.
\end{proof}
From the proof of Observation~\ref{obs:buoy-square}, we can
extract a linear-time algorithm to establish the following
observation.
\begin{observation}\label{obs:find-buoy-square}
Let $G$ be a  graph. Let $B$ be an $\ell$-buoy of $G$ with bags
$B_0, \ldots,$ $B_{\ell-1}$. If two vertices $a$ and $b$ in the same $B_i$
are not comparable in $B_{i-1}$, then $G$ contains a $C_4$, and
this $C_4$ can be found in linear time. \hfill $\Box$
\end{observation}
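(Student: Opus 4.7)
The plan is to revisit the short existence proof of Observation~\ref{obs:buoy-square} and note that each of its steps can be carried out in linear time. First, I would recall that by hypothesis $a, b \in B_i$ are not comparable in $B_{i-1}$, so by definition there exist vertices $x, y \in B_{i-1}$ with $x \in N(a) \setminus N(b)$ and $y \in N(b) \setminus N(a)$. Since $B_i$ is a clique the edge $ab$ is present, and since $B_{i-1}$ is a clique the edge $xy$ is present; combined with the edges $ax, by$ and the non-edges $ay, bx$, the set $\{a,b,x,y\}$ induces the four-cycle $a$--$b$--$y$--$x$--$a$, which is the required $C_4$.

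To exhibit this $C_4$ in linear time, I would proceed in two short passes. First, scan $N(a)$ and $N(b)$ once each, marking each neighbor in two Boolean arrays indexed by vertex; this costs $O(d(a) + d(b))$ time. Second, walk through the vertices of $B_{i-1}$ (using a characteristic array for that bag, built in $O(n)$ preprocessing) and return the first vertex that is marked for $a$ but not for $b$ (take this as $x$), and the first vertex that is marked for $b$ but not for $a$ (take this as $y$); each such scan is $O(|B_{i-1}|)$. The full procedure therefore runs in $O(n+m)$ time and outputs the four vertices $\{a,b,x,y\}$ of the $C_4$.

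There is no substantive obstacle here; the only subtlety is ensuring that the membership queries ``$v \in B_{i-1}$'' and ``$v \in N(a)$'' can each be answered in $O(1)$, which is handled by the characteristic arrays for the bag and by the neighborhood marking described above. Correctness of the output follows immediately from the choice of $x$ and $y$ together with the clique property of $B_i$ and $B_{i-1}$, so no further case analysis is needed.
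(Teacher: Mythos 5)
Your proof is correct and follows the same route as the paper: it re-runs the argument of Observation~\ref{obs:buoy-square}, extracting the witnesses $x\in N(a)\setminus N(b)$ and $y\in N(b)\setminus N(a)$ in $B_{i-1}$ so that $\{a,b,x,y\}$ induces a $C_4$ (using that $B_i$ and $B_{i-1}$ are cliques), and the neighborhood-marking implementation you describe is exactly the linear-time extraction the paper asserts without detail.
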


\begin{theorem}\label{thm:buoy-square-circarc}
If $B$ is an $\ell$-buoy of a $C_4$-free graph $G$, then $B$ is a
circular-arc graph.
\end{theorem}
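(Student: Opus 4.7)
The plan is to construct an explicit circular-arc model of $G[B]$. First I apply Observation~\ref{obs:buoy-square} (and its symmetric form, obtained by swapping the roles of $B_i$ and $B_{i+1}$) to conclude that for every $i$ the bipartite graph $H_i$ induced between $B_i$ and $B_{i+1}$ has both $\{N(v)\cap B_{i+1}:v\in B_i\}$ and $\{N(u)\cap B_i:u\in B_{i+1}\}$ linearly ordered by inclusion; equivalently, $H_i$ is $2K_2$-free, i.e., a \emph{chain bipartite graph}.

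Next I invoke the standard fact that any chain bipartite graph in which every vertex has a neighbor (which holds here by the buoy axioms) admits a threshold representation: one can assign weights $\beta\colon B_i\to (0,1)$ and $\alpha\colon B_{i+1}\to (0,1)$ so that $vu\in E$ iff $\beta(v)+\alpha(u)\ge 1$. Explicit formulas, after ordering $B_{i+1}=\{u_1,\dots,u_q\}$ by decreasing $|N(u_j)\cap B_i|$, are $\alpha(u_j)=1-(j-\frac12)/q$ and $\beta(v)=1-(q-|N(v)\cap B_{i+1}|+\frac12)/q$; positivity of $\beta$ uses that each $v$ has at least one neighbor in $B_{i+1}$. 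Applying this to every $H_i$ furnishes each $v\in B_i$ with two independently chosen numbers $\alpha_v,\beta_v\in(0,1)$, where $\alpha_v$ comes from the representation of $H_{i-1}$ and $\beta_v$ from that of $H_i$.

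Then, on a circle of circumference $\ell$ partitioned into unit sectors $[i,i+1)$, I define the arc of $v\in B_i$ by $A_v=[\,i+\frac12-\alpha_v,\; i+\frac12+\beta_v\,]$ (taken modulo $\ell$). Verification is routine: (a) every arc of $B_i$ contains the point $i+\frac12$, so $B_i$ is a clique of pairwise overlapping arcs; (b) for $v\in B_i$ and $u\in B_{i+1}$, the arcs $A_v$ and $A_u$ meet iff $\beta_v+\alpha_u\ge 1$, which by construction matches the edges of $H_i$; (c) the strict bounds $\alpha_v,\beta_v<1$ confine each $A_v$ strictly inside $(i-\frac12,i+\frac32)$, so arcs assigned to vertices in bags at cyclic distance at least two are disjoint, matching the absence of edges between non-consecutive bags that is forced by the buoy definition.

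The conceptual obstacle worth flagging is that a naive construction that places the vertices of $B_i$ at distinct points inside sector $i$ would need a single linear order on $B_i$ compatible with the chain structures coming from both $H_{i-1}$ and $H_i$; these two domination orders can disagree even in a $C_4$-free buoy (one can exhibit a small example in which a vertex of $B_{i+1}$ dominates another from both the $B_i$ side and the $B_{i+2}$ side). The remedy is precisely to let each arc be asymmetric, with its two endpoints chosen \emph{independently} from the two adjacent threshold representations; this independence, combined with the chain structure provided by Observation~\ref{obs:buoy-square}, is the heart of the argument.
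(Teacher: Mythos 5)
Your construction is correct and is essentially the same as the paper's: both exploit Observation~\ref{obs:buoy-square} to order each bag's neighborhoods toward $B_{i-1}$ and toward $B_{i+1}$ independently, and assign each vertex of $B_i$ an arc through a central point of sector $i$ whose two endpoints are determined separately by those two nested orders (the paper uses equally spaced points indexed by neighborhood-equivalence classes where you use threshold weights, a purely presentational difference). The "obstacle" you flag --- that the two domination orders on a bag need not agree, so the endpoints must be chosen independently --- is precisely what the paper's triple $(i,j,j')$ encodes.
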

\begin{proof}
Let $B_0, \ldots, B_{\ell-1}$ be the bags of $B$. We construct a
circular-arc representation of $G$ as follows. First we partition
the circle into $\ell$ arcs of equal length and label the boundary
points of these arcs as $(0), (1), \ldots, (\ell-1)$ in clockwise
order. By Observation~\ref{obs:buoy-square} the vertices of $B_i$
can be partitioned and ordered by neighborhood inclusion with
respect to $B_{i+1}$.

That is, we let $B_i = \bigcup_{j=1}^{t_i} B_{i,j}$ such that for
$v,v' \in B_{i,j}$, $N(v) \cap B_{i+1} = N(v') \cap B_{i+1}$ and
for $u \in B_{i,j}$ and $x \in B_{i,j+1}$, the neighborhood of $u$
is a strict subset of that of $x$ with respect to $B_{i+1}$ (i.e.,
$N(u) \cap B_{i+1}$ $\subsetneq$ $N(x) \cap B_{i+1}$).
That is, $  B_i$ can be partitioned into $t_i$ subsets $B_{i,1},
\ldots, B_{i, t_i}$ where: (1) if $v, v' \in B_{i,j}$, then they
have the same neighbors in $B_{i+1}$ (i.e., $N(v) \cap B_{i+1} =
N(v') \cap B_{i+1}$), and (2) if $u \in B_{i,j}$ and $v \in
B_{i,j+1}$, then  in $B_{i+1}$ the neighborhood of $u$ is a strict
subset of the of the neighborhood of $v$  (i.e., $ N(u) \cap
B_{i+1} \subsetneq N(v) \cap B_{i+1}$).
From this partitioning of $B_i$ and $B_{i+1}$ we can easily
construct arcs between $(i)$ and $(i+1)$ to capture the edges
between vertices of $B_i$ and $B_{i+1}$. This is depicted in
Figure~\ref{fig:circular_arc} and described as follows.

For every $i$, we place $t_i$ equally spaced points $\{(i,1),
\ldots, (i,t_i)\}$ on the arc from $(i)$ to $(i+1)$ and:
\begin{itemize}
\item for a vertex $b_j$ in $B_{i,j}$ we use the arc from $(i)$
to $(i,j)$.
\item for a vertex $x_j$ in $X_{i+1,j}$ we use the arc from
$(i,j)$ to $(i+1)$.
\end{itemize}
Clearly, the arc from $(i)$ to $(i,j)$ precisely intersects all
arcs from vertices in $B_i$ and the arcs of $b_j$'s neighbors in
$B_{i+1}$. Similarly, the arc from $(i,j)$ to $(i+1)$ precisely
intersects all arcs from vertices in $B_{i+1}$ and the arcs of
$x_j$'s neighbors in $B_i$.

Notice that, for each $v \in B$ there is a unique triple
$(i,j,j')$ of indices where $v \in B_{i,j} \cap X_{i,j'}$. Thus,
by performing this construction for each $i \in \{0,
\ldots,\ell-1\}$, each $v$ will be mapped to an arc from
$(i-1,j')$ to $(i,j)$; i.e., we have a circular-arc representation
for $B$.
\end{proof}
\begin{figure}[h]
\centering
\includegraphics[scale=1.2]{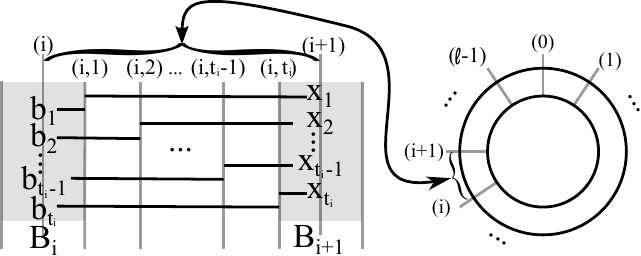}
\caption{The partial arcs between $(i)$ and $(i+1)$ where the
arc $b_j$ represents the vertices from $B_{i,j}$ and the arc
$x_j$ represents the vertices from $X_{i+1,j}$.}
\label{fig:circular_arc}
\end{figure}
Let $B$ be an $\ell$-buoy of a graph $G$ with bags $B_0, \ldots,
B_{\ell-1}$. Consider a vertex $x$ of some bag $B_i$. We say $x$
is  a {\it dominant} vertex of $B_i$ if (in $B$) it dominates
every other vertex of $B_i$.
\begin{observation}\label{obs:buoy-comparable}
Let $B$ be an odd $\ell$-buoy of an even-hole-free graph $G$ with bags
$B_0, \ldots, B_{\ell-1}$. For every $i \in
\{0,$ $\ldots,$ $\ell-1\}$ and every pair of vertices $x,y$ in
$B_i$, $x$ and $y$ are comparable in $B$. In particular, each
$B_i$ contains a dominant vertex.
\end{observation}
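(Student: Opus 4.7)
The plan is to reduce comparability of $x, y \in B_i$ over all of $B$ to comparability in the three bags $B_{i-1}, B_i, B_{i+1}$, since by the buoy's definition neither $x$ nor $y$ has any neighbor outside $B_{i-1} \cup B_i \cup B_{i+1}$. Within $B_i$ itself (a clique), the two neighborhoods agree on $B_i \setminus \{x,y\}$, so that bag imposes no constraint either. Since $G$ is even-hole-free, it is in particular $C_4$-free; hence by Observation~\ref{obs:buoy-square}, $x$ and $y$ are comparable in $B_{i-1}$ and in $B_{i+1}$ separately. What remains is to show these two comparisons cannot point in opposing strict directions, so that one of $x$ or $y$ in fact dominates the other over all of $B$.

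Suppose, toward a contradiction, that $N(x) \cap B_{i+1} \supsetneq N(y) \cap B_{i+1}$, witnessed by $a \in B_{i+1}$ with $xa \in E$ and $ya \notin E$, while $N(y) \cap B_{i-1} \supsetneq N(x) \cap B_{i-1}$, witnessed by $b \in B_{i-1}$ with $yb \in E$ and $xb \notin E$. Using the defining property that every vertex of a bag has a neighbor in the next bag, I would walk from $a$ around the buoy away from $B_i$ to produce a sequence $a = c_{i+1}, c_{i+2}, \ldots, c_{i-1} = b$ with $c_j \in B_j$ and $c_j c_{j+1} \in E$. This walk is automatically an induced path, because any two $c_j, c_{j'}$ with $|j - j'| \ge 2 \pmod{\ell}$ lie in non-adjacent bags and therefore have no edge between them. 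Closing through $x$ and $y$ via $ax, xy, yb$ yields a cycle on the $\ell+1$ vertices $x, y, c_{i-1}, c_{i-2}, \ldots, c_{i+1}$. The only potential chords are $xb$ and $ya$ (non-edges by the choice of $a, b$), $ab$ (a non-edge because $B_{i-1}$ and $B_{i+1}$ are non-adjacent bags since $\ell \ge 5$), and edges between $x$ or $y$ and an intermediate $c_j$ (ruled out by the buoy structure, as such $c_j$ lies in a bag non-adjacent to $B_i$). Since $\ell$ is odd, $\ell+1$ is even and we have produced an even hole, a contradiction.

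This establishes that any two vertices of $B_i$ are comparable in $B$. For the ``in particular'' clause, since domination is transitive (as noted in Section~\ref{sec:definitions}) and now total on $B_i$, sweeping through $B_i$ while retaining the dominating vertex of each successive pair yields an element that dominates all of $B_i$, i.e., a dominant vertex. The main technical obstacle is the verification that the $(\ell+1)$-cycle above is genuinely chordless, which reduces entirely to the buoy's ``no edges between non-adjacent bags'' property together with the strict-domination witnesses; the only place the parity of $\ell$ enters the argument is in turning $\ell+1$ into an even length, which explains why the odd-buoy hypothesis is essential.
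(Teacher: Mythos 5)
Your setup is right and matches the paper's: reduce to the bags $B_{i-1}$ and $B_{i+1}$, use $C_4$-freeness (Observation~\ref{obs:buoy-square}) to get comparability in each of these bags separately, and then rule out the two strict dominations pointing in opposite directions, with witnesses $a\in B_{i+1}$ ($xa\in E$, $ya\notin E$) and $b\in B_{i-1}$ ($yb\in E$, $xb\notin E$). The parity count and the chord analysis at the end would also be fine \emph{if} your cycle existed.

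The gap is the sentence ``I would walk from $a$ around the buoy \ldots to produce a sequence $a=c_{i+1},c_{i+2},\ldots,c_{i-1}=b$.'' The buoy axioms only guarantee that a forward walk from $a$ ends at \emph{some} vertex of $B_{i-1}$, and a backward walk from $b$ ends at \emph{some} vertex of $B_{i+1}$; nothing forces these walks to hit the prescribed endpoints $b$ and $a$. A single induced path from $a$ to $b$ with one vertex per intermediate bag need not be constructible from the definition alone --- indeed, the clean way to get one is to route through dominant vertices of the intermediate bags, which is precisely the conclusion you are trying to prove, so that route is circular. This missing step is the actual content of the paper's proof: it invokes Observation~\ref{obs:hole} to obtain \emph{two} skeletons, one through $x,a$ and one through $y,b$, writes them as $(x,a_1,\ldots,a_{\ell-1})$ and $(y,b_{\ell-1},\ldots,b_1)$ with $a_j,b_j\in B_j$, and then analyzes the cross-edges: an edge $a_jb_{j+1}$ closes an even hole of length $\ell+1$ through $x$ and $y$; if no such edge exists, $C_4$-freeness forces every $b_ja_{j+1}$ to be an edge, and a hybrid of the two paths is then an even hole. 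You need an argument of this kind (or some other justification for your single connecting path) before the final even-hole contradiction can be drawn. The ``in particular'' clause via transitivity is fine.
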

\begin{proof}
Figure~\ref{fig:comparable} depicts the structure we observe
in this proof.
Suppose some pair of vertices $x,y $ in $B_0$ are incomparable.
Then by Observation~\ref{obs:buoy-square}, there are
vertices $a \in B_1, b \in B_{\ell-1}$ with $xa, by \in E(G)$
and $ya, bx \not\in E(G)$. Now, by Observation~\ref{obs:hole}
and for the edges $xa$ and $by$, $B$ has skeletons
$(x, a(=a_1), a_2, \ldots, a_{\ell-1})$ and
$(y, b(=b_{\ell-1}), \ldots, b_1)$ where $a_i,b_i \in B_i$.

Notice that if $a_ib_{i+1}$ is an edge, then $(y, x, a_1, a_2,
a_3, \ldots, a_i, b_{i+1}, \ldots, b_{\ell-1})$ is an even hole.
Thus, $a_ib_{i+1} \notin E(G)$ and each $a_i$ is distinct from
each $b_j$. Moreover, $b_ia_{i+1}$ is an edge (otherwise
$a_i,b_i,b_{i+1},a_{i+1}$ is an induced $C_4$). But now $(x, a_1,
a_2, b_2, b_3, \ldots, b_{\ell-2}, a_{\ell-1})$ is an even hole
since $a_1b_2, a_2b_3 \notin E(G)$, $a_2 \neq b_2$, and
$b_{\ell-2}a_{\ell-1} \in E(G)$, a contradiction. Since the
domination relation is transitive, every bag $B_i$ contains a
vertex $d_i$ that dominates every other vertex of $B_i$, i.e.,
$d_i$ is a dominant vertex of $B_i$
\end{proof}
\begin{figure}[h]
\centering
\includegraphics[scale=1.5]{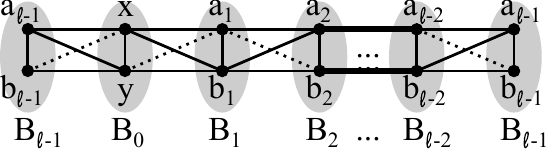}
\caption{The $\ell$-buoy from the proof of Observation~
\ref{obs:buoy-comparable}. Note: we have duplicated bag
$B_{\ell-1}$ for ease of presentation. Also, the bold edges
connecting $a_2$ with $a_{\ell-2}$ and $b_2$ with $b_{\ell-2}$
correspond the to the paths $a_2, \ldots, a_{\ell-2}$ and $b_2,
\ldots, b_{\ell-2}$ respectively.}
\label{fig:comparable}
\end{figure}
From the proof of Observation~\ref{obs:buoy-comparable}, we can
extract a linear-time algorithm to establish the following
observation.
\begin{observation}\label{obs:find-buoy-comparable}
Let $B$ be an odd $\ell$-buoy of a graph $G$ with bags $B_0, \ldots,
B_{\ell-1}$. If there are vertices $x$ and $y$ in
some $B_i$ such that $x$ and $y$ are incomparable in $B$, then $G$
contains an even hole, and this even hole can be found in linear
time. \qed
\end{observation}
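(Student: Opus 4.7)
My plan is to turn the proof of Observation~\ref{obs:buoy-comparable} into an algorithm: each structural object built there (the two witnesses of incomparability, the two skeletons, and the eventual even cycle or $C_4$) can be located by a linear-time scan, and whenever the original proof would derive a contradiction to even-hole-freeness I will instead return the even hole (a $C_4$ counts as an even hole) responsible for it.

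Concretely, assume WLOG that $i = 0$. First, by scanning $N(x) \cap B$ and $N(y) \cap B$ I find vertices $a \in N(x) \setminus N(y)$ and $b \in N(y) \setminus N(x)$; both must lie in $B_1 \cup B_{\ell-1}$, since $B_0$ is a clique with no edges to non-adjacent bags. If $a$ and $b$ lie in the same bag, then $\{x, y, a, b\}$ induces a $C_4$ (the edges $xy, ab$ are forced because the bag is a clique, $xa$ and $yb$ were given, and $xb, ya$ are non-edges by the choice of $a, b$), which I return. Otherwise, after a cyclic renaming I may assume $a \in B_1$ and $b \in B_{\ell-1}$; I then invoke Observation~\ref{obs:find-hole} on each of the edges $xa$ and $yb$. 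Each call either produces an even hole (which I forward) or extends its edge to a skeleton, giving $(x = a_0, a_1 = a, a_2, \ldots, a_{\ell-1})$ and $(y = b_0, b_{\ell-1} = b, b_{\ell-2}, \ldots, b_1)$ respectively.

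The remainder is a sweep over $j = 1, \ldots, \ell - 2$ with two constant-time checks per step. If I encounter an edge $a_j b_{j+1}$, I return the length-$(\ell+1)$ cycle $(y, x, a_1, \ldots, a_j, b_{j+1}, \ldots, b_{\ell-1})$ as an even hole; if $a_j b_{j+1}$ is a non-edge but $a_{j+1} b_j$ is also a non-edge, I return the induced $C_4$ on $\{a_j, a_{j+1}, b_j, b_{j+1}\}$. If the sweep finishes without triggering either event, the proof of Observation~\ref{obs:buoy-comparable} guarantees that the cycle $(x, a_1, a_2, b_2, b_3, \ldots, b_{\ell-2}, a_{\ell-1})$ is an induced even hole of length $\ell + 1$, which I return. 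Every scan and check takes linear time, so the whole algorithm is linear. The main subtlety will be verifying that each returned cycle is actually chordless; this follows from the buoy's absence of edges between non-adjacent bags, the fact that the skeletons are themselves holes, and the observation that once all $a_j b_{j+1}$ are non-edges we must have $a_j \ne b_j$ for every $j \ge 1$ (otherwise $a_j = b_j$ would identify the skeleton edge $b_j b_{j+1}$ with the non-edge $a_j b_{j+1}$).
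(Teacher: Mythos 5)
Your proposal is correct and is essentially the paper's own argument: the paper proves this observation by extracting a linear-time algorithm from the proof of Observation~\ref{obs:buoy-comparable}, and your algorithm is exactly that extraction — locate the incomparability witnesses $a,b$ (returning a $C_4$ if they land in the same bag, which is the algorithmic stand-in for Observation~\ref{obs:buoy-square}), build the two skeletons through $xa$ and $yb$ via Observation~\ref{obs:find-hole}, and sweep for the first crossing edge $a_jb_{j+1}$, a missing pair yielding a $C_4$, or the terminal even hole $(x,a_1,a_2,b_2,\ldots,b_{\ell-2},a_{\ell-1})$. The distinctness and chordlessness checks you flag are handled the same way in the paper ($a_jb_{j+1}\notin E$ forces $a_j\neq b_j$), so no gap.
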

Later (Lemma~\ref{lem:find-buoy-domination}) we will show that
the domination property of Observation~\ref{obs:buoy-comparable}
can be verified in linear time.
\begin{observation}\label{obs:buoy-pan}
Let $G$ be a (pan, even hole)-free graph. Let $B$ be an
$\ell$-buoy of $G$ with bags $B_0, \ldots , B_{\ell-1}$. Let $a$ and $b$
be two vertices in some $B_i$. If $a$ strictly dominates $b$ in
$B_{i+1}$, then $b$ dominates $a$ in $B_{i-1}$.
\end{observation}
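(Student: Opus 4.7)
The plan is to reach a contradiction by producing a pan in $G$. Suppose the conclusion fails: $a$ strictly dominates $b$ in $B_{i+1}$, yet $b$ does not dominate $a$ in $B_{i-1}$. Take $i=0$ without loss of generality. The strict-domination hypothesis furnishes $d \in B_1$ with $da \in E(G)$ and $db \notin E(G)$, while the failure of domination furnishes $c \in B_{\ell-1}$ with $ca \in E(G)$ and $cb \notin E(G)$. Because $\ell \geq 5$, the bags $B_{\ell-1}$ and $B_1$ are non-adjacent, so $cd \notin E(G)$; hence $c, a, d$ is an induced $P_3$ lying in three consecutive bags.

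The next step is to extend this $P_3$ into a skeleton of $B$. Because $G$ is even-hole-free any $\ell$-buoy of $G$ must be odd (otherwise its skeleton, a hole of length $\ell$, would be an even hole), so Corollary~\ref{cor:hole} applies to the path $c, a, d$: it extends to an induced hole $H = p_0 p_1 \cdots p_{\ell-1}$ of length $\ell$ with $p_j \in B_j$, $p_0 = a$, $p_1 = d$, and $p_{\ell-1} = c$.

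To close the argument I would analyze the adjacency of $b$ to the vertices of $H$. Since $b \in B_0$ and the buoy structure forbids edges from $B_0$ to any bag other than $B_{\ell-1}, B_0, B_1$, the only possible neighbors of $b$ on $H$ are $c = p_{\ell-1}$, $a = p_0$, and $d = p_1$. By construction $ba \in E(G)$ (both lie in the clique $B_0$), while $bc, bd \notin E(G)$; thus $a$ is the unique neighbor of $b$ on $H$. Therefore $V(H) \cup \{b\}$ induces a pan with handle $ab$, contradicting the pan-freeness of $G$.

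The main obstacle in this plan is securing the skeleton $H$ through $c, a, d$: one must invoke Corollary~\ref{cor:hole}, whose hypothesis requires the buoy to be odd. This oddness is delivered for free by even-hole-freeness. Everything else is routine: extracting the domination witnesses $c$ and $d$ from the definitions, and reading off the three allowed neighbors of $b$ on $H$ directly from the buoy's sparsity pattern.
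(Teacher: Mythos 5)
Your proof is correct and follows essentially the same route as the paper's: extract the two domination witnesses (one in $B_{i+1}$, one in $B_{i-1}$), extend the resulting induced path through $a$ to a skeleton via Corollary~\ref{cor:hole}, and note that $b$ has exactly one neighbor ($a$) on that skeleton, yielding a pan. Your explicit justification that the buoy must be odd (so that Corollary~\ref{cor:hole} applies) addresses a hypothesis the paper leaves implicit, and is sound since a skeleton always exists in a $C_4$-free buoy by Observation~\ref{obs:buoy-square}.
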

\begin{proof}
Let $a$ and $b$ be two vertices in $B_i$. Suppose $a$ strictly dominates
$b$ in $B_{i+1}$, but $b$ does not dominate $a$ in $B_{i-1}$.
Thus, there are vertices $c \in B_{i+1}, d \in B_{i-1}$ such that
$ac, ad \in E(G)$ and $bc, bd \not\in E(G)$. By
Corollary~\ref{cor:hole}, there is a skeleton $C$ containing the
vertices $d,a,c$. Now $b$ together with $C$ induces a pan in
$G$.
\end{proof}

From the proof of Observation~\ref{obs:buoy-pan}, we can extract
an algorithm to establish the following observation.
\begin{observation}\label{obs:find-buoy-pan}
Let $G$ be a graph. Let $B$ be an $\ell$-buoy of $G$ with bags
$B_0, \ldots , B_{\ell-1}$. Let $a$ and $b$ be two vertices in some
$B_i$. If $a$ strictly dominates $b$ in $B_{i+1}$, and $b$ does
not dominate $a$ in $B_{i-1}$, then $G$ contains a pan or an even
hole, and such an induced subgraph can be found in linear time.
\hfill $\Box$
\end{observation}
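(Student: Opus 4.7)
The plan is to render the proof of Observation~\ref{obs:buoy-pan} algorithmic. The existence argument there produces a pan from three ingredients: witness vertices $c \in B_{i+1}$ and $d \in B_{i-1}$ certifying the two (failure of) domination hypotheses; a skeleton $C$ of $B$ threading through $d, a, c$; and the observation that $b$ attaches to $C$ only at $a$. Each of these will be computed in linear time.

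For the witnesses, I would first precompute a boolean indicator array for $N(b)$ in $O(d(b))$ time, then scan $N(a) \cap B_{i+1}$ against this array to locate some $c$ with $ac \in E(G)$ and $bc \notin E(G)$ (such a $c$ exists because $a$ strictly dominates $b$ in $B_{i+1}$), and similarly scan $N(a) \cap B_{i-1}$ to locate $d$ with $ad \in E(G)$ and $bd \notin E(G)$ (such a $d$ exists because $b$ does not dominate $a$ in $B_{i-1}$). Each scan runs in $O(d(a)+d(b))$ time. Since the buoy has $\ell \geq 5$ bags, $B_{i-1}$ and $B_{i+1}$ are non-adjacent, so $dc \notin E(G)$ and $d, a, c$ is an induced $P_3$. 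I would then feed this three-vertex path into the algorithm of Observation~\ref{obs:find-hole} with $k = 3$ and $z = \min\{2, \ell-2\} = 2$; in linear time it returns either an even hole (which I report and halt) or a skeleton $C$ of $B$ whose vertices include $d, a, c$.

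Finally, I would certify that $V(C) \cup \{b\}$ induces a pan. Because $b \in B_i$ has no edges outside $B_{i-1} \cup B_i \cup B_{i+1}$, its only possible neighbors on $C$ are the members of $C$ lying in those three bags, namely $d$, $a$, and $c$. By the choice of the witnesses $bc, bd \notin E(G)$, while $ab \in E(G)$ since $B_i$ is a clique, so $V(C) \cup \{b\}$ induces a pan with handle $ab$. The only point to verify is that every step fits within an $O(n+m)$ budget; this is immediate from Observation~\ref{obs:find-hole} together with the standard adjacency-indicator trick, so the technical heavy lifting has already been discharged by the preceding observations and no new obstacle arises.
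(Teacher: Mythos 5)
Your proposal is correct and takes essentially the same route as the paper: the paper offers no explicit proof of Observation~\ref{obs:find-buoy-pan}, asserting only that an algorithm can be extracted from the proof of Observation~\ref{obs:buoy-pan}, and your write-up is precisely that extraction (locate the witnesses $c,d$, thread a skeleton through the induced path $d,a,c$ via the algorithm of Observation~\ref{obs:find-hole}, and observe that $b$ meets the skeleton only at $a$, giving a pan with handle $ab$). The one caveat, which you inherit directly from the paper since the proof of Observation~\ref{obs:buoy-pan} likewise rests on Corollary~\ref{cor:hole}, is that the skeleton-finding step is formally stated only for odd buoys; for an even buoy the same machinery simply returns an even hole, so nothing is lost.
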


Observations~\ref{obs:buoy-pan} and \ref{obs:buoy-comparable} tell
us that the structure of a buoy in a (pan, even hole)-free graph
is very restricted (see Corollary \ref{cor:buoy-clique} below).
Additionally, this structure allows us to prove that a buoy in a
(pan, even hole)-free graph is a unit circular-arc graph (see
Theorem \ref{thm:buoy-unit-circ} below).
\begin{corollary}\label{cor:buoy-clique}
Let $G$ be a (pan, even hole)-free graph and let $B$ be an
$\ell$-buoy of $G$ with bags $B_0, \ldots, B_{\ell-1}$. For every
$i \in \{0, \ldots, \ell-1\}$ either $B_{i-1} \cup B_i$ or $B_i
\cup B_{i+1}$ is a clique.
\end{corollary}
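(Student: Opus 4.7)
The plan is to use Observation~\ref{obs:buoy-comparable} to produce, in each bag, a dominant vertex that is complete to the two neighboring bags, then to suppose that both $B_{i-1}\cup B_i$ and $B_i\cup B_{i+1}$ admit a non-edge, and to derive a contradiction via Observation~\ref{obs:buoy-pan} together with comparability of the bag. Because $G$ is even-hole-free, any $\ell$-buoy in $G$ must be odd (otherwise a skeleton, which has no chords since bags are cliques and non-consecutive bags are pairwise non-adjacent, would be an even hole), so Observation~\ref{obs:buoy-comparable} supplies a dominant vertex $d_i$ of each $B_i$.

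A key preliminary step I would establish is that each $d_i$ is both $B_{i-1}$-complete and $B_{i+1}$-complete. For any $v \in B_{i+1}$ the buoy definition yields a neighbor $w \in B_i$ of $v$, and the domination relation $N(w)\setminus\{d_i\} \subseteq N(d_i)$ gives $v \in N(d_i)$; the side $B_{i-1}$ is symmetric.

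For the main argument, suppose for contradiction that neither $B_{i-1}\cup B_i$ nor $B_i\cup B_{i+1}$ is a clique. Then there exist $x \in B_i$ with a non-neighbor $y \in B_{i-1}$, and $x' \in B_i$ with a non-neighbor $z \in B_{i+1}$; by the preliminary step both $x, x' \neq d_i$. If $x = x'$, then $z$ witnesses that $d_i$ strictly dominates $x$ in $B_{i+1}$, so Observation~\ref{obs:buoy-pan} forces $x$ to dominate $d_i$ in $B_{i-1}$; but $d_i$ is $B_{i-1}$-complete, making $x$ also $B_{i-1}$-complete, contradicting $xy \notin E(G)$. If $x \neq x'$, invoking Observation~\ref{obs:buoy-pan} once in each direction (using the cyclic symmetry of the buoy) shows that $x$ is $B_{i+1}$-complete while $x'$ is $B_{i-1}$-complete; Observation~\ref{obs:buoy-comparable} then makes $x, x'$ comparable in $B$, and whichever of them dominates the other inherits a forbidden non-edge ($xy$ if $x$ dominates $x'$, or $x'z$ if $x'$ dominates $x$), a contradiction.

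The main subtlety is getting the directions right when invoking Observation~\ref{obs:buoy-pan}, which is asymmetric in $B_{i-1}$ versus $B_{i+1}$: each witness ($y$ or $z$) must be paired with the correct side, and the passage from ``strict domination in one neighboring bag'' to ``complete domination in the opposite neighboring bag'' must be accompanied by the preliminary completeness of $d_i$. Once this bookkeeping is done, the remainder reduces to a straightforward comparison of neighborhoods using comparability.
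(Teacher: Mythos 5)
Your proof is correct and follows essentially the same route as the paper: both hinge on the dominant vertex supplied by Observation~\ref{obs:buoy-comparable} (which is indeed adjacent to all of $B_{i-1}\cup B_{i+1}$) and on transferring a strict domination in one neighbouring bag to completeness in the other via Observation~\ref{obs:buoy-pan}; the paper merely organizes the final step through the least element of the domination order on $B_i$ instead of your two-case comparison of $x$ and $x'$. (Like the paper, you must restrict to odd buoys to invoke Observation~\ref{obs:buoy-comparable}; your parenthetical justification is the right instinct, though the existence of a skeleton in an \emph{even} buoy is not actually established anywhere in the paper, so that side remark would need a separate short argument.)
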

\begin{proof}
Consider a bag $B_i$.  By Observation~\ref{obs:buoy-comparable},
we can order the vertices of $B_i$ as $b_{i,1}, b_{i,2}, \ldots,
b_{i,t}$ such that $b_{i,k}$ dominates $b_{i,j}$ whenever $k > j$.
In particular, the vertex $b_{i,t}$ is a dominant vertex of $B_i$,
and $b_{i,t}$ is adjacent to all of $B_{i-1} \cup B_{i+1}$. We may
suppose $B_i \cup B_{i+1}$ is not a clique, for otherwise we are
done. Consider a vertex $b_{i,r}$ that is not adjacent to some
vertex $u$ in $ B_{i+1}$. Thus, $b_{i,t}$ strictly dominates
$b_{i,r}$, and therefore $b_{i,1}$. By
Observation~\ref{obs:buoy-pan}, $b_{i,1}$ dominates $b_{i,t}$ in
$B_{i-1}$, and so $b_{i,1}$ is adjacent to all of $B_{i-1}$. By
the domination order of $B_i$, every other vertex in $B_i$ is
adjacent to all of $B_{i-1}$. Thus $B_{i-1} \cup B_i$ induces a
clique.
\end{proof}

\begin{theorem}\label{thm:buoy-unit-circ}
If $B$ is an $\ell$-buoy in a (pan, even hole)-free graph $G$,
then $B$ is a unit circular-arc graph.
\end{theorem}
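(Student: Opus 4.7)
The plan is to construct an explicit unit circular-arc representation of $B$ using the structure of its bags.

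First I would extract the key bag-level dichotomy. By Corollary~\ref{cor:buoy-clique}, for every $i$ at least one of $B_{i-1}\cup B_i$ and $B_i\cup B_{i+1}$ is a clique; accordingly call $B_i$ of \emph{type R} (former) or \emph{type L} (latter), possibly both. If $B_i$ is of type R then every vertex of $B_i$ has $B_{i-1}$ as its complete $L$-neighborhood, so all variation between vertices of $B_i$ happens on the $R$-side; by Observations~\ref{obs:buoy-comparable} and~\ref{obs:buoy-pan}, their $R$-neighborhoods form a nested chain under inclusion. Symmetrically for type L. Consequently, on each edge $(B_i,B_{i+1})$ of the buoy cycle the bipartite adjacency is a \emph{half-graph}: either it is complete (when $B_i$ is type L or $B_{i+1}$ is type R) or both sides are linearly ordered by their respective neighborhoods and adjacency is given by a single threshold on a common linear order of $B_i \cup B_{i+1}$.

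Second I would realize these half-graphs geometrically. On a circle of circumference $\ell$ with arc length $1+\delta$ for some fixed $\delta\in(0,1/3)$, assign each $v\in B_i$ an offset $\gamma_v\in[-\delta,\delta]$ and let the arc of $v$ start at $i-\frac{1}{2}+\gamma_v$ modulo $\ell$. A direct computation shows that two arcs whose starting points lie at cyclic distance $d$ intersect iff $d<1+\delta$: this gives intra-bag adjacency (distance at most $2\delta<1+\delta$), no adjacency between bags at cyclic distance $\ge2$ (distance at least $2-2\delta>1+\delta$), and, for consecutive bags, $v\in B_i$ and $w\in B_{i+1}$ are adjacent iff $\gamma_w-\gamma_v<\delta$. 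It therefore suffices to choose the $\gamma$-values so that the threshold $\gamma_w<\gamma_v+\delta$ realizes the half-graph on each $(B_i,B_{i+1})$, which is a routine rescaling of the combined linear order of each half-graph into $[-\delta,\delta]$.

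The main obstacle is the cross-bag consistency of the offsets, because each $\gamma_v$ with $v\in B_i$ participates simultaneously in the orders used for the two pairs $(B_{i-1},B_i)$ and $(B_i,B_{i+1})$. This is exactly where the type dichotomy is essential, since it guarantees that $B_i$ contributes variation to at most one of these two pairs; for the other pair, $B_i$ appears as a single constant-neighborhood block and imposes no order on its own vertices. The $\gamma$-values on $B_i$ are therefore dictated (up to rescaling) by a single combined order, and the offsets on all bags can be chosen simultaneously without conflict, yielding the required unit circular-arc representation of $B$.
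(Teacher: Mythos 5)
Your proposal is correct in substance and rests on the same two pillars as the paper's proof: Corollary~\ref{cor:buoy-clique} (consecutive non-clique bag-pairs cannot occur, so each bag ``varies'' toward at most one neighbour) and the nested-neighborhood structure of each bag (which, for the bipartite graph between consecutive bags, is really Observation~\ref{obs:buoy-square} rather than Observations~\ref{obs:buoy-comparable} and~\ref{obs:buoy-pan}, though all are available). The packaging differs: the paper keeps arcs of length $2+\epsilon$ and realizes the half-graphs by copying the division points of the segment $A_i+$ into the neighbouring segments $A_{i-1}+$ and $A_{i+1}+$, whereas you use arcs of length $1+\delta$ with a per-vertex offset $\gamma_v$ and encode each half-graph as the threshold condition $\gamma_w<\gamma_v+\delta$. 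Your formulation is arguably cleaner, but it has one under-justified step. For a \emph{complete} pair $(B_j,B_{j+1})$ the geometric model still imposes the quantitative constraint $\max_{w\in B_{j+1}}\gamma_w<\min_{v\in B_j}\gamma_v+\delta$; this is not merely the absence of an order constraint, and with offsets allowed to range over all of $[-\delta,\delta]$ it can fail (e.g.\ $\gamma_v=-\delta$, $\gamma_w=\delta$). The fix is to confine the offsets: give every ``inactive'' bag offset $0$, and for each non-clique pair $(B_i,B_{i+1})$ place $B_i$'s offsets in $[-3\delta/4,-\delta/4]$ and $B_{i+1}$'s in $[\delta/4,3\delta/4]$, so that the differences across that pair straddle the threshold $\delta$ while the difference across any complete pair is at most $3\delta/4<\delta$ --- the last bound again uses Corollary~\ref{cor:buoy-clique} to rule out a complete pair whose right bag is ``high''. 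With that paragraph added, your argument is a complete and valid alternative write-up of essentially the paper's construction.
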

\begin{proof}
This proof is an easy adaptation of the construction from the
proof of Theorem~\ref{thm:buoy-square-circarc}.
By Corollary~\ref{cor:buoy-clique}, when $B_i \cup B_{i+1}$ is
not a clique, both $B_{i-1} \cup B_i$ and $B_{i+1}
\cup B_{i+2}$ must be cliques. In particular, when $B_i \cup B_{i+1}$
is not a clique we use nearly the same construction as
before and exploit the fact that $B_{i-1} \cup B_i$ and $B_{i+1}
\cup B_{i+2}$ are cliques to ensure all of our arcs have the
same length. Figure~\ref{fig:unit_circular_arc} depicts our
construction and we will refer to it as we describe the
details.

\begin{figure}[h]
\centering
\includegraphics[width=\linewidth]{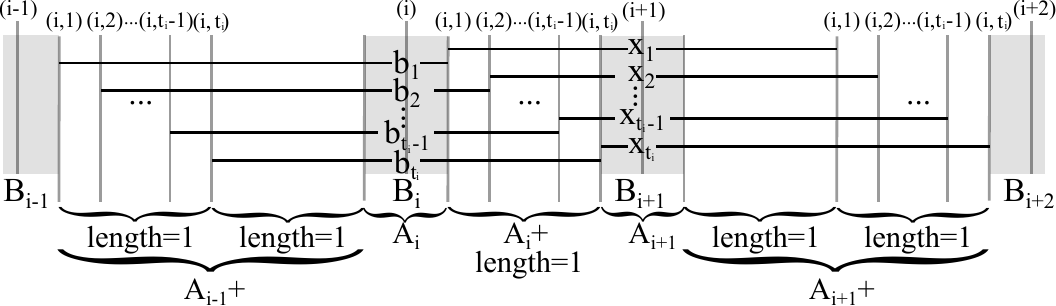}
\caption{The unit circular-arc construction from the proof of
Theorem~\ref{thm:buoy-unit-circ} for the case when $B_i \cup
B_{i+1}$ is not a clique.} \label{fig:unit_circular_arc}
\end{figure}

The arcs we construct will have length $2+\epsilon$. As in the
previous case we first partition the circle into arcs. We then
use these arcs to place the endpoints of the arcs for the
vertices of $B$.

We partition the circle into arcs as follows. For each bag
$B_i$ we allocate an arc $A_i$ of length $\epsilon$. The
midpoint of $A_i$ will be the point $(i)$ from
the proof of Theorem~\ref{thm:buoy-square-circarc}.
For each $i \in \{0, \ldots, \ell-1\}$ we allocate an arc $A_i+$ such that:
\begin{itemize}
\item When $B_{i} \cup B_{i+1}$ is a clique, the length of $A_i+$ is two.
\item When $B_{i} \cup B_{i+1}$ is not a clique, the length of $A_i+$ is one.
\end{itemize}
These arcs are arranged as $A_0,A_0+,A_1,A_1+, \ldots,
A_{\ell-1}, A_{\ell-1}+$ around the circle so that the circle
is covered and consecutive arcs intersect in precisely one
point.

As we mentioned, when $B_i \cup B_{i+1}$ is not a clique we
use the previous construction subject to the constraint that
the length of the arc from $(i,1)$ to $(i,t_i)$ is one (note:
$t_i > 1$ since $B_i \cup B_{i+1}$ is not a clique).

In the first half of $A_{i-1}+$ we insert a copy of the points
of $A_i+$.
In particular, the left endpoint of $A_{i-1}+$ is a copy of
$(i,1)$ and this is followed by $(i,2), \ldots, (i,t_i-1)$,
$(i,t_i)$ with precisely the same spacing as in $A_{i}+$.
With these points we can now create the arcs for the vertices
in $B_i$. Specifically, using the same partition $B_{i,j}$ as
before, each $b_j \in B_{i,j}$ is represented by the arc
from the copy of $(i,j)$ in $A_{i-1}+$ to the original $(i,j)$
in $A_{i}+$. It is important to note that each such arc
includes the midpoint of $A_{i-1}+$, has length $2+\epsilon$,
and includes the same points between $(i)$ and $(i+1)$ as in
our previous construction.

We similarly, insert a copy of the points of $A_i+$ in the
second half of $A_{i+1}+$. Specifically, the midpoint of
$A_{i+1}+$ is a copy of $(i,1)$, which is followed by $(i,2),
\ldots, (i,t_i-1)$, $(i,t_i)$ with precisely the same spacing
as in $A_{i}+$.
With these points we can now create the arcs for the vertices
in $B_{i+1}$. Specifically, using the partition $X_{i+1,j}$
as before, each $x_j \in X_{i+1,j}$ is represented by the arc
from the original $(i,j)$ in $A_{i}+$ to the copy of $(i,j)$
in $A_{i+1}+$. It is important to note that each such arc
includes the midpoint of $A_{i+1}+$, has length $2+\epsilon$,
and includes the same points between $(i)$ and $(i+1)$ as in
our previous construction.

We need only consider one special case to complete our
construction, namely, when both $B_i \cup B_{i+1}$ and
$B_{i} \cup B_{i-1}$ are cliques. In this case we simply map
each vertex of $B_i$ to an arc from the midpoint of $A_{i-1}+$
to the midpoint of $A_{i}+$. This again provides arcs of length
$2+\epsilon$.

Now, when $B_i \cup B_{i+1}$ is not a clique, this construction
properly represents the edges between $B_i$ and $B_{i+1}$ since we
simply have the same representation as before. Additionally, when
$B_i \cup B_{i+1}$ is a clique, the arcs of $B_i$ and $B_{i+1}$
always include the midpoint of $A_i+$. This again properly
represents the edges between $B_i$ and $B_{i+1}$. Thus, we have
produced a unit circular-arc representation of $B$.
\end{proof}
\subsection{Neighbors of Buoys}
\label{sec:buoy-neighbor}
We now generalize the results of Observation 3.1 to buoys. We
examine the different \emph{types} of adjacencies between vertices
outside a buoy $B$ in a (pan, even hole)-free graph and vertices
inside $B$. Let $x$ be a vertex of $G$ outside of $B$. We say $x$
is of type $t$ with respect to $B$ if $x$ has neighbors in exactly
$t$ distinct bags $B_i$. It is easy to see that in a pan-free
graph, $x$ cannot be of type 1. The following lemma describes
possible adjacencies between $x$ and $B$.
\begin{lemma}\label{lem:buoy-outside}
Let $G$ be a (pan, even hole)-free graph. Let $B$ be an odd $\ell$-buoy
of $G$ with bags $B_0, \ldots , B_{\ell-1}$ and let $x$ be a vertex of $G - B$ that has some neighbors in
$B$.

\begin{enumerate}[(i)]
\item If $N(x) \cap B_i \not= \emptyset$, then $N(x)\cap
B_{i-1} \not= \emptyset$, or  $N(x)\cap B_{i+1} \not= \emptyset$.
\item If $i$ and $j$ are indices such that $j \in \{2,$ $\ldots,$ $\ell
-1\}$, $x$ has a neighbor in each of $B_i$ and $B_{i+j}$, and $x$
has no neighbors in $B_{i+1} \cup$ $\ldots \cup$ $B_{i+j-1}$, then
$j$ is odd (i.e., the number of bags $B_i, \ldots, B_{i+j}$ is
even).
\item If $x$ has neighbors in $B_{i-1}$ and neighbors in
$B_{i+1}$, then $x$ is $B_i$-complete.
\item Vertex $x$ is of types 2, 3, or $\ell$. If $x$ is of type
$\ell$, then $x$ is $B$-complete.
\item If $x$ is of type 3, then $x$ has neighbors in three
consecutive $B_i$s.
\item Suppose $x$ is of type 2 and has neighbors in $B_i$ and in
$B_{i+1}$. Then $\{x\} \cup B_i \cup B_{i+1}$ is a clique.
\end{enumerate}
\end{lemma}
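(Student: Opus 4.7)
The plan is to prove each part by exhibiting a pan or an even hole (often a small $C_4$) in $G$ whenever the stated conclusion fails. The recurring ingredients are: any prescribed consecutive-bag path extends to a full skeleton of $B$ (Corollary~\ref{cor:hole}); each bag $B_k$ has a dominant vertex $d_k$ (Observation~\ref{obs:buoy-comparable}); and this $d_k$ is adjacent to every vertex of $B_{k-1}\cup B_{k+1}$ (by $B$-dominance applied to any such vertex using its buoy-guaranteed neighbor in $B_k$).

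For (i), take a skeleton of $B$ through a neighbor $a$ of $x$ in $B_i$ and apply Observation~\ref{obs:neighbors}(i): pan-freeness forces $x$ to have a skeleton-neighbor of $a$, which lies in $B_{i-1}\cup B_{i+1}$. Part~(v) is then immediate, since (i) makes the set of bags meeting $N(x)$ a union of cyclically consecutive runs, and type~$3$ is exactly one such run of length~$3$. For (iii), given $a\in N(x)\cap B_{i-1}$, $b\in N(x)\cap B_{i+1}$, and a hypothetical $w\in B_i\setminus N(x)$, I would secure vertices $a'\in N(x)\cap N(w)\cap B_{i-1}$ and $b'\in N(x)\cap N(w)\cap B_{i+1}$ via the $B$-comparability of vertices inside those bags (Observation~\ref{obs:buoy-comparable}), ruling out failure by a pan-construction through Observation~\ref{obs:buoy-pan} whenever the naive choice is blocked; then $\{x, a', w, b'\}$ induces a $C_4$, i.e., an even hole, a contradiction.

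For (ii), build the induced path $a, d_{i+1}, \ldots, d_{i+j-1}, b$ of length $j$ through the intermediate dominants; consecutive vertices are adjacent by the preliminary fact, and the only possible chord is $ab$, which is absent whenever $B_i$ and $B_{i+j}$ are non-adjacent bags (i.e., $j\le \ell-2$). Closing with $x$ yields an induced cycle of length $j+2$ (no $x$-chord, since $x$ has no neighbor in the interior bags), so even-hole-freeness forces $j+2$ to be odd. For (iv), I would, for each filled bag, select some neighbor of $x$ in it to sit on a skeleton (extended via Corollary~\ref{cor:hole}); type~$4$ is then ruled out by Observation~\ref{obs:neighbors}(iii) applied to such a skeleton, while type $t\in\{5,\ldots,\ell-1\}$ contradicts Observation~\ref{obs:neighbors}(iv), which forces $x$ complete to the skeleton and hence $t=\ell$. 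Iterating (iii) around the buoy then yields $B$-completeness in the type-$\ell$ case.

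Part~(vi) is the most delicate and, I expect, the main source of work. Assuming $\{x\}\cup B_i\cup B_{i+1}$ is not a clique, either some $u\in B_i\cup B_{i+1}$ is non-adjacent to $x$, or $B_i\cup B_{i+1}$ itself is not a clique. In the first case, a skeleton passing through $u$ together with a vertex of $N(x)\cap N(u)$ in an adjacent bag leaves $x$ with exactly one skeleton-neighbor, producing a pan. In the second case, Corollary~\ref{cor:buoy-clique} applied at $B_i$ and $B_{i+1}$ forces $B_{i-1}\cup B_i$ and $B_{i+1}\cup B_{i+2}$ to be cliques, and a non-edge $vw$ with $v\in N(x)\cap B_i$ and $w\in N(x)\cap B_{i+1}$, together with the dominants of the remaining $\ell-2$ bags and the vertex $x$, manufactures an induced cycle of length $\ell+1$ (an even hole). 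The main obstacle throughout is ensuring chordlessness of the constructed cycles; the bag-distance structure and the dominant vertices are what force it, and the delicate degeneracy where the two ``endpoint'' bags of (ii) are cyclically adjacent (the case $j=\ell-1$) is precisely what (vi) is designed to handle.
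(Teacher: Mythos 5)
Your overall strategy (derive a pan or an even hole from any failure, using skeletons obtained via Corollary~\ref{cor:hole} and the dominant vertices of Observation~\ref{obs:buoy-comparable}) is the paper's strategy, and your treatments of (i), (ii) and (v) match the paper's essentially verbatim. But there is a genuine gap in part (iii), and your plan there would fail. You propose to always find common neighbors $a'\in N(x)\cap N(w)\cap B_{i-1}$ and $b'\in N(x)\cap N(w)\cap B_{i+1}$ and close a $C_4$. Such common neighbors need not exist: Observation~\ref{obs:buoy-comparable} and Observation~\ref{obs:buoy-pan} only constrain adjacencies \emph{inside} the buoy, so the dominant vertex of $B_{i-1}$ is adjacent to $w$ but tells you nothing about adjacency to the outside vertex $x$, and there is no mechanism to force $N(x)$ and $N(w)$ to meet in $B_{i\pm1}$. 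The paper splits into three cases according to which sides carry a common neighbor: common neighbors on both sides give your $C_4$; no common neighbor on either side gives a $C_6$ (through $x,a_{i-1},b_{i-1},w,b_{i+1},a_{i+1}$), not a $C_4$; and the mixed case (common neighbor on exactly one side) is the genuinely hard one, requiring first that $x$ be adjacent to $d_i$ and $d_{i-2}$, then a skeleton through $d_{i-2}a_{i-1}d_ib_{i+1}$, a count of $x$'s neighbors on it via Observation~\ref{obs:neighbors}(v), and a swap of $b_{i+1}$ for $a_{i+1}$ to reach a contradiction. Your proposal contains no substitute for these two cases.

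Two smaller issues. In (iv) you want a single skeleton passing through a chosen neighbor of $x$ in every filled bag, but Corollary~\ref{cor:hole} only extends a \emph{path} lying in \emph{consecutive} bags; your chosen neighbors need not be pairwise adjacent and the filled bags need not be consecutive, so the skeleton you invoke is not available (the paper instead closes the ``gap'' runs with dominant vertices and produces an even hole or a pan directly, using (ii) for the parity count). In (vi), your first case presumes a vertex of $N(x)\cap N(u)$ in a bag adjacent to $u$'s, which again need not exist when $u\notin N(x)$; the correct move (as in the paper) is to run the skeleton through a known neighbor $a_i$ of $x$ and an arbitrary $B_{i+1}$-neighbor of $a_i$, so that $x$ is guaranteed at least one and, if the claim fails, exactly one neighbor on the skeleton, and then to bootstrap via $d_{i+1}$ and $d_i$. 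These two can be repaired along the lines indicated, but (iii) needs a substantially different argument from the one you sketched.
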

\begin{proof}
Let $B_i$ be a fixed bag. Let $d_t$ be a dominant vertex of $B_t$,
$0 \leq t \leq \ell -1$. The vertices $d_t$ exist for all $t$ by
Observation \ref{obs:buoy-comparable}.

\textit{Proof of (i)}. Suppose $x$ is adjacent to some $b_i \in
B_i$ and is $(B_{i-1} \cup B_{i+1})$-null. Let $b_{i-1}$
be a neighbor of $b_i$ in $B_{i-1}$ and let $b_{i+1}$
be a neighbor of $b_i$ in $B_{i+1}$. By Corollary~\ref{cor:hole}, there is a
skeleton $C$ containing $b_{i-1}, b_i, b_{i+1}$. But then part (i)
of Observation~\ref{obs:neighbors} is contradicted.  \qed

\textit{Proof of (ii)}. Suppose (ii) is false. Let $a$ and $a'$ be
neighbors of $x$ in $B_i$ and $B_{i+j}$ respectively. Since $d_t$ is
a dominant vertex of $B_t$ for all $t$,
$\{x, a,$ $d_{i+1},$ $\ldots,$ $d_{i+j-1},$ $a'\}$ induces an even hole.
\qed

\textit{Proof of (iii)}. Suppose there is $b_i \in B_i$ which is
not a neighbor of $x$. We will distinguish among three cases: (1)
$b_i$ and $x$ have common neighbors $a_{i-1} \in B_{i-1}$ and
$a_{i+1} \in B_{i+1}$; (2)  $b_i$ and $x$ have no common neighbors
in $B_{i-1} \cup B_{i+1}$; and (3)  $b_i$ and $x$ have a common
neighbor $a_{i-1} \in B_{i-1}$, but no common neighbor in
$B_{i+1}$. In case (1), $\{x, a_{i-1}, b_i, a_{i+1}\}$ induces a
$C_4$. For case (2), let $a_{i-1}$ be a neighbor of $x$ in
$B_{i-1}$ and let $a_{i+1}$ be a neighbor of $x$ in $B_{i+1}$.
Note that $b_i$ must have neighbors $b_{i-1}$ in $B_{i-1}$  and
$b_{i+1}$ in $B_{i+1}$. Thus, in
this case, $\{x, a_{i-1}, b_{i-1}, b_i, b_{i+1}, a_{i+1}\}$
induces a $C_6$.

Now we handle case (3). Let $a_{i+1}$ be a neighbor of $x$ in
$B_{i+1}$ and $b_{i+1}$ be a neighbor of $b_i$ in $B_{i+1}$. The
dominant vertex $d_i$ of $B_i$ is adjacent to both $a_{i-1}$ and
$a_{i+1}$. Thus, $d_i$ is adjacent to $x$, for otherwise $\{x,
a_{i-1}, d_i, a_{i+1} \}$ induces a $C_4$.

Suppose $x$ is not adjacent to $d_{i-2}$. By
Corollary~\ref{cor:hole}, there is a skeleton $C$ containing
$d_{i-2}, a_{i-1}, b_i$. But then $x$ and $C$ contradict
Observation~\ref{obs:neighbors} (i). Thus $x$ is  adjacent to
$d_{i-2}$. Let $P$ be the path $d_{i-2} a_{i-1} d_i b_{i+1}$.
Corollary~\ref{cor:hole} implies there is a skeleton $C$
containing $P$. Since $d_{i+2}$ is a dominant vertex of $B_{i+2}$,
we may assume $d_{i+2} \in C$ ($d_{i+2}$ may replace the vertex of
$C \cap B_{i+2}$). Since $x$ has at least three neighbors and one
non-neighbor ($b_{i+1}$) on $C$, Observation~\ref{obs:neighbors} (v)
implies that $x$ has exactly three neighbors on $C$. In
particular, we have $x d_{i+2} \not\in E(G)$. Let $C' = (C -
\{b_{i+1}\}) \cup \{a_{i+1}\}$. Then $C'$ is a hole. But now, $x$
has at least four neighbors and one non-neighbor ($d_{i+2}$) on
$C'$, a contradiction to Observation~\ref{obs:neighbors} (v). \qed

\textit{Proof of (iv)}. Suppose $x$ is of a type different from 2,
3, or $\ell$. There are indices $i,j$ such that $x$ has neighbors
in each of $B_i$ and $B_j$ and no neighbors in $B_{i+1}, \ldots,
B_{j-1}$ and $ |j - i|  \not\equiv 1$ $mod \;  \ell$. By (i), $x$
has neighbors in $B_{i-1}$ and in $B_{j+1}$. By (ii), the number
of sets $B_i, B_{i+1}, \ldots , B_j$ is even. So the number of
sets $B_{j+1} , B_{j+2}, \ldots, B_{i-1}$ is odd. Let $b_t$ be a
neighbor (if one exists) of $x$ in $B_t$, for all $t$. Consider
the path $P= b_{j+1} d_{j+2} \ldots d_{i-2} b_{i-1}$. Vertex $x$
must be adjacent to an interior vertex $p$ of this path, for
otherwise $P$ and $x$ form an even hole. But now there is a pan
formed by the vertices $p, x, b_{i}, d_{i+1}, \ldots, d_{j-1},
b_j$.

If $x$ is of type $\ell$, then by (iii), $x$ is $B_i$-complete for
all $i$. \qed

\textit{Proof of (v)} Let $x$ be of type 3. Assume $x$ has a
neighbor in $B_i$ for some $i$. By (i) we may assume $x$ has a
neighbor in $B_{i+1}$. Let $B_j$ be the third bag such that $x$
has neighbors in $B_j$. If $j \notin \{i-1, i+2\}$, then (i) is
contradicted. \qed

\textit{Proof of (vi)}. Suppose $x$ is of type 2 and has neighbors
in $B_i$ and $B_{i+1}$ and let $a_i \in B_i$ and $a_{i+1} \in
B_{i+1}$ be neighbors of $x$. If $a_i$ and $a_{i+1}$ are not
adjacent, then  $\{a_i, x,$ $a_{i+1},$ $d_{i+2},$ $\ldots,$
$d_{i-1}\}$ induces an even hole. Thus, the neighbors of $x$ in
$B$ form a clique.

We now show that $x$ is adjacent to every neighbor $b_{i+1} \in
B_{i+1}$ of $a_i$. Suppose $x$ is not adjacent to $b_{i+1}$.
Consider the skeleton $C$ formed by the vertices $a_i, b_{i+1},
d_{i+2}, d_{i+3}, \ldots, d_{i-1}$. Vertex $x$ has only one
neighbor on this hole, a contradiction to part (v) of
Observation~\ref{obs:neighbors}. By symmetry, $x$ is adjacent to
every neighbor $b_i \in B_i$ of $a_{i+1}$.

From the previous paragraph $x$ must be adjacent to both $d_i$ and
$d_{i+1}$ since they are neighbors  of $a_{i+1}$ and $a_i$
respectively. Thus, $x$ is adjacent to all of $B_i \cup B_{i+1}$
and as such $\{x\} \cup B_i \cup B_{i+1}$ form a clique.
\end{proof}
From the proof of Lemma~\ref{lem:buoy-outside}, we can extract a
linear-time algorithm to establish the following lemma.
\begin{lemma}\label{lem:find-buoy-outside}
Let $G$ be a graph. Let $B$ be an odd $\ell$-buoy of $G$ with bags
$B_0, \ldots , B_{\ell-1}$, and let $x$ be a
vertex of $G - B$ that has some neighbors in $B$. If $x$ fails to
satisfy (i)--(vi) of Lemma~\ref{lem:buoy-outside}, then $G$
contains a pan or an even hole, and such an induced subgraph can
be found in linear time. \hfill $\Box$
\end{lemma}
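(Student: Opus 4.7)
The plan is to walk through the six cases in the existential proof of Lemma~\ref{lem:buoy-outside} and verify that, in each case, the specific witness constructed there (a pan, a $C_4$, a $C_6$, or a longer even hole) can be produced in linear time from data that itself can be precomputed in linear time. So the first step is the common preprocessing: compute $N(x)\cap B_i$ for every $i$ (one scan of $x$'s adjacency list), record the nonempty indices, and obtain, for every $i$, a dominant vertex $d_i$ of $B_i$. The existence and linear-time computation of the $d_i$'s is exactly the content of (the forthcoming) Lemma~\ref{lem:find-buoy-domination}; if that procedure fails, it itself exhibits an even hole by Observation~\ref{obs:find-buoy-comparable}. From the nonempty-bag pattern of $x$ one can decide in linear time which of (i)--(vi) fails, and then branch.

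For (i), if $x$ has a neighbor $b_i\in B_i$ but is $(B_{i-1}\cup B_{i+1})$-null, we pick any $b_{i-1}\in N(b_i)\cap B_{i-1}$ and $b_{i+1}\in N(b_i)\cap B_{i+1}$ (these exist by the buoy definition and are found in linear time), invoke Observation~\ref{obs:find-hole} on the path $b_{i-1},b_i,b_{i+1}$ to obtain either an even hole or a skeleton $C$ through those three vertices, and then apply Observation~\ref{obs:find-neighbors} to $x$ and $C$, which returns a pan or even hole in linear time. For (ii), the witness is the explicit even hole $\{x,a,d_{i+1},\dots,d_{i+j-1},a'\}$, written down directly from the $d_t$'s. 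For (iv), we locate the gap indices $i,j$ from the precomputed neighborhood pattern; part (i) has already been handled, so we may assume $x$ also has neighbors in $B_{i-1}$ and $B_{j+1}$; we then walk the path $P=b_{j+1}d_{j+2}\cdots d_{i-2}b_{i-1}$, scanning for the first interior vertex $p$ adjacent to $x$---if none exists, $P\cup\{x\}$ is an even hole; otherwise $\{p,x,b_i,d_{i+1},\dots,d_{j-1},b_j\}$ is a pan. Case (v) reduces to (i). Case (vi) produces either the even hole $\{a_i,x,a_{i+1},d_{i+2},\dots,d_{i-1}\}$ when $a_ia_{i+1}\notin E(G)$, or a skeleton $C$ through $a_i,b_{i+1},d_{i+2},\dots,d_{i-1}$ on which $x$ violates Observation~\ref{obs:neighbors}(v); in either subcase Observations~\ref{obs:find-hole} and~\ref{obs:find-neighbors} return a pan or even hole in linear time.

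The only genuinely delicate step is (iii), which in the existential proof splits into three subcases; each is handled constructively. Subcase~(1) directly yields the induced $C_4$ on $\{x,a_{i-1},b_i,a_{i+1}\}$. Subcase~(2) directly yields the induced $C_6$ on $\{x,a_{i-1},b_{i-1},b_i,b_{i+1},a_{i+1}\}$. Subcase~(3) requires more work: if $xd_i\notin E(G)$ we output the $C_4$ on $\{x,a_{i-1},d_i,a_{i+1}\}$; otherwise, if $xd_{i-2}\notin E(G)$ we call Observation~\ref{obs:find-hole} on $d_{i-2},a_{i-1},b_i$ to obtain a skeleton $C$ (or even hole), and feed $x,C$ into Observation~\ref{obs:find-neighbors}; otherwise we call Observation~\ref{obs:find-hole} on the path $d_{i-2}a_{i-1}d_ib_{i+1}$, replace the $B_{i+2}$-vertex of the returned skeleton $C$ by $d_{i+2}$, and check whether $xd_{i+2}\in E(G)$: if yes, form $C'=(C\setminus\{b_{i+1}\})\cup\{a_{i+1}\}$ and apply Observation~\ref{obs:find-neighbors} to $(x,C')$; if no, apply it to $(x,C)$. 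Each of these branches touches only a bounded number of adjacency queries plus one call to a linear-time routine, so the total cost is $O(n+m)$.

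The main obstacle is bookkeeping rather than mathematics: making sure that the case split is driven purely by the precomputed pattern $(|N(x)\cap B_i|)_{i=0}^{\ell-1}$ and the dominant vertices $d_i$, so that each failure mode is detected in linear time and dispatches to exactly one of the constructive witnesses above. Once that is in place, every witness is either written down explicitly from $O(\ell)\le O(n)$ named vertices or produced by one invocation of Observation~\ref{obs:find-hole} followed by one invocation of Observation~\ref{obs:find-neighbors}, both of which are linear time, yielding the claimed bound.
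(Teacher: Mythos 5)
Your overall plan is exactly what the paper intends: the paper gives no explicit proof of this lemma, merely asserting that a linear-time procedure can be ``extracted'' from the proof of Lemma~\ref{lem:buoy-outside}, and your case-by-case extraction (precompute the bag pattern of $x$ and the dominant vertices $d_i$ via Lemma~\ref{lem:find-buoy-domination}, then dispatch to an explicit witness or to one call of Observation~\ref{obs:find-hole} followed by one call of Observation~\ref{obs:find-neighbors}) is the right way to do it.

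There is, however, a concrete error in your handling of subcase~(3) of part~(iii): your two branches are swapped. After establishing $xd_i, xd_{i-2} \in E(G)$ and building the skeleton $C$ through $d_{i-2}, a_{i-1}, d_i, b_{i+1}, d_{i+2}$, the logic of the paper's argument is: if $xd_{i+2} \in E(G)$, then $x$ already has at least four neighbors and the non-neighbor $b_{i+1}$ on $C$, so Observation~\ref{obs:find-neighbors} applied to $(x,C)$ succeeds; it is only when $xd_{i+2} \notin E(G)$ that $C$ may be useless --- $x$ can then have exactly the three neighbors $d_{i-2}, a_{i-1}, d_i$ on $C$, forming a path, in which case $(x,C)$ satisfies all of (i)--(v) of Observation~\ref{obs:neighbors} and the subroutine returns nothing --- and one must pass to $C' = (C - \{b_{i+1}\}) \cup \{a_{i+1}\}$, on which $x$ has at least the four neighbors $d_{i-2}, a_{i-1}, d_i, a_{i+1}$ and the non-neighbor $d_{i+2}$. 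You do the opposite: you send $xd_{i+2} \notin E(G)$ to $(x,C)$, which fails in precisely the critical three-neighbor configuration, and you send $xd_{i+2} \in E(G)$ to $(x,C')$, which can also fail (if $x$ happens to be $C'$-complete, Observation~\ref{obs:find-neighbors} detects no violation, whereas $(x,C)$ would have worked there). Interchanging the two branches --- or, equivalently, branching on whether $x$ has exactly three neighbors on $C$ --- repairs the argument; all the needed ingredients are already present in your write-up. The remaining cases are handled correctly and match the paper's constructions.
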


\subsection{Structure Theorem}
\label{sec:structure}

Now that we understand the structure of buoys (see Section
\ref{sec:buoy}) and their neighbors (see Section
\ref{sec:buoy-neighbor}), we are ready to prove the structure
theorem introduced in Section~\ref{sec:definitions}. We prove the
following  theorem which together with
Theorem~\ref{thm:buoy-unit-circ} implies
Theorem~\ref{thm:structure}. Recall that ${\cal C}$ is the class
of graphs $G$ such that every atom of $G$ is (pan, even
hole)-free.
\begin{theorem}\label{thm:structure_buoy}
If $G$ is a connected graph in ${\cal C}$ then
\begin{enumerate}[(i)]
 \item $G$ is a clique, or
 \item $G$ contains a clique cutset, or
 \item For every maximal buoy $B$ of $G$, either $B$ is a full
 buoy of $G$, or $G$ is the join of $B$ and a clique.
\end{enumerate}
\end{theorem}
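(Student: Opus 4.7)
The plan is to reduce to the case where cases (i) and (ii) both fail and then show (iii) must hold. Assume $G$ is connected, not a clique, and has no clique cutset; then $G$ itself is an atom, so $G$ is (pan, even hole)-free. I first want to apply Dirac's theorem to obtain a hole in $G$ --- otherwise $G$ would be chordal and hence either a clique or have a clique cutset, contrary to assumption. By even-hole-freeness any such hole is odd of length $\ell \ge 5$, and a hole is itself an $\ell$-buoy with singleton bags, so $G$ admits a maximal buoy $B$ with bags $B_0, \ldots, B_{\ell-1}$. The goal is to show that if $B \ne V(G)$ then $G$ is the join of $B$ and a clique; setting $X = V(G) \setminus B$, I need to prove that every $x \in X$ is $B$-complete and that $X$ is a clique.

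My next step is to classify each $x \in X$ with a neighbor in $B$ as type $2$, $3$, or $\ell$ via Lemma~\ref{lem:buoy-outside}(iv). I will rule out type~$3$ using maximality of $B$: such an $x$ has neighbors in three consecutive bags $B_{i-1}, B_i, B_{i+1}$ and is $B_i$-complete by (iii), so enlarging the middle bag to $B_i \cup \{x\}$ (keeping the others fixed) yields a valid $\ell$-buoy strictly containing $B$, a contradiction. The harder task is ruling out type~$2$. If $N(x) \cap B = B_a \cup B_{a+1}$, then (vi) makes $\{x\} \cup B_a \cup B_{a+1}$ a clique, and I expect to derive either a maximality violation or a clique cutset. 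A key ingredient is that if two type-$2$ vertices $x,x'$ have disjoint bag pairs $\{B_a, B_{a+1}\}$ and $\{B_b, B_{b+1}\}$ and are adjacent, then taking a skeleton $v_0, \ldots, v_{\ell-1}$ and considering the two complementary cycles $v_{a+1}, \ldots, v_b, x', x$ and $v_a, \ldots, v_{b+1}, x', x$, exactly one of them has even length (since $\ell$ is odd), and the restricted type-$2$ adjacency pattern keeps that cycle chordless, producing a forbidden even hole. A refined case analysis should then handle overlapping type-$2$ bag pairs and possible adjacencies with $X_\ell$, ultimately exhibiting a clique cutset contained in $B_a \cup B_{a+1}$ (possibly enlarged by $B$-complete vertices that neighbor $x$), contradicting the atom hypothesis.

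Once types $2$ and $3$ are ruled out, each $x \in X$ either is $B$-complete (type~$\ell$) or has no neighbor in $B$; call these sets $X_\ell$ and $X_0$. I will show $X_\ell$ is a clique: nonadjacent $y, y' \in X_\ell$ would combine with any two nonadjacent skeleton vertices $v, v'$ (which exist since $\ell \ge 5$) to form an induced $C_4$ on $\{y, v, y', v'\}$, a forbidden even hole. Next, if $X_0 \ne \emptyset$ then every path from $X_0$ to $B$ must pass through $X_\ell$, so $X_\ell$ is a clique cutset --- or $X_\ell = \emptyset$, making $G$ disconnected --- both contradicting our assumptions. Hence $X = X_\ell$ is a clique of $B$-complete vertices, giving $G$ as the join of $B$ and the clique $X$, which is case~(iii). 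The main obstacle is the type-$2$ elimination; in particular the overlapping-bag subcase and the possibility of edges from $X_2$ to $X_\ell$ demand a careful argument that either produces a pan or even hole or pinpoints a clique cutset beyond $B_a \cup B_{a+1}$.
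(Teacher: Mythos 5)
Your overall strategy matches the paper's: reduce to an atom, extract an odd hole, take a maximal buoy $B$, classify outside vertices via Lemma~\ref{lem:buoy-outside} into types $2$, $3$, $\ell$, kill type $3$ by maximality, show the $B$-complete vertices form a clique via a $C_4$, and handle the $B$-null vertices with a clique-cutset/connectivity argument. The parity mechanism you identify (the two complementary paths through an odd buoy have different parities, so closing either with the same external path yields an even hole) is exactly the engine the paper uses. However, your type-$2$ elimination has a genuine gap, and you flag it yourself. Your ``key ingredient'' only treats an \emph{edge} between two type-$2$ vertices with disjoint bag pairs. That does not suffice: the obstruction to $B_a \cup B_{a+1}$ (plus $B$-complete vertices) being a cutset is a \emph{path} from $A_a$ to $B - (B_a \cup B_{a+1})$ whose interior may wander through $R$, the vertices with no neighbors in $B$ at all, before re-entering some $A_j$. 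The deferred ``refined case analysis'' for overlapping bag pairs and adjacencies with the $B$-complete set is where the real work lives, and it is not done.

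The paper closes this cleanly in one stroke: set $D = B_i \cup B_{i+1} \cup U$ where $U$ is the set of all $B$-complete vertices (note: \emph{all} of $U$, not just those adjacent to the given type-$2$ vertex $x$ --- $D$ need not contain $x$, it only needs to separate $A_i$ from the rest of $B$, and $D$ is a clique because $\{x\}\cup B_i\cup B_{i+1}$ is a clique and $U$ is a $B$-complete clique). If $D$ is not a cutset, take a \emph{shortest} path $P = v_1 \ldots v_t$ in $G - D$ from $a_i \in A_i$ to $b \in B - (B_i \cup B_{i+1})$; minimality forces $v_{t-1} \in A_j$ for some $j \ne i$ and all other interior vertices into $R$. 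Then the two induced paths through $B$ joining $v_{t-1}$ to $a_i$ partition the $\ell$ bags and hence have different parities, so one of the two holes they form with $P - \{v_t\}$ is even --- a contradiction. This single argument subsumes your disjoint-pair case, the overlapping-pair case, and the detours through $R$ simultaneously; your proposal needs this (or an equivalent) to be a complete proof.
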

\begin{proof}
We may assume $G$ is connected and contains an odd hole $C$, for
otherwise $G$ is chordal and the theorem holds. Let $\ell$ be the
length of $C$. Since $G$ contains $C$, $G$ contains a maximal buoy
$B$ with bags $B_0, B_1, \ldots , B_{\ell -1}$ and skeleton $C$
(here, as usual, ``\emph{maximal}'' is meant with respect to set
inclusion, not size). If $G - B = \emptyset$, then $G$ is a full
buoy and we are done. Let $A$ be the set of vertices in $G - B$
with some neighbor in $B$, and $R$ be the set of vertices in $G$
with no neighbor in $B$. Consider a vertex $x$ in $A$. By
Lemma~\ref{lem:buoy-outside}, $x$ is of types 2, 3, or $\ell$. If
$x$ is of type 3, then, by $(iii)$ and $(iv)$ of
Lemma~\ref{lem:buoy-outside}, $x$ has neighbors in three
consecutive bags $B_{i-1}, B_i, B_{i+1}$ and is complete to $B_i$.
In particular, $B \cup \{x\}$ is a larger buoy with bags $B_0,
B_1, \ldots , B_{i-1}, B_{i} \cup \{x\}, B_{i+1}, \ldots, B_{\ell
-1}$, a contradiction to our choice of $B$. Also, when $x$ is of
type 2, then, by $(ii)$ and  $(vi)$ of
Lemma~\ref{lem:buoy-outside}, there is an index $i$ such that
$\{x\} \cup B_i \cup B_{i+1}$ is a clique. Thus, $A$ can be
partitioned into sets $A_0, A_1, \ldots A_{\ell-1}, U$ such that
\begin{itemize}
\item $a \in A_i$ if and only if $N(a) \cap B = B_i \cup B_{i+1}$;
and \item $u \in U$ if and only if $u$ is $B$-complete.
\end{itemize}

Note that all type-$\ell$ vertices are in $U$. The set $U$ (if
non-empty) induces a clique for otherwise,  two non-adjacent
vertices of $U$ and  two non-adjacent vertices of $B$ form a
$C_4$. We may assume there is a non-empty $A_i$, for otherwise $G$
is the join of $B$ and $U$ (if $R= \emptyset$), or  $U$ is a
clique cutset of $G$ separating $B$ and $R$ (if $R \not=
\emptyset$).

Consider a non-empty set $A_i$ and let $D = B_i \cup B_{i+1} \cup
U$. The set $D$ is a clique by Lemma~\ref{lem:buoy-outside}. We
will show that $D$ is a clique cutset. Suppose it is not a clique
cutset. Then, in $G-D$, there is a shortest path $P$ from a vertex
$a_i \in A_i$ to a vertex $b \in B - (B_i \cup B_{i+1})$.
Enumerate the vertices of $P$ as $v_1, v_2, \ldots, v_t$ with $v_1
= a_i$ and $v_t = b$. Since the path is shortest, $v_{t-1} \in
A_j$ for some $j \not=i$, $t \geq 3$,  and $v_j \in R$ for $j \in
\{2,3, \ldots t-2\}$ (when $t > 3$). There are two induced paths
whose endpoints are $v_{t-1}$ and $a_i$, and whose interior
vertices are disjoint and lie in $B$. We may enumerate one as $P_1
= v_{t-1} b_{j+1} \ldots b_i a_i$, and the second one as $P_2 =
v_{t-1} b_j \ldots b_{i+1} a_i$ with $b_k \in B_k$ for all $k$.
Since $\ell$ is odd,  $P_1$ and $P_2$ have different parities. Let
$P' = P - \{v_t\}$. One of the two holes induced by $P_1 \cup P'$
and $P_2 \cup P'$ has to be even, a contradiction.
\end{proof}
An algorithm can be extracted from the proof of
Theorem~\ref{thm:structure_buoy} to prove the following theorem.
\begin{theorem}\label{thm:find-structure-buoy}
Let $B$ be a maximal buoy of a graph $G$. If $B$ is not a full buoy of $G$ and
if $G$ is not the join of B and a clique, then $G$ contains an even hole or a pan,
and such an induced subgraph can be found in linear time. \hfill $\Box$
\end{theorem}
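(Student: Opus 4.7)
The plan is to convert the proof of Theorem~\ref{thm:structure_buoy} into a linear-time algorithm, using the algorithmic counterparts of each structural result already established in this section. First I would detect the ``$B$ is full'' case by checking $V(G) = B$. Otherwise, I partition $V(G) \setminus B$ in one adjacency-list sweep into $A$ (vertices with at least one neighbor in $B$) and $R$ (vertices with none). For each $x \in A$, invoke Lemma~\ref{lem:find-buoy-outside}: if $x$ fails any of conditions (i)--(vi) of Lemma~\ref{lem:buoy-outside}, the lemma produces a pan or even hole in linear time and the algorithm halts. Otherwise $x$ is of type 2, 3, or $\ell$; the type-3 case is ruled out by maximality of $B$ (parts (iii) and (iv) of Lemma~\ref{lem:buoy-outside} show such an $x$ would extend a bag), so a single pass over $N(x)\cap B$ classifies $x$ into the appropriate $A_i$ (type 2) or into the set $U$ of $B$-complete vertices (type $\ell$).

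Next, I would verify that $U$ is a clique: if non-adjacent $u_1, u_2 \in U$ exist, pair them with any two non-adjacent $b_1, b_2 \in B$ (guaranteed because non-consecutive bags are non-adjacent) to obtain an induced $C_4$, an even hole. Then examine the sets $A_i$. When every $A_i$ is empty: if $R = \emptyset$ then $V(G) = B \cup U$ with $U$ a $B$-complete clique, so $G$ is exactly the join of $B$ and $U$ and we return; if $R \neq \emptyset$ then $U$ separates $B$ from $R$ and is a clique cutset of $G$, which in the intended context (where $G$ is an atom) cannot happen and signals that the caller has supplied a graph outside this theorem's range of application.

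When some $A_i$ is non-empty, form $D = B_i \cup B_{i+1} \cup U$, a clique by Lemma~\ref{lem:buoy-outside}(vi) combined with the just-verified cliqueness of $U$. Run a BFS in $G - D$ from the vertices of $A_i$ and try to reach some $v_t \in B \setminus (B_i \cup B_{i+1})$; failure means $D$ is a clique cutset (same caveat as above), while success yields a shortest path $P = v_1 v_2 \cdots v_t$ with $v_1 \in A_i$. Minimality of $P$ forces $v_{t-1} \in A_j$ for some $j \neq i$ and $v_2, \ldots, v_{t-2} \in R$. Using dominant vertices of the bags (produced in linear time from the ordering guaranteed by Observation~\ref{obs:buoy-comparable}, mirroring Observation~\ref{obs:find-buoy-comparable}), I would construct the two induced paths $P_1$ and $P_2$ around the buoy from $v_{t-1}$ back to $v_1$ on opposite sides. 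Because $\ell$ is odd, $P_1$ and $P_2$ have opposite parities, so exactly one of the cycles formed by $P_1$ or $P_2$ together with $P \setminus \{v_t\}$ has even length; that cycle is the required induced even hole.

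The main technical hurdle is confirming that the resulting cycle is actually chordless. This relies on the shortest-path property of $P$ (its interior vertices lie in $R$ and hence contribute no chords to the buoy portion of the cycle), Lemma~\ref{lem:buoy-outside} pinning the neighborhoods of $v_1$ and $v_{t-1}$ in $B$ to $B_i \cup B_{i+1}$ and $B_j \cup B_{j+1}$ respectively (so chords from the endpoints of $P$ are controlled), and the dominant vertices chosen inside the buoy being mutually adjacent only across consecutive bags. Every other step reduces to a single linear-time scan, so the total running time is $O(n + m)$.
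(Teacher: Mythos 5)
Your proposal is correct and follows exactly the route the paper intends: Theorem~\ref{thm:find-structure-buoy} is justified there only by the remark that an algorithm can be extracted from the proof of Theorem~\ref{thm:structure_buoy}, and your write-up carries out precisely that extraction, using Lemma~\ref{lem:find-buoy-outside}, the maximality of $B$ to exclude type-3 vertices, the $C_4$ from a non-clique $U$, and the parity argument on $P_1\cup P'$ versus $P_2\cup P'$. You also rightly flag the implicit assumption (needed for the clique-cutset branches) that $G$ is an atom, which the paper's statement leaves unsaid.
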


\section{A coloring algorithm for (pan, even hole)-free graphs}
\label{sec:coloring}
In this section, we discuss a polynomial-time algorithm to color a
graph in ${\cal C}$. Consider a graph $G$  with a clique cutset
decomposition tree $T(G)$. From the discussion in
Section~\ref{sec:definitions}, if we can color the atoms of $G$ in
polynomial time, then we can also color $G$. The purpose of this
section is to show that $G$ can indeed be colored in polynomial
time.

In \cite{OrlBon1991}, an $O(n^2)$-time algorithm is given for coloring
proper circular-arc graphs. For unit circular-arc graphs, this can be improved.
First, we use the $O(n+m)$-time algorithm of \cite{LS2008} for recognizing unit circular-arc
graphs to construct a unit circular-arc representation. Then we use the $O(n^{1.5})$-time
algorithm of \cite{ShihHsu} to find a minimum coloring of a unit circular-arc graph
given the representation.  This gives an $O(n^{1.5}+m)$-time algorithm to color unit
circular-arc graphs.

Thus, from
Theorems~\ref{thm:buoy-unit-circ} and \ref{thm:structure_buoy}, we have the following two results.
\begin{theorem}\label{thm:color-join}
There is an $O(n^{1.5}+m)$-time algorithm to find a minimum coloring of
a (pan, even hole)-free graph that is either a buoy or the join of  a buoy and a clique.
\end{theorem}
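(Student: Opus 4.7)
The plan is to reduce directly to the unit circular-arc coloring machinery cited just before the theorem. By Theorem~\ref{thm:buoy-unit-circ}, every buoy in a (pan, even hole)-free graph is a unit circular-arc graph, so when the input $G$ is itself a buoy we are done after invoking the two cited subroutines: run the $O(n+m)$-time recognition algorithm of \cite{LS2008} to produce a unit circular-arc representation of $G$, and then feed that representation into the $O(n^{1.5})$-time coloring algorithm of \cite{ShihHsu}. The total cost is $O(n^{1.5}+m)$, and the output is a minimum coloring of $G$.

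For the join case, let $G$ be the join of a buoy $B$ and a clique $K$. The first step is to split $V(G)$ into the two parts; this is easy because $K$ is exactly the set of vertices adjacent to all other vertices of $G$ (any vertex of $B$ misses at least one vertex of $B$, since a buoy on $\ell\ge 5$ bags is never a clique), and this set can be read off in $O(n+m)$ time by comparing degrees. Once $B$ and $K$ are identified, because every vertex of $K$ is adjacent to every vertex of $B$, any proper coloring must use disjoint color sets on $B$ and on $K$, so $\chi(G)=\chi(B)+|K|$. We therefore color $B$ optimally by the previous paragraph and then assign $|K|$ fresh colors to the vertices of $K$, giving a minimum coloring of $G$ in the claimed time.

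The one subtlety worth flagging, and the step I would be most careful with, is that the coloring pipeline assumes a \emph{unit} circular-arc representation, not merely a proper one; the cited algorithm of \cite{OrlBon1991} alone would give an $O(n^2)$ bound, and it is the combination of \cite{LS2008} with \cite{ShihHsu} that delivers the $O(n^{1.5}+m)$ bound. Theorem~\ref{thm:buoy-unit-circ} is what licenses this stronger input assumption, so I would cite it explicitly at the point where we hand the buoy to the representation algorithm. Everything else in the proof is bookkeeping: verify that the split into buoy plus clique can be done within budget, and observe that summing the two color counts is optimal because of the join.
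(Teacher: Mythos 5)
Your proof is correct and follows essentially the same route as the paper: reduce the join case to the buoy case via $\chi(G)=|K|+\chi(B)$, then apply Theorem~\ref{thm:buoy-unit-circ} together with the $O(n+m)$ representation algorithm of \cite{LS2008} and the $O(n^{1.5})$ coloring algorithm of \cite{ShihHsu}. The extra details you supply (identifying $K$ as the universal vertices, and flagging that the pipeline needs a unit rather than merely proper representation) are consistent with the paper's surrounding discussion.
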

\begin{proof}
Let $G$ be a (pan, even hole)-free graph that is the join of a
clique $K$ and a buoy $B$. Then we have $\chi(G) = |K| + \chi(B)$.
Thus, we only need to establish the theorem for (pan, even
hole)-free buoys. Now the result follows from
Theorem~\ref{thm:buoy-unit-circ} and the $O(n^{1.5}+m)$-time
algorithm to color unit circular-arc graphs.
\end{proof}

\begin{theorem}\label{thm:color-main}
There is an $O(n^{2.5}+nm)$-time algorithm to find a minimum coloring of
a graph in ${\cal C}$.
\end{theorem}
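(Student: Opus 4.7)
The plan is to combine the clique cutset decomposition with the coloring procedure for atoms that was developed in Theorem~\ref{thm:color-join}. Let $G \in \mathcal{C}$. First I would invoke the algorithm of Tarjan~\cite{Tar1985} to construct the clique cutset decomposition tree $T(G)$ in $O(nm)$ time; this tree has at most $n-1$ atoms, which altogether contain $O(n)$ vertices and $O(m)$ edges (each edge of $G$ appears in at most one atom, while cutset edges may be shared across atoms but only $O(n)$ such clique edges appear in the whole tree).

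Next, for each atom $A$ of $G$, which by definition has no clique cutset and is (pan, even hole)-free, I would apply Theorem~\ref{thm:structure_buoy}: $A$ is either a clique, a full buoy, or the join of a buoy and a clique. In the clique case $A$ is colored trivially in linear time. In the remaining two cases, Theorem~\ref{thm:color-join} produces a minimum coloring of $A$ in $O(n_A^{1.5} + m_A)$ time, where $n_A$ and $m_A$ denote the number of vertices and edges of the atom $A$; this step relies on Theorem~\ref{thm:buoy-unit-circ} together with the unit circular-arc recognition algorithm of~\cite{LS2008} and the coloring algorithm of~\cite{ShihHsu}.

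Finally, I would combine these atom colorings using the standard bottom-up procedure on $T(G)$ described in Section~\ref{sec:definitions}: at each internal node the cutset $C$ is a clique, so after colorings of the two child subgraphs are computed, one simply permutes colors in one child so that the colorings agree on $C$; this gives $\chi(G) = \max_A \chi(A)$. Each merge is linear in the size of the cutset, and the total merging cost is $O(n+m)$.

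Summing the costs, the decomposition takes $O(nm)$, the merging takes $O(n+m)$, and the atom colorings contribute $\sum_A O(n_A^{1.5} + m_A) = O(n \cdot n^{1.5}) + O(m) = O(n^{2.5} + m)$, since there are at most $n-1$ atoms each of size at most $n$. The dominant term is $O(n^{2.5} + nm)$, as claimed. The only mildly subtle point I anticipate is bounding the sum $\sum_A n_A^{1.5}$, but the crude bound $(n-1) \cdot n^{1.5} = O(n^{2.5})$ already suffices; no sharper analysis is required.
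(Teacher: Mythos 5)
Your proposal is correct and follows essentially the same route as the paper: build the clique cutset decomposition tree in $O(nm)$ time, color each of the at most $n-1$ atoms in $O(n^{1.5}+m)$ time via Theorem~\ref{thm:color-join} (using Theorem~\ref{thm:structure_buoy}), and merge along the cutsets. Your claim that the atom edge counts sum to $O(m)$ is not quite right (cutset edges can be replicated across many atoms), but the crude bound of $(n-1)\cdot O(n^{1.5}+m)=O(n^{2.5}+nm)$ — which is what the paper uses — already gives the stated running time.
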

\begin{proof}
By the discussion above and the fact that the clique cutset
decomposition provides at most $n-1$ atoms, we only need show
there is an $O(n^{1.5}+m)$-time algorithm to color a  (pan, even
hole)-free atom $G$. By Theorem~\ref{thm:structure}, $G$ is one of
the following: a clique, a buoy, or the join of a clique and a
buoy. Thus, by Theorem~\ref{thm:color-join}, $G$ can be optimally
colored in $O(n^{1.5}+m)$ time.
\end{proof}

\section{Recognition algorithms for (pan, even hole)-free graphs}
\label{sec:recognition}
In this section, we give two polynomial-time algorithms to recognize
(pan, even hole)-free graphs. We note that a polynomial-time algorithm
for recognizing (pan, even hole)-free graphs can easily be
converted to a polynomial-time algorithm for recognizing graphs in~${\cal C}$.

There exist several polynomial-time algorithms (\cite{ChaLu2013,
ChuKaw2005, ConCor2002}) for finding an even hole in a graph. But
the fastest such algorithm \cite{ChaLu2013} runs in time
$O(n^5m^3) \le O(n^{11})$. A straight-forward algorithm to
recognize a (pan, even hole)-free graph is to test for a pan using
Theorem~\ref{thm:find-pan} below, and then to test for an even
hole. In particular, we can recognize (pan, even hole)-free graphs
in $O(n^5m^3)$ time. We will design faster algorithms for (pan,
even hole)-free graph recognition. We provide two recognition
algorithms. The first uses the fact that the (pan, even hole)-free
atoms are unit circular-arc graphs and recognizes (pan, even
hole)-free graphs in $O(n m^2 + n^2 m \log\log n)$ time. The
second uses the fact that the atoms are essentially very special
buoys and runs in $O(nm)$ time.

Similar to our coloring algorithm, we note that detecting an
even hole in a graph $G$ is easily reduced to checking for an
even hole in an atom. That is, suppose a graph $G$ has a clique
cutset $C$ and consider the subgraphs $G_1 = G[V_1]$ and $G_2 =
G[V_2]$ where $V = V_1 \cup V_2$ and $C = V_1 \cap V_2$. Then $G$
contains an even hole if and only if $G_1$ or $G_2$ does; i.e.,
when testing for even holes one need only consider atoms.

As we have mentioned previously, the clique cutset decomposition
tree $T(G)$ can be computed in $O(nm)$ time such that there are
fewer than $n$ atoms \cite{Tar1985}.

\subsection{Recognition via testing for pans and then testing for even holes in unit circular-arc atoms}

We will first describe an algorithm to find a pan in
a graph.

\begin{lemma}\label{lem:hole-in-atom}
Let a graph $G$ be an atom. Then every vertex $v$ of $G$ is
universal, or lies in a hole of $G$. Furthermore, there is a
linear-time algorithm to find a hole containing $v$ when $v$ is
not universal.
\end{lemma}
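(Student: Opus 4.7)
The plan is to exploit the fact that $N(v)$ acts as a potential separator whose failure to be a clique forces a hole through $v$. If $v$ is universal, there is nothing to prove, so assume $v$ has some non-neighbor $u$. Note that $\{v\}$ is itself a component of $G - N(v)$ (all neighbors of $v$ lie in $N(v)$), so $u$ belongs to a distinct component $C$ of $G - N(v)$. Setting $S = N(C) \cap N(v)$, we see that $S$ is a cutset separating $v$ from $C$, and because $G$ is an atom $S$ cannot be a clique. Pick any non-adjacent pair $x, y \in S$. Since $C$ is connected and both $x$ and $y$ have neighbors in $C$, a shortest path from $x$ to $y$ whose interior lies in $C$, say $x, c_1, \ldots, c_k, y$, exists and is automatically induced. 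The cycle $v, x, c_1, \ldots, c_k, y, v$ is then a hole: its only potential chords are $xy$ (ruled out by choice), edges $v c_i$ (ruled out because $c_i \notin N(v)$), and chords of $x, c_1, \ldots, c_k, y$ (ruled out by the induced-path choice).

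To turn this into a linear-time algorithm, I would proceed in five stages, each running in $O(n + m)$ time: (i) read $N(v)$ and check whether $|N(v)| = n - 1$; (ii) run a BFS/DFS on $G - N(v)$ to exhibit a component $C \neq \{v\}$; (iii) mark each vertex of $N(v)$ that has an edge to $C$ to obtain $S$; (iv) find a non-adjacent pair $x, y \in S$; (v) BFS from $x$ in $G[C \cup \{x, y\}]$ to reconstruct a shortest path to $y$, and then output the resulting hole.

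The main obstacle is step (iv), since a naive pairwise check is $\Theta(|S|^2)$. The fix is to compute $|N(a) \cap S|$ for every $a \in S$ in a single sweep over the adjacency lists (total $O(n + m)$), then find some $a \in S$ with $|N(a) \cap S| < |S| - 1$, which must exist precisely because $S$ is not a clique; once $a$ is identified, one further scan of $S$ produces a non-neighbor $b$ of $a$.
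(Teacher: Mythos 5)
Your proof is correct and follows essentially the same route as the paper: take a component $C$ of $G-N(v)$ not containing $v$, observe that its attachment set in $N(v)$ cannot be a clique since $G$ is an atom, and close a hole through $v$ via a chordless $x$--$y$ path with interior in $C$. Your extra care in step (iv) about finding a non-adjacent pair in $S$ in linear time is a detail the paper glosses over, but it does not change the approach.
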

\begin{proof}
Let $G$ be an atom and $v$ be a vertex of $G$. Let $M(v) = V(G) -
(N(v) \cup \{v\})$. If $M(v ) = \emptyset$, then $v$ is universal.
Compute the components $C_1, \ldots , C_t$ of $G[M(v)]$. For each
$C_i$, compute the set $N_i$ of vertices in $N(v)$ that have some
neighbors in $C_i$. If some $N_i$ is a clique, then $N_i$ is a
clique cutset separating $v$ from $C_i$, a contradiction. Thus,
none of the $N_i$s are cliques. Choose an arbitrary set
$N_i$. Consider two non-adjacent vertices $x$ and $y$ in $N_i$. Find a
chordless path $P$ from $x$ to $y$ whose interior vertices lie
entirely in $C_i$. Then $P$ and $v$ induce a hole in $G$.
\end{proof}
\begin{lemma}\label{lem:find-hole}
Given a graph $G$ and a vertex $v$ in $G$, there is an $O(nm)$
time algorithm to find a hole containing $v$, if such a hole
exists.
\end{lemma}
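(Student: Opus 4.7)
The plan is to reduce the problem to finding a hole through $v$ inside a single atom of $G$, and then appeal to Lemma~\ref{lem:hole-in-atom}. First, I would compute the clique cutset decomposition tree $T(G)$ of $G$ using Tarjan's algorithm \cite{Tar1985}, which runs in $O(nm)$ time and produces fewer than $n$ atoms.

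The key structural observation is that any hole $H$ of $G$ is entirely contained in some atom. To see this, fix any clique cutset $C$ used in the decomposition of some node of $T(G)$, producing subgraphs $G_1$ and $G_2$ with $V(G_1)\cap V(G_2)=C$ and no edges between $V(G_1)-C$ and $V(G_2)-C$. A hole cannot lie inside $C$ (a clique has no hole), and it cannot have vertices in both $V(G_1)-C$ and $V(G_2)-C$: such a hole would have to meet $C$ in at least two vertices (to enter and leave), and since $C$ is a clique those two vertices would be adjacent, producing a chord of the hole. Hence every hole sits entirely in $G_1$ or $G_2$, and by induction on the depth of $T(G)$ every hole lies inside a single leaf, i.e., inside a single atom. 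Consequently, $v$ lies in a hole of $G$ if and only if $v$ lies in a hole of some atom of $G$ that contains~$v$.

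The algorithm then enumerates the atoms of $T(G)$ that contain $v$ and invokes Lemma~\ref{lem:hole-in-atom} on each: if $v$ is non-universal in the atom, the linear-time routine produces a hole through $v$, which we return immediately; if $v$ is universal in every atom containing it, we report that $v$ lies in no hole. Correctness follows from the observation above together with Lemma~\ref{lem:hole-in-atom}. For the running time, there are at most $n-1$ atoms, each is an induced subgraph of $G$ (so has at most $n$ vertices and at most $m$ edges), and Lemma~\ref{lem:hole-in-atom} runs in $O(n+m)$ time per atom; summing gives $O(n(n+m))=O(nm)$ on a connected graph, which dominates the $O(nm)$ cost of building $T(G)$.

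The main thing to justify carefully is the ``no-crossing'' property of holes with respect to clique cutsets, but this is a short, standard chord argument. Everything else is bookkeeping on top of two results we are allowed to cite: the decomposition algorithm of \cite{Tar1985} and Lemma~\ref{lem:hole-in-atom}. One small caveat worth mentioning is that we must iterate over only those atoms that contain $v$; this list is produced naturally while constructing $T(G)$, so it adds no extra cost.
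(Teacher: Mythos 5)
Your proposal is correct and follows essentially the same route as the paper: build the clique cutset decomposition tree of \cite{Tar1985}, observe that every hole lives inside a single atom, and apply Lemma~\ref{lem:hole-in-atom} to each atom containing $v$. You merely spell out the standard no-crossing argument and the running-time bookkeeping that the paper leaves implicit.
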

\begin{proof}
Construct in $O(nm)$ time the clique cutset decomposition $T(G)$
of $G$. Consider all the atoms of $T(G)$ containing $v$. If $v$ is a universal
vertex in all such atoms, then $v$ does not lie on any hole of $G$.
Suppose $v$ is not universal in some atom $A$. By
Lemma~\ref{lem:hole-in-atom}, we can find a hole
containing $v$ in linear time.
\end{proof}
\begin{theorem}\label{thm:find-pan}
There is an $O(n m^2)$-time algorithm to find a pan in a graph, if
one exists.
\end{theorem}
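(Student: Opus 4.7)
The plan is to reduce pan-detection to hole-detection through a designated vertex in carefully modified subgraphs, exploiting Lemma~\ref{lem:find-hole}. The key observation is that a pan with handle $xy$ (where $x$ is the degree-one vertex attached to the hole at $y$) is the same thing as a hole through $y$ together with a vertex $x$ adjacent to $y$ but to no other vertex of the hole. So if we delete $N(x)\setminus\{y\}$ from $G$, then $x$ together with any hole through $y$ in the resulting graph forms a pan of $G$; conversely, every pan of $G$ arises this way for some choice of its handle.

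Concretely, the algorithm iterates over all ordered pairs $(x,y)$ with $xy \in E(G)$. For each such pair, form $G_{x,y} := G - (N_G(x)\setminus\{y\})$ in $O(n+m)$ time, and invoke the routine of Lemma~\ref{lem:find-hole} on $G_{x,y}$ with designated vertex $y$. If a hole $C$ is returned, output the induced subgraph $G[V(C)\cup\{x\}]$ as a pan; otherwise continue to the next ordered pair. If no ordered pair produces a pan, report that $G$ is pan-free.

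For correctness, note that $x$ is nonadjacent in $G$ to every vertex of $G_{x,y}-\{y\}$ except those not in $N_G(x)$, so if $C$ is any hole of $G_{x,y}$ containing $y$, then $x$ has exactly one neighbor on $C$ (namely $y$), hence $V(C)\cup\{x\}$ induces a pan in $G$. Conversely, suppose $G$ contains a pan with handle edge $xy$ and hole $C$ (with $x\notin V(C)$ and $N_G(x)\cap V(C)=\{y\}$). Then no vertex of $V(C)\setminus\{y\}$ lies in $N_G(x)\setminus\{y\}$, so all of $C$ survives in $G_{x,y}$, and the iteration for this ordered pair $(x,y)$ will detect a hole through $y$.

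For the running time, there are $2m$ ordered pairs $(x,y)$; each construction of $G_{x,y}$ costs $O(n+m)$, and each call to the algorithm of Lemma~\ref{lem:find-hole} costs $O(nm)$. Hence the total cost is $O(m)\cdot O(nm) = O(nm^2)$. The only mild subtlety is to ensure that the hole-finding subroutine handles the modified graph correctly even when $y$ becomes isolated or low-degree after deletion; but Lemma~\ref{lem:find-hole} already accounts for the possibility that $v$ lies on no hole (returning ``no hole''), so no special case is needed. The main conceptual step, and the one to get right, is the bijection between pans of $G$ with handle $xy$ and holes through $y$ in $G_{x,y}$; once that is secured, the time bound follows immediately from Lemma~\ref{lem:find-hole}.
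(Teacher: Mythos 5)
Your proposal is correct and is essentially the paper's own argument: for each (ordered) endpoint pair of an edge $xy$, delete $N(x)\setminus\{y\}$ and apply Lemma~\ref{lem:find-hole} to find a hole through $y$, giving $O(m)\cdot O(nm)=O(nm^2)$ overall. The correctness details you supply (the correspondence between pans with handle $xy$ and holes through $y$ in the modified graph) are exactly what the paper's terser proof leaves implicit.
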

\begin{proof}
For an edge $ab$ we can check, by Lemma~\ref{lem:find-hole}, in
$O(nm)$ time whether $ab$ is the handle of a pan by finding a
hole containing $a$ (respectively, $b$) in the subgraph of $G$
induced by $V(G) - (N(b) - \{a\})$ (respectively, $V(G) - (N(a) -
\{b\})$.) Since $G$ has $m$ edges, the time bound of the theorem
follows.
\end{proof}

Now, to recognize whether $H$ is a (pan, even hole)-free graph, we
first use Theorem~\ref{thm:find-pan} to test for a pan. If $H$ has
no pan, find the clique cutset decomposition.

For an atom $G$, Theorem~\ref{thm:structure} implies that $G$ is
either a unit circular-arc graph or the join of a clique $K$ and a
unit circular-arc graph $G'$. In the latter case,  $G$ is
even-hole-free if and only if $G'$ is even-hole-free. One can test
whether a graph is a unit circular-arc graph in linear time
\cite{LS2008}. In particular, if $G'$ is not unit circular-arc,
then we know $G$ must have an even hole (by
Theorem~\ref{thm:buoy-unit-circ}). Additionally, an $O(n m  \log
\log n)$-time algorithm is known for finding an even (or odd) hole
in a circular-arc graph \cite{CamEsc2007}. That is, via the clique
cutset decomposition, we can test whether a graph in ${\cal C}$
contains an even hole in $O(n^2 m \log \log n)$ time (since the
decomposition can be computed in $O(n m)$ time and has $O(n)$
atoms).

Thus, for a given graph $H$, we can recognize whether $H$ is
(pan, even hole)-free in $O(n m^2 + n^2 m \log \log n)$ time.

\subsection{Recognition via buoy construction}

We now present an algorithm which relies on the buoy structure of
a (pan, even hole)-free graph to test whether an atom is (pan,
even hole)-free, and if it is not, to find a pan or even hole.
Recall that, by Theorem \ref{thm:structure_buoy}, in a (pan, even
hole)-free atom $G$ either every maximal buoy is a full buoy or
$G$ is the join of a clique and a buoy. With this approach, we do
not attempt to directly find a pan. Instead, a pan (if it exists)
can be found by examining the buoys and their neighborhoods.

An atom $A$ of graph $G$ is {\it maximal} if any induced subgraph
$H$ of $G$ that properly contains $A$ is not an atom, i.e., if $H$
has a clique cutset. The atoms produced by the clique cutset
decomposition are maximal.

The algorithm will produce a forbidden induced subgraph, if one
exists. The algorithm has three steps.
\begin{enumerate}[(1)]
 \item Find a clique cutset decomposition tree $T(G)$ of $G$.
 \item For each (maximal) atom $A$ of $T(G)$,  (i) extract a forbidden induced
subgraph (if one exists) from $A$, or (ii) show that $A$ is a buoy, or (iii)  find a partition of
the vertices of $A$ into the
join of a buoy and a clique. The involved buoy will
satisfy Observation~\ref{obs:buoy-comparable}.
 \item For each atom $A$ of $T(G)$, verify that no holes of $A$
    form a pan with a vertex outside~$A$.
\end{enumerate}
We will show that steps (2) and (3) can be done in linear time for an
atom. This shows the algorithm runs in $O(n m)$ time.

The correctness of step (2) follows from the following theorem.
\begin{theorem}\label{thm:test-atom}
Let $G$ be an atom. There is a linear-time algorithm to output
\begin{enumerate}[(i)]
   \item a pan, or
   \item an even hole, or
   \item a certificate that $G$ is (pan,
   even hole)-free, and  either a certificate that $G$ is a buoy or
   a partition of $V(G)$ into sets $B$ and
   $K$  such that $B$ is a buoy,  $K$ is a clique, and $G$ is the
   join of $B$ and $K$.
\end{enumerate}
\end{theorem}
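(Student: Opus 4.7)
The plan is to convert the structural analysis behind Theorem~\ref{thm:structure_buoy} and Lemma~\ref{lem:buoy-outside} into a constructive procedure, using the algorithmic counterparts (Observations~\ref{obs:find-neighbors}, \ref{obs:find-hole}, \ref{obs:find-buoy-square}, \ref{obs:find-buoy-comparable}, \ref{obs:find-buoy-pan} and Lemma~\ref{lem:find-buoy-outside}) whenever a structural property fails. The algorithm has four stages.

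\medskip
\noindent\textbf{Stage 1 (chordality and an initial hole).} Run a linear-time chordality test (e.g.\ LexBFS with a fill-in check). If $G$ is chordal, then by Theorem~\ref{thm:dirac} and the assumption that $G$ is an atom, $G$ is a clique, which we return as a degenerate certificate. Otherwise the test returns an induced hole $C$ of length $\ell$. If $\ell$ is even, return $C$. If $\ell$ is odd (hence $\ell\ge 5$), initialize a buoy $B$ with singleton bags $B_i=\{v_i\}$.

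\medskip
\noindent\textbf{Stage 2 (grow the buoy to maximality).} For each vertex $x\notin B$, maintain the counts $|N(x)\cap B_i|$ and the index set of bags $x$ meets. Process the outside vertices with the adjacency table: by Lemma~\ref{lem:buoy-outside}, each such $x$ must be of type $2$, $3$, or $\ell$, and by Lemma~\ref{lem:find-buoy-outside} any violation yields a pan or an even hole in linear time. Whenever some $x$ is of type $3$ relative to the current bags $B_{i-1},B_i,B_{i+1}$ and is $B_i$-complete, insert $x$ into $B_i$; doing so may bump a previously type-$2$ neighbor up to type $3$, so we re-queue exactly those vertices. Since each vertex is added to at most one bag and its type strictly increases toward absorption, the total work is $O(n+m)$. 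The resulting $B$ is a maximal buoy.

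\medskip
\noindent\textbf{Stage 3 (handle the vertices outside $B$).} By the types computed above, partition $V(G)\setminus B$ into $U$ (type $\ell$, i.e.\ $B$-complete), the sets $A_i$ (type $2$, adjacent only into $B_i\cup B_{i+1}$), and $R$ (no neighbor in $B$). First verify in linear time that $U$ is a clique; a missing edge inside $U$ together with any two non-adjacent vertices of $B$ gives a $C_4$. If every $A_i$ is empty, then $R=\emptyset$ (otherwise $U$ would be a clique cutset of $G$, contradicting that $G$ is an atom, so if $R\ne\emptyset$ we can exhibit this cutset and abort with a contradiction). In this case return the partition $(B,U)$ as the join certificate, or $G=B$ if $U=\emptyset$. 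If some $A_i$ is non-empty, let $D=B_i\cup B_{i+1}\cup U$, which is a clique by Lemma~\ref{lem:buoy-outside}(vi); because $G$ is an atom, $D$ is not a cutset, so a BFS in $G\setminus D$ from $A_i$ finds a shortest path $P$ reaching $B\setminus(B_i\cup B_{i+1})$. Concatenating $P$ with the two buoy paths of opposite parity, exactly as in the proof of Theorem~\ref{thm:structure_buoy}, yields an even hole in linear time.

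\medskip
\noindent\textbf{Stage 4 (audit the buoy).} Before returning a buoy certificate, validate the buoy: check the cyclic non-edges, the within-bag cliques, and (using the promised linear-time domination test, cf.\ Lemma~\ref{lem:find-buoy-domination}) the comparability property of Observation~\ref{obs:buoy-comparable}; any failure is turned into a $C_4$ or even hole by Observations~\ref{obs:find-buoy-square} and~\ref{obs:find-buoy-comparable}. The main obstacle is Stage~2: one must argue that the seemingly quadratic ``re-examine neighbors after every insertion'' loop is in fact amortized linear. This is resolved by storing, for every vertex $x\notin B$, a vector of bag-degrees and a bit-mask of occupied bags, so that inserting $x$ into $B_i$ triggers $O(\deg x)$ updates, and any given neighbor is promoted at most once from type~$2$ to type~$3$ before being itself absorbed or recognized as still type~$2$.
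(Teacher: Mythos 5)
Your overall outline (chordality test, grow a buoy from an odd hole, classify the outside vertices as in Theorem~\ref{thm:structure_buoy}) matches the paper's proof, but there is a genuine gap: you never check for a pan lying \emph{entirely inside} the buoy you construct. All of your failure detectors (Lemma~\ref{lem:find-buoy-outside}, Observations~\ref{obs:find-buoy-square} and~\ref{obs:find-buoy-comparable}, the within-bag clique and cyclic non-edge checks) certify only that the outside vertices behave correctly and that each bag admits a domination order. But a buoy in which every bag is totally ordered by neighborhood inclusion can still contain a pan: by Lemma~\ref{lem:clique} this happens exactly when some $B_{i-1}\cup B_i$ and $B_i\cup B_{i+1}$ are both non-cliques, and this is compatible with full comparability (take $a,b\in B_i$ with $a$ strictly dominating $b$ on both sides). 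Such a graph would sail through all four of your stages and be certified ``(pan, even hole)-free'' while containing a pan. The paper closes this with a final pass that checks, for every $i$, that $B_{i-1}\cup B_i$ or $B_i\cup B_{i+1}$ is a clique (Lemma~\ref{lem:clique}), extracting a pan via Observation~\ref{obs:find-buoy-pan} when the check fails; you cite that observation in your toolbox but never invoke it.

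A secondary difference: for maximality of the buoy, the paper does not use a worklist with an amortization argument. It calls ENLARGE exactly twice and proves two rounds suffice, via the domination property: each bag's dominant vertex is adjacent to three consecutive vertices of the original skeleton $H$ and so is absorbed in round one, whence any remaining type-3 vertex is adjacent to $d_{i-1},h_i,d_{i+1}$ and is absorbed in round two. Your incremental bag-degree bookkeeping can probably be made to run in $O(n+m)$ (each insertion triggers $O(\deg x)$ counter updates and each vertex undergoes at most one completeness check before insertion or termination), but as written the justification (``its type strictly increases toward absorption'') is not a valid potential argument, and you also apply Lemma~\ref{lem:find-buoy-outside} to intermediate buoys before verifying their domination property, whereas that lemma's extraction of a forbidden subgraph presupposes dominant vertices in every bag. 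The paper's two-round scheme, interleaved with the domination check of Lemma~\ref{lem:find-buoy-domination} after each round, sidesteps both issues.
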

To prove Theorem~\ref{thm:test-atom}, we will need the following
three lemmas.
\begin{lemma}\label{lem:buoy-length}
If $B$ is an $\ell$-buoy where each $B_i$ can be ordered by
neighborhood inclusion, then every hole in $B$ has length $\ell$.
\end{lemma}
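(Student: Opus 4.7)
The plan is to look at how the hole visits the bags of $B$ cyclically and then to use the neighborhood-inclusion order on each bag to prevent it from entering any bag twice. Let $H = h_0 h_1 \cdots h_{k-1}$ be a hole in $B$ (indices modulo $k$), and write $\beta_j$ for the bag index with $h_j \in B_{\beta_j}$. Since edges inside $B$ occur only within a single bag or between consecutive bags, $\beta_{j+1} - \beta_j \in \{0, \pm 1\} \pmod{\ell}$. Moreover, because each $B_i$ is a clique, any two hole vertices lying in the same bag are adjacent in $G$, so to avoid a chord of $H$ they must be consecutive along $H$. Consequently each bag that meets $V(H)$ does so in a single \emph{bag-run}: one or two consecutive hole vertices, all contained in that bag.

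The next step I would carry out is to show that every bag-run has size exactly one. Suppose for contradiction that some bag-run is $\{h_j, h_{j+1}\} \subseteq B_i$; then $h_{j-1}, h_{j+2} \notin B_i$ (otherwise the bag-run would extend) and both lie in bags adjacent to $B_i$. By the neighborhood-inclusion hypothesis on $B_i$, the vertices $h_j$ and $h_{j+1}$ are comparable; WLOG $h_{j+1}$ dominates $h_j$. Then $h_{j-1} \in N(h_j) \cap (B_{i-1} \cup B_{i+1}) \subseteq N(h_{j+1})$, so $h_{j-1} h_{j+1}$ is an edge. Since $k \geq 4$, the positions $j-1$ and $j+1$ are non-consecutive in $H$, making this edge a chord --- a contradiction. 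The symmetric subcase in which $h_j$ dominates $h_{j+1}$ produces $h_j h_{j+2}$ as a chord in the same way.

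Having established that every bag-run has a single vertex, the $k$ hole vertices lie in $k$ pairwise distinct bags and the bag sequence $\beta_0 \beta_1 \cdots \beta_{k-1}$ is a closed walk on the cycle $\mathbb{Z}/\ell\mathbb{Z}$ with strict $\pm 1$ steps and no repeated value. A reversal at any interior position would force $\beta_{j+1} = \beta_{j-1}$, repeating a value; so the walk must be monotone in direction. A monotone closed $\pm 1$-walk on $\mathbb{Z}/\ell\mathbb{Z}$ winds around the cycle exactly once, so $k = \ell$, as required.

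The main obstacle I anticipate is the "no bag-run of size two" step: it requires the neighborhood-inclusion order on $B_i$ to be consistent between the $B_{i-1}$-side and the $B_{i+1}$-side, so that the domination between $h_j$ and $h_{j+1}$ really yields an edge from the dominating vertex to $h_{j-1}$ or $h_{j+2}$. This is exactly the flavor of comparability provided by Observation~\ref{obs:buoy-comparable}, and once it is in hand the cyclic-walk argument shutting down back-and-forth steps is routine.
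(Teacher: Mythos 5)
Your proof is correct and follows essentially the same route as the paper's: the comparability hypothesis forces any two hole vertices lying in a common bag to produce a chord, so each bag contains at most one vertex of the hole, and the resulting closed $\pm 1$-walk on the bag indices must wind exactly once around $\mathbb{Z}/\ell\mathbb{Z}$. Your closing worry about consistency between the $B_{i-1}$-side and the $B_{i+1}$-side is moot, since domination is defined via full neighborhoods in $B$, so the order on $B_i$ is automatically the same on both sides.
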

\begin{proof}
Consider a hole $H$ of $B$. No two vertices of $H$ have comparable
neighborhoods. Thus, each bag of $B$ contains at most one vertex
of $H$. If $H$ has fewer than $\ell$ vertices, then  $H$ is not a
hole (a contradiction) since vertices in a bag can only have
neighbors in the bag preceding it and the bag following it in the
cyclic order. So $H$ has length $\ell$.
\end{proof}
\begin{lemma}\label{lem:clique}
Let $B$ be an odd $\ell$-buoy where each bag
$B_i$ can be ordered by neighborhood inclusion. The following
three statements are equivalent
\begin{enumerate}[(i)]
 \item There are  two vertices $a, b$ in some $B_i$ such that
       $a$ strictly dominates $b$ in $B_{i-1}$, but $b$ does not
       dominate $a$ in $B_{i+1}$.
 \item $B$ has a pan.
 \item There is a subscript $i$ such that $B_{i-1} \cup B_i$ and
       $B_i \cup B_{i+1}$ are
        both not cliques.

\end{enumerate}
\end{lemma}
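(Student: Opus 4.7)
The plan is to prove the equivalence cyclically as (i) $\Rightarrow$ (ii) $\Rightarrow$ (iii) $\Rightarrow$ (i).

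For (i) $\Rightarrow$ (ii), I would extract witnesses from the two domination conditions: since $a$ strictly dominates $b$ in $B_{i-1}$ there is $c \in B_{i-1}$ with $ac \in E(G)$ and $bc \notin E(G)$, and since $b$ does not dominate $a$ in $B_{i+1}$ there is $d \in B_{i+1}$ with $ad \in E(G)$ and $bd \notin E(G)$. The path $c, a, d$ satisfies the hypothesis of Corollary~\ref{cor:hole}, so it extends to a skeleton $H$ of $B$. I then verify that $b$ is a pan handle for $H$: the vertex $b$ lies in $B_i$, hence its only possible neighbors on $H$ come from the skeleton vertices in $B_{i-1}, B_i, B_{i+1}$, which are precisely $c, a, d$; among these $b$ is adjacent only to $a$ (since $B_i$ is a clique and $bc, bd \notin E(G)$). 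Thus $\{b\} \cup V(H)$ induces a pan.

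For (ii) $\Rightarrow$ (iii), suppose $B$ contains a pan consisting of a hole $H$ and vertex $x$ with exactly one neighbor on $H$. By Lemma~\ref{lem:buoy-length}, $H$ has length $\ell$, so $H$ contains exactly one vertex $h_j$ from each bag $B_j$. Let $x \in B_i$. Because $B_i$ is a clique, $xh_i \in E(G)$, so $h_i$ must be the unique neighbor of $x$ on $H$. In particular $xh_{i-1} \notin E(G)$ and $xh_{i+1} \notin E(G)$, which immediately witness that neither $B_{i-1} \cup B_i$ nor $B_i \cup B_{i+1}$ is a clique.

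For (iii) $\Rightarrow$ (i), I would use the assumed neighborhood-inclusion order on $B_i$. Let $d_i$ be the maximum (dominant) vertex of $B_i$ and $u_i$ the minimum. By the buoy definition every vertex of $B_{i-1}$ (resp.\ $B_{i+1}$) has a neighbor in $B_i$, so the union of the $B_i$-neighborhoods of $B_{i-1}$ equals all of $B_{i-1}$; since $d_i$ has the largest neighborhood it must be adjacent to all of $B_{i-1} \cup B_{i+1}$. On the other hand, since $B_{i-1} \cup B_i$ is not a clique, some $v \in B_i$ fails to be adjacent to some $c \in B_{i-1}$; because $v$ dominates $u_i$, the inclusion $N(u_i) \setminus \{v\} \subseteq N(v)$ forces $u_ic \notin E(G)$. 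Symmetrically, there is $d \in B_{i+1}$ with $u_id \notin E(G)$. The pair $(d_i, u_i)$ in $B_i$ now satisfies (i): $d_i$ strictly dominates $u_i$ in $B_{i-1}$ (witnessed by $c$), while $u_i$ does not dominate $d_i$ in $B_{i+1}$ (witnessed by $d$, since $d_id \in E(G)$ but $u_id \notin E(G)$).

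The main obstacle is the implication (iii) $\Rightarrow$ (i): one must be careful that the neighborhood-inclusion hypothesis gives a single linear order on $B_i$ whose minimum $u_i$ is simultaneously below all others with respect to both the $B_{i-1}$- and $B_{i+1}$-neighborhoods, so that one vertex witnesses non-cliqueness on both sides. The remaining two directions are short and follow directly from Corollary~\ref{cor:hole} and Lemma~\ref{lem:buoy-length}.
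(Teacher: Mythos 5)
Your proof is correct and follows essentially the same route as the paper's: (i)$\Rightarrow$(ii) by extending the path $c,a,d$ to a skeleton and checking $b$ is a handle (the paper packages this as Observation~\ref{obs:find-buoy-pan}), (ii)$\Rightarrow$(iii) via Lemma~\ref{lem:buoy-length}, and (iii)$\Rightarrow$(i) using the dominant vertex and the minimum vertex of the offending bag. The only point worth making explicit is that Corollary~\ref{cor:hole} may legitimately be applied to $B$ because Lemma~\ref{lem:buoy-length} already guarantees $B$ is even-hole-free (every hole has odd length $\ell$) --- a remark the paper makes at the outset of its proof.
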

\begin{proof}
First, note that Lemma~\ref{lem:buoy-length} implies that $B$ has no
even hole.  Now the fact that  (i) $\Longrightarrow$ (ii) follows
from Observation~\ref{obs:find-buoy-pan}. Next, we prove the
implication (ii) $\Longrightarrow$ (iii). Suppose $B$ contains a
pan. By Lemma~\ref{lem:buoy-length}, the hole of this pan must
contain exactly one vertex of each bag. Let the pan consist of
vertices $a, b_0, b_1, \ldots, b_{\ell -1}$ where $b_i \in B_i$
and the vertices $b_i$ form a hole. Without loss of generality, we
may assume $a \in B_1$. So, (iii) is satisfied with $i =1$.
Finally, we prove the implication (iii) $\Longrightarrow$ (i).
Suppose $B_{0} \cup B_1$ and $B_1 \cup B_{2}$ are both not
cliques. Let $a$ be the dominant vertex of $B_1$, and $b$ be the
vertex in $B_1$ that is dominated by every other vertex of $B_1$.
If $b$ is adjacent to every vertex in $B_0$, then every vertex in
$B_1$ is adjacent to every vertex in $B_0$, a contradiction. So
$b$ is non-adjacent to some vertex of $B_0$, i.e., $a$ strictly
dominates $b$ in $B_0$. A symmetric argument shows that $a$ strictly
dominates $b$ in $B_2$.
\end{proof}
\begin{lemma}\label{lem:find-buoy-domination}
Let $B$ be an $\ell$-buoy  with bags $B_0, \ldots, B_{\ell-1}$ for
some $\ell$. There is a linear-time algorithm to verify that the
bags of $B$ admit a domination order, i.e., the vertices of each
$B_i$ are pairwise comparable.
\end{lemma}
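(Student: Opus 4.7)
The plan is to reduce the per-bag question to a chain-graph test on a bipartite neighborhood structure, and then to verify that test by a single sort-and-sweep that runs in time proportional to the edges touching the bag.

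First, fix a bag $B_i$ and observe that any two $a,b\in B_i$ are adjacent (because $B_i$ is a clique) and that all of their $B$-neighbors lie in $B_{i-1}\cup B_i\cup B_{i+1}$ (by the definition of the buoy). Consequently $a$ dominates $b$ in $B$ if and only if $L(b)\subseteq L(a)$, where $L(v):=N(v)\cap(B_{i-1}\cup B_{i+1})$. So the required domination order on $B_i$ exists exactly when the family $\{L(v):v\in B_i\}$ is totally ordered by inclusion, i.e., when the bipartite graph between $B_i$ and $B_{i-1}\cup B_{i+1}$ is a chain graph from the $B_i$ side.

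The algorithm I propose then runs, per bag, as follows: (i) read each $L(v)$ off $v$'s adjacency list and record $|L(v)|$; (ii) sort $B_i$ as $v_1,v_2,\ldots$ by $|L(\cdot)|$ non-increasing via counting sort with buckets $0,1,\ldots,|B_{i-1}|+|B_{i+1}|$; (iii) sweep through the sorted list, and for each consecutive pair $(v_{k-1},v_k)$ mark the vertices of $L(v_{k-1})$ in a global Boolean array, scan $L(v_k)$ to check that every one of its elements is marked, then unmark. Accept iff every consecutive inclusion test succeeds; otherwise output the first failing pair as an explicit witness of incomparability.

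Correctness in one direction is transitivity of $\subseteq$: if all consecutive inclusions hold then $L(v_{k'})\subseteq L(v_k)$ for every $k<k'$, and the sorted order is a domination order. In the converse direction, if some pair in $B_i$ were incomparable but every consecutive inclusion passed, the same transitivity argument would force the pair to be nested, a contradiction; hence a failing consecutive pair must arise no matter how ties in the sort are broken. For the running time, step (i) over all bags is $O(n+m)$ since each $L(v)$ is read from $v$'s adjacency list; step (ii) costs $O(|B_i|+|B_{i-1}|+|B_{i+1}|)$ per bag and sums to $O(n)$; step (iii) costs $O(|L(v_{k-1})|+|L(v_k)|)$ per consecutive pair, and by a charging argument (each $|L(v)|$ is charged by at most two pairs) sums to $O(\sum_{v\in B_i}|L(v)|)$ per bag and to $O(m)$ overall. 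The only subtle point, and what I see as the main obstacle to a truly linear bound, is keeping the bookkeeping local: the counting-sort buffer and the Boolean marking array must be allocated once and reused across bags, with cleanup confined to the entries actually touched, so that no bag incurs the $\Theta(n)$ overhead of a fresh reinitialization.
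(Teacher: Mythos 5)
Your proposal is correct and follows essentially the same route as the paper's proof: bucket/counting-sort each bag by neighborhood size, then verify inclusion only between consecutive vertices in the sorted order and invoke transitivity of $\subseteq$, for a total of $O(n+m)$ time. Your restriction of the comparison to $L(v)=N(v)\cap(B_{i-1}\cup B_{i+1})$ and the explicit mark/unmark bookkeeping are just slightly more careful renderings of the same argument.
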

\begin{proof}
For each bag $B_i $, we order its vertices by non-decreasing size
of their neighborhoods; i.e., $B_i$ is ordered as $b_i^{0},
\ldots, b_i^{k_i-1}$, where $|N(b_i^0)| \leq |N(b_i^1)| \leq
\ldots \leq |N(b_i^{k_i-1})|$ with $k_i = | B_i |$. This can be
done in $O(|B_{i-1} |+|B_i |+|B_{i+1} |)$ time via bucket-sort.
That is, sorting all of the bags can be done in $O(n)$ time. We
then check that for every $j \in \{1, \ldots, k_i -1\}$, every
neighbor of $b_i^{j-1}$ is a neighbor of $b_i^j$ (if this is not
the case, then $b_i^{j-1}$ is incomparable with $b_i^j$). For each
bag $B_i$, this neighborhood checking can be performed in
$O(\Sigma^{k_i-1}_{j=0} |N(b_i^j)|)$ time. In particular, all such
checking can be performed in $O(m)$ time. Thus, we can check that
the bags of $B$ admit a domination order in $O(n+m)$ time.
\end{proof}

Now we can prove Theorem~\ref{thm:test-atom}.

\begin{proof}
Suppose that $G$ is an atom.  Using the linear-time algorithm in
\cite{TarYan1984}, we either confirm that $G$ is chordal (and
hence is (pan, even hole)-free) or obtain a hole $H$. We may
assume $H$ is an odd hole. We first briefly describe the
algorithm. We will construct a maximal buoy $B$ with $H$ as its
skeleton in $O(n+m)$ time. During this process, we verify that $B$
has the domination property of
Observation~\ref{obs:buoy-comparable} or $G$ contains a pan or
even hole. If $H \not= G$ and $G$ is not the join of $B$ and a
clique, then we will find a pan or even hole.

We now describe our construction of a maximal $\ell$-buoy $B$ in
an atom $G$ from a hole $H = \{h_0, \ldots, h_{\ell-1}\}$ of $G$.
We start with the initial $\ell$-buoy $B$ with bags $B_0 =
\{h_0\}$, $\ldots$, $B_{\ell-1} = \{h_{\ell-1}\}$.

Let $K$ be the set of universal vertices of $G$. We remove
vertices in $K$ from $G$ since these cannot be part of an even
hole or a pan. Since $K$ is not a clique cutset of $G$, removing
$K$ does not make the resulting graph disconnected. Also, if $C$
is a clique cutset of $G_K$, then $C \cup K$ is a clique cutset of
$G$. Thus, the graph we obtain by removing $K$ is still an atom.

Consider a vertex $x$ in $G - B$ with some neighbors in $B$. If
$x$ is of type $t$ with $t \not\in \{2,3, \ell \}$, then by
Lemma~\ref{lem:buoy-outside}, we know $G$ has a pan or even hole,
and we can find this forbidden induced subgraph by
Lemma~\ref{lem:find-buoy-outside}. So, $x$ is of type 2, 3 or
$\ell$. The only candidates to be added to $B$ are type 3
vertices.  Suppose $x$ is of type 3. By
Lemma~\ref{lem:buoy-outside}, either $x$ is adjacent to three
consecutive bags $B_{i-1}, B_i, B_{i+1}$ of $B$, or $G$ contains a
pan or an even hole. Suppose $x$ is adjacent to three consecutive
bags $B_{i-1}, B_i, B_{i+1}$. If $x$ is not adjacent to all
vertices of $B_i$, then by Lemmas~\ref{lem:buoy-outside}
and~\ref{lem:find-buoy-outside}, we will find a forbidden induced
subgraph. Now $x$ is adjacent to all vertices of $B_i$. We then
add $x$ to $B_i$. We summarize the operations described in the
above paragraph with Algorithm 1, named ENGLARGE and given below.
\begin{algorithm}
\caption{ENLARGE} \label{alg:enlarge}
\begin{algorithmic}

    \STATE
    \STATE Iterating over the edges from $B$ to $G - B$,  label
    each vertex of $G - B$ with the bags of its
neighbors in $B$.
    \FOR{every vertex $x$ of $G - B$ with a label}
       \IF {$x$ has $t$ labels with $t \not\in \{2,3, \ell\}$}
             \STATE output a pan or even hole, and stop
       \ENDIF
       \IF{$x$ is labelled with three non-consecutive indices}
             \STATE   output a pan or even hole, and stop
       \ENDIF
       \IF {$x$ is labelled with three consecutive indices (say, $B_{i-1}, B_i, B_{i+1}$)}
            \IF{$x$ is not adjacent to all of $B_i$}
                       \STATE output a pan or even hole, and stop
            \ELSE
            \STATE add $x$ to $B_i$
            \ENDIF
        \ENDIF
    \ENDFOR

\end{algorithmic}
\end{algorithm}

Starting with our first buoy $B$ which is an odd hole, we call
ENLARGE on $B$ twice. We will show that after two calls to
ENLARGE, we can decide whether $G$ is (pan, even hole)-free. After
the first (respectively, second) call to ENLARGE, let $B^1$
(respectively, $B^2$) be the resulting buoy, and let the bags of
$B^1$  (respectively, $B^2$) be $B_0^1, B_1^1, \ldots, B_{\ell
-1}^1$ (respectively, $B_0^2, B_1^2, \ldots, B_{\ell -1}^2$). Note
that $B_i^1 \subseteq B_i^2$ for all $i$. Using
Lemma~\ref{lem:find-buoy-domination}, we verify in linear time
that both $B^1$ and $B^2$ have the desired domination property, or
else we find a pan or even hole.

Suppose there is a vertex of $G$ not belonging to any $B_i^2$.
Consider a vertex $x$ in $G - B^2$ with neighbors in some of the
bags. If $x$ is of type $t$ with $t \not\in \{2,3,\ell \}$, then
by Lemma~\ref{lem:find-buoy-outside}, we can produce a pan or even
hole in linear time.

We will prove that $x$ is of type 2 or $\ell$. Suppose $x$ is of
type 3. If $x$ does not have neighbors in three consecutive bags,
then by Lemma~\ref{lem:find-buoy-outside}, we can produce a pan or
even hole in linear time. So $x$ has neighbors in three
consecutive bags, say, $B_{i-1}^2, B_i^2, B_{i+1}^2$. Vertex $x$
is adjacent to all of  $B_i^2$, for otherwise, by
Observation~\ref{lem:find-buoy-outside} we can find a pan or even
hole. So, $B^3 = B^2 \cup \{x\}$ is a buoy with bags $B_0^2,
\ldots, B_{i-1}^2, B_i^2 \cup \{x\},  B_{i+1}^2, \ldots, B_{\ell
-1}^2$ . We may assume $B^3$ has the domination property of
Observation~\ref{obs:buoy-comparable}, for otherwise by
Observation~\ref{obs:find-buoy-comparable}, we will find a pan or
even hole. Thus, each bag $B_j^2$ of $B^3$ has a dominant vertex
$d_j$. Since the vertex $d_j$ is adjacent to three vertices of
$H$, $d_j$ is added to the buoy $B^1$ in the first iteration.
Vertex $x$ is adjacent to $d_{i-1}, h_i, d_{i+1}$, so $x$ would have
been added to $B^2$ in the second iteration, a contradiction.

So $x$ is of type 2 or $\ell$. (From now on, we only refer to the
buoy produced after the second call; so to simplify notation, we
will let $B=B^2$.) When $x$ is of type 2, then, by  $(vi)$ of
Lemma~\ref{lem:buoy-outside}, there is an index $i$ such that
$\{x\} \cup B_i \cup B_{i+1}$ is a clique, or else we can produce
a pan or even hole.   Thus, by $(iv)$ and $(vi)$ of
Lemma~\ref{lem:buoy-outside}, $G - B $ can be partitioned into
sets $A_0, A_1, \ldots A_{\ell-1}, U, R$ such that
\begin{itemize}
 \item $a \in A_i$ if and only if $N(a) \cap B  = B_i  \cup B_{i+1}
$,
 \item $u \in U$ if and only if $u$ is $B $-complete,
 \item $r \in R$ if and only if $r$ is $B$-null.
\end{itemize}
We may now use the proof of Theorem~\ref{thm:structure_buoy} to
find a pan or even hole. The set $U$ (if non-empty) induces a
clique for otherwise,  two non-adjacent vertices of $U$ and two
non-adjacent vertices of $B$ form a $C_4$. If all sets $A_i$ are
empty, then $U$ is a clique cutset of $G$. So, some $A_i$ is
non-empty. Since $G$ is an atom, $U \cup B_i \cup B_{i+1} $ is not
a clique cutset separating $A_i$ from $B  - (B_i \cup B_{i+1})$.
Thus, there is a shortest path $P$ from a vertex $a_i \in A_i$ to
a vertex $a_j \in A_j$ ($i \not= j$) whose interior vertices lie
entirely in $G - (B \cup U)$. Find an induced path $P'$ with the
same parity as $P$ from $a_i$ to $a_j$ whose interior vertices
belongs to $B $. Then $P \cup P'$ is an even hole.

Thus, after two calls to ENLARGE, we have constructed a full buoy
$B $ of the graph $G - K$, where $K$ is the set of universal
vertices we remove before the first call to ENLARGE.

To complete the proof, we only need to find a pan in $B $, if one
exists. By Lemma~\ref{lem:clique}, $B $ has no pan if and only if
for every $i$, $B_{i-1} \cup B_i$ or $B_i \cup B_{i+1}$ is a
clique. This condition can be checked in $O(m)$ time. If the
condition fails for some $i$, then the proof of
Lemma~\ref{lem:clique} shows that the dominant vertex $a$ of $B_i$
strictly dominates the vertex $b \in B_i$ with the smallest
degree; and so we can find a pan using
Observation~\ref{obs:find-buoy-pan}. If $K$ (the set of universal
vertices of $G$) is non-empty, then $G$ is the join of the buoy $B
$ and $K$; otherwise, $B $ is a full buoy of $G$.
\end{proof}

Now we show that step (3) of our algorithm can be implemented in
$O(nm)$ time. At this point, we know that the (maximal) atom $A$
of $T(G)$ under consideration is (pan, even hole)-free, and that
$A$ is either a buoy or the join of a buoy and a clique. We need
to determine that no hole of $A$ forms a pan with a vertex in $G -
A$; we call such a pan {\em straddling}. We need to find a
straddling pan with respect to $A$ (if one exists). An atom $A$ of
a graph $G$ is {\it maximal} if for any induced subgraph $H$ of
$G$ containing $A$, either $H$ has a clique cutset, or $A$ is a
component of $H$. The atoms produced by the clique cutset
decomposition are maximal. We will need the following observation.
\begin{observation}\label{obs:maximal-atom}
If $A$ is a maximal atom of a graph $G$, then for every vertex $x$
in $G - A$, $N_A(x)$ is either empty or a clique.
\end{observation}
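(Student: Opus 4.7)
The plan is to apply the maximality of $A$ to the induced subgraph $H = G[V(A) \cup \{x\}]$ for a vertex $x \in G - A$ with $N_A(x) \neq \emptyset$, and then use the fact that $A$ itself has no clique cutset to force $N_A(x)$ into a single clique.

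Fix such an $x$ and consider $H = G[V(A) \cup \{x\}]$. Since $N_A(x) \neq \emptyset$, every vertex of $A$ is reachable (within $H$) from $x$, so $A$ is not a component of $H$. By the definition of maximal atom, $H$ must contain a clique cutset $C$. I would then split into the two cases $x \in C$ and $x \notin C$.

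In the case $x \in C$, write $C = C' \cup \{x\}$ with $C' \subseteq V(A)$ (and $C'$ a clique in $A$, with $C' \subseteq N_A(x)$). Then $H - C = A - C'$. Since $A$ is connected and has no clique cutset, $A - C'$ is either empty (if $C' = V(A)$) or connected, so $H - C$ has at most one component, contradicting that $C$ is a cutset of $H$. In the case $x \notin C$, we have $C \subseteq V(A)$, and $C$ is a clique in $A$. Because $A$ has no clique cutset, $A - C$ is connected, so it forms a single component of $H - C$; the only way $H - C$ can acquire an extra component is for $x$ to become isolated from $A - C$, which means every neighbor of $x$ in $A$ lies in $C$, i.e., $N_A(x) \subseteq C$. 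Since $C$ is a clique, $N_A(x)$ is a clique.

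There is no substantive obstacle here: the whole argument is a direct unwinding of the definitions of atom and maximal atom. The only subtlety worth verifying is the boundary case $C' = V(A)$ in the first case, which is handled uniformly by noting that $H - C$ is then empty and so still fails to have more components than $H$.
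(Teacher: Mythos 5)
Your proof is correct and follows essentially the same route as the paper's: pass to $H = G[V(A)\cup\{x\}]$, invoke maximality to obtain a clique cutset $C$ of $H$, and split on whether $x\in C$ (where $C-\{x\}$ would be a clique cutset of $A$) or $x\notin C$ (where either $A-C$ splits, contradicting that $A$ is an atom, or $N_A(x)\subseteq C$ and hence is a clique). The only difference is cosmetic: you argue directly where the paper argues by contradiction.
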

\begin{proof}
Suppose $N_A(x)$ is not empty but is not a clique. Write $G' = G[A
\cup \{x\}]$. Since $A$ is a maximal atom, $G'$ is not an atom,
i.e., $G'$ has a clique cutset $C$. If $x \in C$, then $C - \{x\}$
is a clique cutset of $A$, a contradiction. So, $x$ belongs to a
component $P$ of $G' - C$. Vertex $x$ cannot be the only vertex of
$P$, for otherwise, $ N_A(x)$ is a subset of $C$, and therefore a
clique, a contradiction. But now $C$ is a clique cutset of $A$, a
contradiction.
\end{proof}
Now consider
an atom $A$ of $T(G)$ that is either a buoy $B$
or the join of a
buoy $B$ and a clique $K$. We are going to
describe a way to  find a straddling pan (if one exists) whose
hole belongs to $B$. (Vertices of $K$ do not belong to a hole in $A$.)
Let the bags of $B$ be $B_0, B_1, \dots, B_{\ell -1}$.
Remember that $A$ is (pan, even hole)-free, so by Observation~\ref{obs:buoy-comparable},
each bag can be ordered by neighborhood inclusion, so by Lemma~~\ref{lem:buoy-length},
every hole in $B$ has length $\ell$.
Compute the set $Q$ of vertices of $G - A$
that have neighbors in $B$. The set $Q$ can be computed in $O(m)$
time. Let $Q_i$ be the set of vertices $x$ of $Q$ with $N_B(x)
\cap B = B_i \cup B_{i+1}$. Since $A$ is a maximal atom, the graph
$G[A \cup Q]$ contains a clique cutset $C$ such that $C \subset
A$. It follows from Observation~\ref{obs:maximal-atom} that, with
respect to the buoy $B$, every vertex in $Q$ is of type 1 or 2.
Furthermore, since $A$ is (pan, even hole)-free, it follows from Lemma~\ref{lem:buoy-outside}(vi) 
that every vertex of type 2 belongs to some $Q_i$. If
some $x \in Q$ is of type 1, then clearly a straddling pan can be
found in linear time.
Now, every vertex $x$ in $Q_i$ is such that  $\{x\} \cup B_i
\cup B_{i+1}$ is a clique, and we conclude there is no straddling
pan.

Thus, for a maximal atom, we can determine in $O(m)$ time whether
a straddling pan exists. Since there are at most $n-1$ atoms of
$T(G)$, we can implement step 3 in $O(nm)$ time. This completes the proof
of Theorem \ref{MainResult}:
\\
\\
\noindent\textbf{Theorem \ref{MainResult}} \textit{Given a graph $G$, a pan or even hole of $G$, if one exists, can be found in $O(nm)$ time.}
\\
Note that for an input graph $G$, if $G$ is not (pan, even hole)-free, our algorithm produces a pan or an even hole.  If $G$ is (pan, even hole)-free, the algorithm produces a clique cutset decomposition tree which satisfies Theorem \ref{thm:certifying} below; furthermore, the set of atoms of every clique cutset decomposition tree will satisfy (i) and (ii) below.

\begin{theorem}\label{thm:certifying}
A graph $G$ is (pan, even hole)-free if and only if there is a clique cutset decomposition tree with most $n-1$ atoms $G_j$ such that
\begin{itemize}
\item[(i)] Each atom $G_j$ is either a clique or consists of a buoy $B(G_j)$ and a possibly empty set $U_j$ of universal vertices; the buoy $B(G_j)$ has an odd number of bags; each bag can be ordered by neighborhood inclusion; and, for each consecutive triple of bags either the first two or second two form a clique.
\item[(ii)] Further, for each atom $G_j$ which is not a clique, the neighborhood of $V(G_j)$ in $G$ can be partitioned into sets $A_i$, some of which may be empty, where $A_i$ is universal to the $i$th and $(i+1)$st bags of the buoy $B(G_j)$ and $A_i$ has no other neighbors in~$B(G_j)$.
\end{itemize}
\end{theorem}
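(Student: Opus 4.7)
The statement is a biconditional, so my plan is to prove both directions.

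For the forward direction, I would start with the assumption that $G$ is (pan, even hole)-free and apply Tarjan's clique cutset decomposition to obtain a tree with at most $n-1$ atoms, each of which is (pan, even hole)-free. By Theorem~\ref{thm:structure_buoy}, each non-clique atom $G_j$ has the form $G_j = B(G_j) \cup U_j$, where $B(G_j)$ is a full buoy in $G_j$ and $U_j$ (possibly empty) is the set of vertices universal in $G_j$. To verify~(i): the skeleton of $B(G_j)$ is a hole in the even-hole-free graph $G$, so the number of bags $\ell$ is odd; each bag admits a domination order by Observation~\ref{obs:buoy-comparable}; the consecutive-triple clique condition is exactly Corollary~\ref{cor:buoy-clique}. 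To verify~(ii): take an external vertex $x$ with a neighbor in $B(G_j)$. By Observation~\ref{obs:maximal-atom}, $N_{G_j}(x)$ is a clique in $G_j$, which rules out $x$ being of type~$3$ (this would force neighbors in two non-adjacent bags $B_{i-1},B_{i+1}$) or of type~$\ell$ (this would force $B(G_j)$ itself to be a clique), leaving only type~$2$ in Lemma~\ref{lem:buoy-outside}; then Lemma~\ref{lem:buoy-outside}(vi) says $x$ is complete to some $B_i \cup B_{i+1}$, so $x$ goes into $A_i$.

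For the backward direction, I would assume a decomposition tree exists with properties (i) and (ii) and deduce $G$ is (pan, even hole)-free. First I would show each non-clique atom $G_j = B(G_j) \cup U_j$ is (pan, even hole)-free: Lemma~\ref{lem:buoy-length} together with the odd $\ell$ from~(i) forces every hole of $B(G_j)$ to have odd length $\ell$; the consecutive-triple clique condition of~(i) is exactly condition~(iii) of Lemma~\ref{lem:clique}, so $B(G_j)$ has no pan; adding $U_j$ preserves these properties because a vertex universal in $G_j$ cannot sit on a hole of length $\ge 4$ (it would have too many neighbors on the hole).

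To lift this to $G$, I would use that even holes and pan-holes are $2$-connected, while any clique cutset of $G$ meets such a hole in at most an edge (which cannot cut it); by induction on the decomposition tree, any even hole or pan-hole $C$ of $G$ lies in a single atom $G_j$. An even hole in $G$ would thus sit inside some atom, contradicting the previous paragraph. For a pan with hole $C$ and handle vertex $x$ (with unique neighbor $y \in V(C)$), we get $V(C) \subseteq V(G_j)$; since universal vertices cannot lie on $C$, in fact $V(C) \subseteq V(B(G_j))$, and by Lemma~\ref{lem:buoy-length} $C$ meets each bag in exactly one vertex. If $x \in V(G_j)$, the pan lies in an atom, contradiction. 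Otherwise $x$ has $y \in V(B(G_j))$ as a neighbor, so by~(ii) $x$ lies in some $A_i$ and is complete to $B_i \cup B_{i+1}$; then $x$ has at least two neighbors on $C$, contradicting the handle property.

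The main obstacle will be the forward-direction verification of~(ii): an external vertex adjacent only to $U_j$ has no natural $A_i$ to belong to. I plan to resolve this by interpreting ``the neighborhood of $V(G_j)$'' as the set of external vertices that have at least one neighbor in $B(G_j)$ — any remaining external neighbor attaches to $G_j$ only through the clique $U_j$, and such attachment is already accounted for on the other side of the clique cutset $U_j$ in the decomposition tree, so these vertices do not need to be placed into the $A_i$ partition.
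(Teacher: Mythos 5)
Your proof is correct and follows essentially the same route as the paper: the forward direction assembles Theorem~\ref{thm:structure_buoy}, Observation~\ref{obs:buoy-comparable}, Corollary~\ref{cor:buoy-clique}, Observation~\ref{obs:maximal-atom} and Lemma~\ref{lem:buoy-outside} exactly as the paper's recognition algorithm does (the paper simply defers to that algorithm's correctness for the ``only if'' part), and your backward direction matches the paper's argument via Lemmas~\ref{lem:buoy-length} and~\ref{lem:clique} together with the fact that holes cannot be split by clique cutsets while straddling pan handles are excluded by~(ii). Your closing remark about external vertices attached only to $U_j$ correctly identifies the intended reading of ``the neighborhood of $V(G_j)$'' and is consistent with the paper, whose step~(3) likewise only places vertices having neighbors in the buoy into the sets $A_i$.
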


The correctness of the algorithm proves the ``only if" part of the theorem. To see that the ``if" part holds, note that a graph $G_j$ satisfying property (i) of Theorem \ref{thm:certifying} is (pan, even hole)-free.  Since any hole of $G$ must lie in some atom $G_j$, property (ii) then guarantees that there is no straddling pan whose hole is in $G_j$.

It follows from Theorem \ref{thm:certifying} that our algorithm is certifying.  The certificate given by Theorem \ref{thm:certifying} has size $O(nm)$.

\section{Tree-width and $\chi$-boundedness}
\label{sec:tw}

In this section we bound the \emph{tree-width} of (pan,
even hole)-free graphs in terms of their clique number (see
Theorem \ref{thm:tw}). This bound immediately provides a bound on
the chromatic number (see Corollary~ \ref{cor:chi-boundedness}).

A \emph{tree decomposition} $(T,t)$ of a graph $G$ is defined to
be a tree $T$ together with a function $t: V(G) \rightarrow
V(T)$ such that:
\begin{itemize}
\item For every $v \in V(G)$, $t(v)$ induces a subtree of $T$.
\item For every $uv \in E(G)$, $t(v) \cap t(u) \neq \emptyset$.
\end{itemize}
Notice that a simple tree decomposition of any graph can be
obtained by choosing $T$ to be a single vertex and mapping every
vertex of $G$ to this vertex. The vertices of $T$ are often
treated as sets, referred to as the \emph{bags} of the tree
decomposition, and the elements of a bag $b$ are the vertices $v$
of $G$ where $b \in t(v)$. For a tree decomposition $(T,t)$ of a
graph $G$, the \emph{tree-width} of $(T,t)$, denoted $tw(T,t)$, is
the size of the largest bag of $T$ minus 1; i.e., $tw(T,t) =
\max_{b \in T} (|b| -1$). For a graph $G$, the \emph{tree-width} of $G$,
denoted $tw(G)$, is the smallest tree-width of any tree
decomposition of $G$.

We use the following three well-known  and easy results regarding
tree-width.

\begin{observation}\label{obs:tw-chi}
For a graph $G$, $\chi(G) \leq tw(G) +1$.
\end{observation}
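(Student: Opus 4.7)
The plan is to prove this inductively on the number of bags of an optimal tree decomposition $(T,t)$ of $G$, where we set $w = tw(G) = tw(T,t)$. The base case is when $T$ consists of a single bag $b$: then $V(G) \subseteq b$ and $|V(G)| \leq w+1$, so $G$ admits a trivial $(w+1)$-coloring.

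For the inductive step, the strategy is to isolate a vertex whose neighborhood is contained in a single bag of size at most $w+1$, and then remove it. Pick any leaf bag $b$ of $T$ with parent $b'$. If $b \subseteq b'$, then $b$ can be deleted from $T$ to yield a valid tree decomposition on fewer bags with width still at most $w$, and we restart. Otherwise there exists $v \in b \setminus b'$; since the bags containing $v$ form a subtree of $T$ and $v \notin b'$, the vertex $v$ appears only in $b$. By the definition of a tree decomposition, every neighbor of $v$ must lie in a common bag with $v$, hence $N(v) \subseteq b \setminus \{v\}$, giving $|N(v)| \leq w$.

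The final step is to delete $v$ from $G$ and from the bag $b$, obtaining a tree decomposition of $G - v$ of width at most $w$ on no more bags than before. Applying the inductive hypothesis yields a proper $(w+1)$-coloring of $G - v$; since $v$ has at most $w$ neighbors, at least one of the $w+1$ colors is available to extend the coloring to $v$, completing the proof.

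The proof is entirely standard and essentially has no hard step; the only mild subtlety is ensuring the existence of a vertex appearing in exactly one (leaf) bag, which is handled by the pruning argument removing redundant leaves $b \subseteq b'$. An alternative route would be to observe that any graph of tree-width $w$ is a subgraph of a chordal graph of clique number $w+1$, combined with the perfectness of chordal graphs, but the direct inductive argument above is more self-contained.
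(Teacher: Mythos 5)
Your argument is correct in substance, but it takes a genuinely different route from the paper. The paper's proof is the short ``chordal completion'' argument: take an optimal tree decomposition, complete every bag to a clique to obtain a chordal supergraph $G'$, and use perfection of chordal graphs to get $\chi(G) \leq \chi(G') = \omega(G') = tw(G)+1$. That is exactly the alternative you mention in your last sentence and dismiss as less self-contained. What you actually prove is the stronger, more elementary fact that a graph of tree-width $w$ is $w$-degenerate (it always has a vertex of degree at most $w$, obtained from a non-redundant leaf bag), from which the bound follows by greedy coloring. Your route buys self-containedness and an explicit coloring order; the paper's route buys brevity by leaning on machinery (chordality, perfection) it already has on hand elsewhere. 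One small bookkeeping point: you announce induction ``on the number of bags,'' but in the main step you delete a vertex without reducing the number of bags, so that measure does not decrease; the pruning loop reduces bags and the vertex-deletion step reduces vertices, so you should induct on $|V(G)|$ (treating the pruning of redundant leaves as a terminating preprocessing loop) or on $|V(G)| + |V(T)|$. With that adjustment the proof is complete.
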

\begin{proof}
Let $(T,t)$ be a tree decomposition of $G$ with $tw(T,t) = tw(G)$.
We create a supergraph $G'$ by completing the bags of $(T,t)$ to
cliques. This resulting graph is chordal, and thus perfect. So we now
have $\chi(G) \leq \chi(G') = \omega(G') = tw(G) + 1$.
\end{proof}

\begin{lemma}\label{lem:tw-cliquedecomp}
If $G$ contains a clique cutset $S$ where $G_1, \ldots, G_k$ are
the components of $G - S$, then:
\[
tw(G) = \max_{i} tw(G[G_i \cup S])
\]
\end{lemma}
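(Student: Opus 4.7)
The plan is to prove the equality by establishing the two inequalities separately. For $tw(G) \geq \max_i tw(G[G_i \cup S])$, I would simply invoke monotonicity of tree-width under induced subgraphs: given any tree decomposition $(T,t)$ of $G$, restricting $t$ to $V(G[G_i \cup S])$ yields a valid tree decomposition of $G[G_i \cup S]$ of width at most $tw(T,t)$, because both defining conditions (subtree property, edge covering) are inherited.

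The substantive direction is $tw(G) \leq \max_i tw(G[G_i \cup S])$, and the plan is to glue optimal tree decompositions of the parts together along a new bag representing $S$. First, choose an optimal tree decomposition $(T_i,t_i)$ of $G[G_i \cup S]$ for each $i$. The key preliminary fact is the standard Helly-type observation that every clique of a graph must lie entirely inside some bag of any tree decomposition: since the subtrees $\{t_i(s): s \in S\}$ of $T_i$ pairwise intersect (because $S$ is a clique in $G[G_i \cup S]$), they share a common node $b_i \in T_i$. I would then build a new tree $T$ by taking the disjoint union of $T_1, \ldots, T_k$, introducing a new node $b_S$, and adding spoke edges $b_S b_i$ for each $i$. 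Define $t(v) = t_i(v)$ when $v \in V(G_i) \setminus S$, and $t(s) = \{b_S\} \cup \bigcup_i t_i(s)$ when $s \in S$.

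Verification has three parts. Each $t(v)$ induces a subtree of $T$: for $v \notin S$ this is inherited from $t_i$, and for $s \in S$ the subtrees $t_i(s) \subseteq T_i$ each contain $b_i$, so gluing them through $b_S$ via the spoke edges yields a subtree. Every edge $uv \in E(G)$ is covered because $S$ is a cutset, so no edge of $G$ runs between distinct $G_i \setminus S$ and $G_j \setminus S$; hence every edge lies within some $G[G_i \cup S]$ and is already handled by $t_i$. Finally, for width, the new bag at $b_S$ has size $|S| \leq |b_i| \leq tw(T_i,t_i)+1$, while every other bag of $T$ is a bag of some $T_i$; thus the overall width is at most $\max_i tw(G[G_i \cup S])$. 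The only non-routine ingredient is the clique-in-a-bag lemma; once that is granted, the construction and checks are mechanical.
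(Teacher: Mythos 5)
Your proof is correct and complete. The paper states this lemma without proof, labelling it as one of three ``well-known and easy results'' and pointing to the survey of Reed, so there is no in-paper argument to compare against; your gluing construction --- both containments, the Helly/clique-in-a-bag step to locate a node $b_i$ of $T_i$ whose bag contains $S$, and the new hub bag $b_S$ of size $|S|\leq |b_i|$ --- is exactly the standard argument the authors are implicitly invoking.
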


\begin{lemma}\label{lem:tw-join}
Let $G$ be a graph that is the join of a graph $B$ and a clique $K$. Then
\[
tw(G) = tw(B) + |K|
\]
where $|K|$ denotes the number of vertices of $K$.
\end{lemma}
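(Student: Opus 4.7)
The plan is to prove the equality $tw(G) = tw(B) + |K|$ by establishing the two inequalities separately.

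First, for the upper bound $tw(G) \leq tw(B) + |K|$, I would take an optimal tree decomposition $(T_B, t_B)$ of $B$ of width $tw(B)$ and extend it to $G$ by setting $t(v) := V(T_B)$ for every $v \in K$ and $t(u) := t_B(u)$ for every $u \in V(B)$. The subtree condition for each $v \in K$ is trivial, edges inside $B$ are covered by $(T_B, t_B)$, and all edges incident to $K$ (inside $K$ as well as between $K$ and $B$) are covered because every bag now contains all of $K$. Each bag grows by exactly $|K|$ vertices, yielding the desired bound.

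For the lower bound $tw(G) \geq tw(B) + |K|$, I would fix an optimal tree decomposition $(T, t)$ of $G$ and exploit the Helly property for subtrees of a tree: any pairwise-intersecting family of subtrees of a tree admits a common vertex. Since $K$ is a clique, the subtrees $\{t(v) : v \in K\}$ pairwise intersect, so $R := \bigcap_{v \in K} t(v)$ is a non-empty subtree of $T$, and by construction every bag indexed by a node of $R$ contains all of $K$. For each $u \in V(B)$, the family $\{t(u)\} \cup \{t(v) : v \in K\}$ also pairwise intersects (as $u$ is adjacent to every $v \in K$ and $K$ is a clique), so Helly gives $t(u) \cap R \neq \emptyset$; I then set $t_B(u) := t(u) \cap R$.

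The remaining step is to verify that $(R, t_B)$ is a tree decomposition of $B$: the subtree condition holds because each $t_B(u)$ is an intersection of subtrees of $T$, and the edge-covering condition for each $uu' \in E(B)$ follows by one more application of Helly to the family $\{t(u), t(u')\} \cup \{t(v) : v \in K\}$, whose members pairwise intersect because $\{u, u'\} \cup K$ is a clique in $G$. Finally, each bag of $(R, t_B)$ is obtained from the corresponding bag of $(T, t)$ by deleting the $|K|$ vertices of $K$, so $tw(B) + 1 \leq (tw(G) + 1) - |K|$. The main subtlety is recognizing that the Helly property must be invoked three times (once to produce $R$, once per vertex $u$ to ensure $t_B(u) \neq \emptyset$, and once per edge of $B$ to verify edge-covering inside $R$); beyond that the argument is routine bookkeeping.
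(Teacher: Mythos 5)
Your proof is correct. The paper states this lemma without proof, listing it among ``well-known and easy results'' on tree-width, so there is no in-paper argument to compare against; your write-up supplies a complete one. The upper bound (padding every bag of an optimal decomposition of $B$ with all of $K$) is routine and correctly verified. The lower bound via the Helly property for subtrees of a tree is sound: the three invocations are each legitimate because the relevant vertex sets ($K$ itself, $\{u\}\cup K$, and $\{u,u'\}\cup K$) are cliques of $G$ by the join, so the corresponding subtrees pairwise intersect; the resulting restriction $(R,t_B)$ is a tree decomposition of $B$ whose bags are exactly the original bags on $R$ with the $|K|$ vertices of $K$ deleted, giving $tw(B)\leq tw(G)-|K|$. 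An alternative, equally standard route to the lower bound avoids Helly altogether: show that adding one universal vertex to a nonempty graph increases the tree-width by exactly one, and induct on $|K|$; your direct argument is no longer and is arguably cleaner. The only degenerate case worth a remark is $B=\emptyset$, where the identity reduces to $tw(K)=|K|-1$ under the convention $tw(\emptyset)=-1$; in the paper's application $B$ is a buoy and hence nonempty, so this does not matter.
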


For further information on tree-width, see \cite{Reed}.

\begin{theorem}\label{thm:tw}
A (pan, even hole)-free graph $G$ has $tw(G)+1 \leq
1.5 \, \omega(G)$.
\end{theorem}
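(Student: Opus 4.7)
The plan is to reduce, via the structure theorem, to the case of a single (pan, even hole)-free buoy $B$ and to build an explicit path decomposition whose largest bag has size at most $1.5\,\omega(B)$. Since any (pan, even hole)-free graph belongs to $\mathcal{C}$, iterating Lemma~\ref{lem:tw-cliquedecomp} gives $tw(G)=\max_A tw(A)$ over the atoms $A$ of the clique cutset decomposition. By Theorem~\ref{thm:structure} together with Theorem~\ref{thm:buoy-unit-circ}, each atom is a clique (for which $tw+1=\omega$), a buoy, or the join of a buoy $B$ with a clique $K$; the join case reduces to the buoy case via Lemma~\ref{lem:tw-join} together with the identity $\omega(G)=\omega(B)+|K|$.

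For a buoy with bags $B_0,\dots,B_{\ell-1}$ ($\ell$ odd since holes in an even-hole-free graph are odd), Corollary~\ref{cor:buoy-clique} ensures every index $i$ is incident to a \emph{good} edge $(B_j,B_{j+1})$ whose bag-union is a clique of size at most $\omega$. Consequently some $|B_{i_0}|\leq\omega/2$: otherwise every consecutive pair would exceed $\omega$ in total size and fail to be a clique. The decomposition will traverse the cycle of bags once from $B_{i_0}$ around and back, inserting $B_{i_0}$ into every bag---this breaks the cyclic structure into a path and automatically satisfies the subtree condition for vertices of $B_{i_0}$. For each good edge $(B_k,B_{k+1})$, the segment is the single bag $B_k\cup B_{k+1}\cup B_{i_0}$ of size at most $\omega+|B_{i_0}|\leq 1.5\,\omega$.

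Bad edges need extra work because $|B_k|+|B_{k+1}|$ can exceed $1.5\,\omega$, making the single-bag approach fail. For such an edge I replace the single bag by a chain of bicliques. Observation~\ref{obs:buoy-square} provides linear orders $u_1\succeq\cdots\succeq u_s$ on $B_k$ (by $B_{k+1}$-neighborhood inclusion) and $v_1\succeq\cdots\succeq v_t$ on $B_{k+1}$ (by $B_k$-neighborhood inclusion). A short argument based on Observation~\ref{obs:buoy-comparable} shows that the dominant vertex of each bag is complete to the adjacent bag, so $u_pv_q\in E(G)$ iff $p\leq g(q)$ for some non-increasing $g$ with $g(1)=s$. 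The sets $Z^{(k)}_q:=\{u_1,\dots,u_{g(q)}\}\cup\{v_1,\dots,v_q\}$ for $q=1,\dots,t$ are each a clique in $G$ of size $g(q)+q\leq\omega$ and together cover every edge between $B_k$ and $B_{k+1}$; augmenting each by $B_{i_0}$ keeps the bag size at most $1.5\,\omega$. Moreover $Z^{(k)}_1\supseteq B_k$ and $Z^{(k)}_t\supseteq B_{k+1}$, so each chain joins smoothly to its neighboring segments.

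The hard part is verifying the subtree condition across consecutive bad-edge chains, and the essential ingredient is Observation~\ref{obs:buoy-pan}: the neighborhood-inclusion orders on a single bag induced by its two neighboring bags are reverses of each other. This ensures that each vertex of $B_k$ appears in a contiguous tail of the $(B_{k-1},B_k)$-chain followed by a contiguous head of the $(B_k,B_{k+1})$-chain, fitting together into a single contiguous subpath in the overall decomposition. Combined with the size bounds above and the choice $|B_{i_0}|\leq\omega/2$ (together with a routine check of the boundary cases $k\in\{i_0-1,i_0\}$, where $B_{i_0}$ partially overlaps the chain and the bound only gets sharper), this yields a path decomposition of width at most $1.5\,\omega-1$, proving $tw(B)+1\leq 1.5\,\omega(B)$.
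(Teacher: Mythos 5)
Your proof is correct and follows essentially the same route as the paper: reduce to a single buoy via the clique-cutset and join lemmas, pick a smallest bag $B_{i_0}$ (of size at most $\omega/2$, using Corollary~\ref{cor:buoy-clique}), unroll the cyclic bag structure into a path decomposition whose bags are cliques of size at most $\omega$, and add $B_{i_0}$ to every bag. The only difference is presentational: the paper obtains that path decomposition directly by splitting the unit circular-arc representation of Theorem~\ref{thm:buoy-unit-circ} at the point $(i_0)$, whereas you rebuild the same interval (staircase) structure by hand from Observations~\ref{obs:buoy-square} and~\ref{obs:buoy-pan}.
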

\begin{proof}
By Lemmas~\ref{lem:tw-cliquedecomp} and~\ref{lem:tw-join} and
Theorem~\ref{thm:structure_buoy}, we only need  show that $tw(B)
\leq 1.5 \, \omega(B)$ for any buoy in $G$. Recall that, for every bag
$B_i$ of $B$, either $B_i \cup B_{i+1}$ is a clique or $B_{i-1} \cup B_i$
is a clique. In particular, we can build a tree representation
$(T,t)$ of $B$ where $T$ is path using the unit circular-arc
construction from the proof of Theorem \ref{thm:buoy-unit-circ}.
To do this we choose the smallest $B_i$, and ``split'' the unit
circular-arc representation at the point $(i)$ and ``unroll'' it
onto a line. We now have a path where every point from our unit
circular-arc representation is a bag, and the extreme bags are
copies of the bag corresponding to the point $(i)$. Thus, by
adding the vertices of $B_i$ to every bag on this path, we obtain
a tree representation of $B$. It is easy to see that the largest
bag in this representation has size $\omega(G)+|B_i| \leq 1.5
\omega(G)$.
\end{proof}

Theorem \ref{thm:tw} is tight since odd cycles  have tree-width two
and clique number two. Similarly, by making a buoy with an odd
number of bags such that each bag has $k$ vertices and $B_i \cup
B_{i+1}$ is a clique for every $i$, we have a graph whose
tree-width is $3k-1$ and whose clique number is $2k$. (See
Figure~\ref{fig:non-beta-perfect} for an example with $k=2$.)
Moreover, by Observation \ref{obs:tw-chi}, we obtain the following
corollary.

\begin{corollary}\label{cor:chi-boundedness}
A (pan, even hole)-free graph $G$ has $\chi(G) \leq 1.5\omega(G)$.
\hfill $\Box$
\end{corollary}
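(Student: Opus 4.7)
The plan is to derive the corollary by chaining together the two inequalities that have just been established immediately above it. Specifically, Theorem~\ref{thm:tw} gives the structural bound $tw(G)+1 \leq 1.5\,\omega(G)$ for any (pan, even hole)-free graph $G$, and Observation~\ref{obs:tw-chi} (proved via a standard argument that completes each bag of an optimal tree decomposition to a clique, yielding a chordal, hence perfect, supergraph whose chromatic and clique numbers coincide with its tree-width plus one) gives the general bound $\chi(G) \leq tw(G)+1$. Concatenating these two inequalities yields $\chi(G) \leq tw(G)+1 \leq 1.5\,\omega(G)$, which is exactly the claim.

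So the proof amounts to a one-line invocation of the two preceding results, and there is no real obstacle to overcome; all of the work has been done in proving Theorem~\ref{thm:tw}. In particular, no induction on the clique-cutset decomposition, and no re-examination of the unit circular-arc structure of atoms, is needed at this point, because Theorem~\ref{thm:tw} has already absorbed those ingredients (via Lemmas~\ref{lem:tw-cliquedecomp} and~\ref{lem:tw-join} and the explicit path-shaped tree decomposition built from the unit circular-arc representation of a buoy).

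The only stylistic choice to make is whether to restate the tightness remark that follows Theorem~\ref{thm:tw}; since the corollary only asserts an upper bound, I would keep the proof to the single composed inequality above and let the earlier discussion of tightness (odd cycles, and the buoy with $k$ vertices per bag giving $\chi = 2k$, $\omega = 2k$ but $tw+1 = 3k$) stand on its own as evidence that the bound on $tw(G)+1$ cannot be improved, even if the chromatic bound might admit some slack.
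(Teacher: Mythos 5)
Your proposal is correct and matches the paper exactly: the corollary is obtained by composing Observation~\ref{obs:tw-chi} ($\chi(G) \leq tw(G)+1$) with Theorem~\ref{thm:tw} ($tw(G)+1 \leq 1.5\,\omega(G)$), which is precisely why the paper states it with no further proof. Nothing more is needed.
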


\section{Conclusion and open problems}\label{sec:conclusions}
In this paper, we studied the structure of (claw, even hole)-free
graphs. It turned out that our results apply to the larger class of
(pan, even hole)-free graphs. From the structure results, we
obtained fast recognition and coloring algorithms for (pan, even
hole)-free graphs. The complexity of coloring even-hole-free
graphs is unknown.  It follows from Corollary~1 in
\cite{KraKra2001} that coloring odd-hole-free graphs is
NP-Complete. Thus, the following problem, analogous to our result,
is of interest to us.
\begin{problem}\label{pro:odd-hole}
What is the complexity of coloring (pan, odd hole)-free graphs?
\end{problem}
Observation~\ref{obs:tw-chi} shows the tree-width of a (pan, even
hole)-free graphs is bounded by a function in the clique number.
It is conceivable that a more general statement holds.
\begin{problem}\label{pro:tree-width}
Is the tree-width of an even-hole-free graph bounded by a function
of its clique number?
\end{problem}

The {\it clique-width} of a graph $G$, denoted by $cw(G)$, is the
minimum number of labels needed to construct $G$ using the
following four operations:
\begin{description}
 \item[(i)] Creation of a new vertex $v$ with label $i$.
 \item[(ii)] Disjoint union of two labeled graphs.
 \item[(iii)] Joining each vertex with
label $i$ to each vertex with label $j$.
 \item[(iv)] Changing label $i$ to $j$.

\end{description}
It is known \cite{CorRot2005} that for any graph $G$, $cw(G) \leq
3 \cdot 2^{tw(G)-1}$ and that \cite{CouOla2000} $cw(\overline{G})
\leq 2 \cdot cw(G)$ where $\overline{G}$ is the complement of $G$.

In \cite{CouMak2000}, it is shown that every problem definable in
a certain kind of Monadic Second Order Logic, called
LinEMSOL($\tau_1, L$) is linear-time
solvable on any graph class with bounded clique-width for which a
$k$-expression can be constructed in linear time. In
\cite{CouMak2000}, it is mentioned that, roughly speaking,
MSOL($\tau_1$) is Monadic Second Order Logic with quantification
over subsets of vertices but not of edges; MSOL($\tau_1, L$) is
the restriction of MSOL($\tau_1$) with the addition of labels
added to the vertices, and LinEMSOL($\tau_1, L$) is the
restriction of MSOL($\tau_1, L$) which allows search for sets
of vertices which are optimal with respect to some linear
evaluation functions. The problems Vertex Cover, Maximum Weight
Stable Set, Maximum Weight Clique, Steiner Tree and Domination are
examples of LinEMSOL($\tau_1, L$) definable problems. Furthermore,
from the results of \cite{Oum2008} and \cite{KobRot2003}, it
follows that the chromatic number of any class of graphs with
bounded clique-width can be computed in polynomial time.

In \cite{MakRot1999}, it is shown that split graphs have unbounded
clique-width. It follows that even-hole-free graphs have unbounded
clique-width. However, it might be possible that the clique-width
of an even-hole-free graph is bounded by a function of its clique
number. To conclude our paper, we pose this as an open problem.
\begin{problem}\label{pro:clique-width}
Is the clique-width of an even-hole-free graph bounded by a
function of its clique number?
\end{problem}

\bibliographystyle{plain}

\end{document}